\def\@makefnmark}
\def\@makefnmark{}\def\useless@macro}
\newcommand{\vw}{{\bm w}}
\newcommand{\vx}{{\bm x}}
\newcommand{\vy}{{\bm y}}
\newcommand{\vz}{{\bm z}}
\newcommand{\vr}{{\bm r}}
\newcommand{\vs}{{\bm s}}
\newcommand{\vv}{{\bm v}}
\newcommand{\vq}{{\bm q}}
\newcommand{\vA}{{\bm A}}
\newcommand{\vK}{{\bm K}}
\newtheorem{theorem}{Theorem}
\newtheorem{assumption}{Assumption}
\newtheorem{lemma}{Lemma}
\newtheorem{definition}{Definition}
\newtheorem{corollary}{Corollary}
\DeclareMathOperator{\erfcx}{erfcx}
\newenvironment{proof}[1][Proof]{\begin{trivlist}
\item[\hskip \labelsep {\bfseries #1}]}{\end{trivlist}}
\algnewcommand{\IIf}[1]{\State\algorithmicif\ #1\ \algorithmicthen}
\algnewcommand{\EndIIf}{\unskip\ \algorithmicend\ \algorithmicif}
\begin{document}
%
\title{Sparse Signal Recovery using Generalized Approximate Message Passing with Built-in Parameter Estimation}
%
%
%

\author{Shuai Huang, ~\IEEEmembership{Student Member,~IEEE,} and Trac D. Tran, ~\IEEEmembership{Fellow,~IEEE}
\thanks{This work is supported by the National Science Foundation under grants NSF-CCF-1117545, NSF-CCF-1422995 and NSF-EECS-1443936.}
\thanks{The authors are with the Department of Electrical and Computer Engineering, Johns Hopkins University, Baltimore, MD, 21218 USA (email: shuaihuang@jhu.edu; trac@jhu.edu).}}

\maketitle

\begin{abstract}
The generalized approximate message passing (GAMP) algorithm under the Bayesian setting shows advantage in recovering under-sampled sparse signals from corrupted observations. Compared to conventional convex optimization methods, it has a much lower complexity and is computationally tractable. In the GAMP framework, the sparse signal and the observation are viewed to be generated according to some pre-specified probability distributions in the input and output channels. However, the parameters of the distributions are usually unknown in practice. In this paper, we propose an extended GAMP algorithm with built-in parameter estimation (PE-GAMP) and present its empirical convergence analysis. PE-GAMP treats the parameters as unknown random variables with simple priors and jointly estimates them with the sparse signals. Compared with Expectation Maximization (EM) based parameter estimation methods, the proposed PE-GAMP could draw information from the prior distributions of the parameters to perform parameter estimation. It is also more robust and much simpler, which enables us to consider more complex signal distributions apart from the usual Bernoulli-Gaussian (BGm) mixture distribution. Specifically, the formulations of Bernoulli-Exponential mixture (BEm) distribution and Laplace distribution are given in this paper. Simulated noiseless sparse signal recovery experiments demonstrate that the performance of the proposed PE-GAMP matches the oracle GAMP algorithm that knows the true parameter values. When noise is present, both the simulated experiments and the real image recovery experiments show that the proposed PE-GAMP is still able to maintain its robustness and outperform EM based parameter estimation method when the sampling ratio is small. Additionally, using the BEm formulation of the proposed PE-GAMP, we can successfully perform non-negative sparse coding of local image patches and provide useful features for the image classification task.
\end{abstract}

\begin{IEEEkeywords}
Sparse signal recovery, approximate message passing, parameter estimation, belief propagation, compressive sensing, non-negative sparse coding, image recovery, image classification. 
\end{IEEEkeywords}

%
\IEEEpeerreviewmaketitle

\section{Introduction}
Sparse signal recovery (SSR) is the key topic in Compressive Sensing (CS) \cite{Decode05,RUP06,CS06,SRRP06}, it lays the foundation for applications such as dictionary learning \cite{DL06}, sparse representation-based classification \cite{SRC09}, etc. Specifically, SSR tries to recover the sparse signal $\vx\in\mathbb{R}^N$ given a $M\times N$ sensing matrix $\vA$ and a measurement vector $\vy=\vA\vx+\vw\in\mathbb{R}^M$, where $M<N$ and $\vw\in\mathbb{R}^M$ is the unknown noise introduced in this process. Although the problem itself is ill-posed, perfect recovery is still possible provided that $\vx$ is sufficiently sparse and $\vA$ is incoherent enough \cite{Decode05}. Lasso \cite{Lasso94}, a.k.a $l_1$-minimization, is one of most popular approaches proposed to solve this problem:
\begin{align}
\label{eq:lasso}
\arg\min_{\vx}\quad\|\vy-\vA\vx\|_2^2+\gamma\|\vx\|_1\,,
\end{align}
where $\|\vy-\vA\vx\|_2^2$ is the data-fidelity term, $\|\vx\|_1$ is the sparsity-promoting term, and $\gamma$ balances the trade-off between them.

From a probabilistic view, Lasso is equivalent to a maximum likelihood (ML) estimation of the signal $\vx$ under the assumption that the entries of $\vx$ are i.i.d. distributed following the Laplace distribution $p(x_j)\propto\exp(-\lambda|x_j|)$, and those of $\vw$ are i.i.d. distributed following the Gaussian distribution $p(w_i)\propto\exp\left(-\sfrac{w_i^2}{2\theta}\right)$. Let $\vz=\vA\vx$, we have $p(y_i|\vx)\propto\exp\left(-\sfrac{(y_i-z_i)^2}{2\theta}\right)$. The ML estimation is then $\arg\max_{\vx}p(\vx,\vy)$, which is essentially the same as (\ref{eq:lasso}). In general, SSR can be described by the Bayesian model from \cite{GAMP11}, as is shown in Fig. \ref{fig:bayesian_model}.
\begin{figure}[tbp]
\centering
\includegraphics[width=3in]{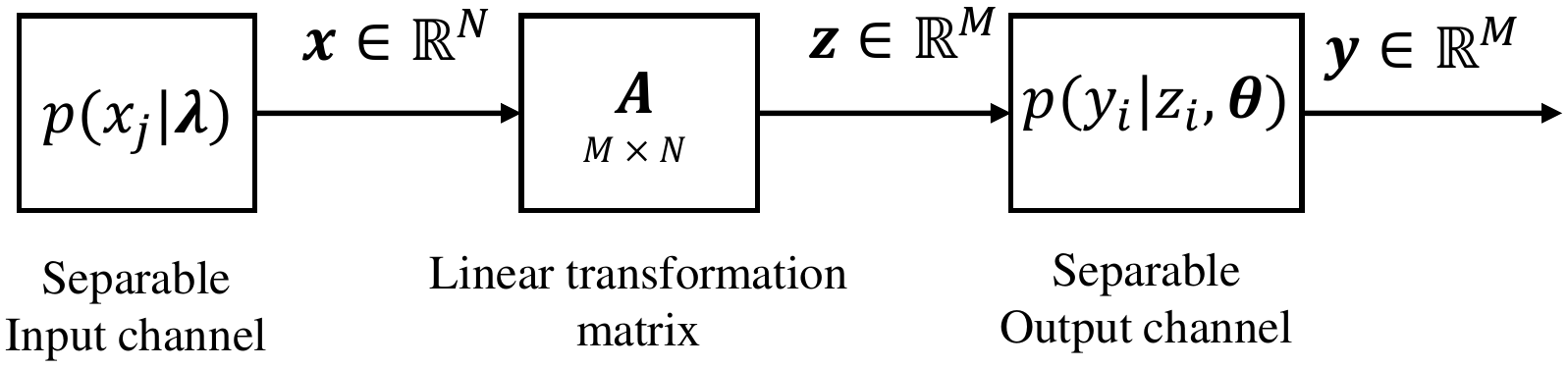}
\caption{A probabilistic view of the sparse signal recovery \cite{GAMP11}: The signal $\vx$ is estimated given the output vector $\vy$, the channel transition probability functions $p(x_j|\boldsymbol\lambda)$, $p(y_i|z_i,\boldsymbol\theta)$ and the transformation matrix $\vA$. $\{\boldsymbol\lambda,\boldsymbol\theta\}$ denote the parameters of the probability models and are usually unknown.}
\label{fig:bayesian_model}
\end{figure}

Under the Bayesian setting it is possible to design efficient iterative algorithms to compute either the maximum a posterior (MAP) or minimum mean square error (MMSE) estimate of the signal $\vx$. Most notable among them are the ``message-passing'' based algorithms \cite{AMP09,AMP10,AMP11}. They perform probabilistic inferences on the corresponding factor graph using Gaussian and/or quadratic approximations of loopy belief propagation (loopy BP), a.k.a. message passing \cite{Graph08}. Based on different inference tasks, loopy BP has two variants: \emph{sum-product} message passing for the MMSE estimate of $\vx$ and \emph{max-sum} message passing for the MAP estimate of $\vx$. \cite{AMP09,AMP10,AMP11} proposed the \emph{approximate} message passing (AMP) algorithm based on a quadratic approximation of \emph{max-sum} message passing. It has low complexity and can be used to find solutions of Lasso accurately. In fact, AMP is able to match the performance of theoretical Lasso in noiseless signal recovery experiments \cite{AMP09}. The asymptotic behavior of the variables in the AMP algorithm can be concisely described by a set of state evolution equations, and their empirical convergences are guaranteed in the large system limit for $\vA$ with i.i.d Gaussian entries \cite{AMP11}. 

\subsection{Prior Work}
Various methods based on the above AMP framework has been proposed to perform sparse signal recovery \cite{SURE_AMP15,D_AMP16,GAMP11}. \cite{SURE_AMP15} treats each AMP iteration as a signal denoising process and introduces the denoiser constructed from the Stein's unbiased risk estimate (SURE) into the AMP algorithm (SURE-AMP). Three Different kernel functions are also proposed to get linear parameterization of the SURE based denoiser, which serves as the objective function to be minimized. \cite{D_AMP16} also provides an extension to the AMP algorithm by including a denoiser within the AMP iterations (D-AMP) and demonstrates its effectiveness in recovering natural images.

In \cite{GAMP11}, a generalized version of the AMP algorithm (GAMP) is proposed to work with essentially arbitrary input and output channel distributions. It can approximate both the \emph{sum-product} and \emph{max-sum} message passings using only scalar estimations and linear transforms. Similar to AMP, GAMP can be described by state evolution equations and its empirical convergence can also be shown using an extension of the analysis in \cite{AMP11}. The parameters $\{\boldsymbol\lambda, \boldsymbol\theta\}$ in the input and output channels are usually unknown, and need to be decided for the AMP/GAMP algorithm. In this paper, we shall propose an extension to the GAMP framework by treating the parameters as unknown random variables with simple prior distributions and estimating them jointly with the signal $\vx$.

The Expectation-Maximization (EM) \cite{EM77} algorithm has been proposed to perform parameter estimation for the GAMP algorithm in \cite{EMBGAMP11,EMGMAMP13, AMP_CPE14,EM_GAMP_Florent12}. Specifically, EM treats $\vx$ as the hidden variable and tries to find the parameters that maximize $p(\vy|\boldsymbol\lambda,\boldsymbol\theta)$ by maximizing $\mathbb{E}[\log p(\vx,\vw;\boldsymbol\lambda,\boldsymbol\theta)|\vy,\hat{\boldsymbol\lambda}^{(t)},\hat{\boldsymbol\theta}^{(t)}]$ iteratively. \cite{EMGMAMP13} assumes the signal is generated according to i.i.d Bernoulli-Gaussian mixture (BGm) distribution and shows that EM-BGm-GAMP is still able to recover the sparse signals successfully even if the assumption is not satisfied. In \cite{AMP_CPE14}, a more generalized EM based parameter estimation method is proposed; the complete state evolution analysis and empirical convergence proofs of said algorithm are also presented as the theoretical work. While \cite{EM_GAMP_Florent12} assumes the signal to be i.i.d. Bernoulli-Gaussian distributed, it also gives the asymptotic state evolution and replica analysis for the EM parameter estimation with respect to their derivation of the GAMP algorithm.

\subsection{Main Contributions}
By treating the parameters $\boldsymbol\lambda,\boldsymbol\theta$ as random variables with simple priors, we can integrate the parameter estimation and signal recovery under the same framework: PE-GAMP. This enables us to compute the posterior distributions of the parameters directly from loopy belief propagation.
\begin{itemize}
\item {\bfseries \emph{Sum-product} message passing:} The ``marginal'' posterior distributions $\left\{p(\boldsymbol\lambda|\vy), p(\boldsymbol\theta|\vy)\right\}$ can be obtained.
\item {\bfseries \emph{Max-sum} message passing:} The ``joint'' posterior distribution $\left\{p(\boldsymbol\lambda,\tilde{\vx}|\vy), p(\boldsymbol\theta,\tilde{\vx}|\vy)\right\}$ can be obtained, where $\tilde{\vx}$ are the values that maximizes the joint posterior distribution. 
\end{itemize}

For the \emph{sum-product} message passing, if the input and output channel distributions $p(\vx|\boldsymbol\lambda),p(\vy|\vz,\boldsymbol\theta)$ are simple enough so that the integration involved in the message passing process can be computed, the parameter estimation will be automatically taken care of and no special treatments are needed. However, in practice the channel distributions are usually complicated, and the integration usually doesn't have closed-form solutions. In this case, we can compute the MMSE or MAP estimates of the parameters $\boldsymbol\lambda,\boldsymbol\theta$ using Dirac delta approximations of the posterior distributions and use them to simplify the message passing process. For the \emph{max-sum} message passing, the maximization problem involving multiple variables can be efficiently solved by using the approximate maximizing parameters. As can be seen from the Appendix \ref{app:sp_message_passing},\ref{app:ms_message_passing}, the MMSE scalar estimates of the parameters involve integration and are often quite difficult to compute; MAP scalar estimates of the parameters are thus preferred since they are much easier to compute and there are many maximization methods we can choose from. 

Following the line of work on the state evolution analysis of the AMP related algorithms in \cite{AMP11,GAMP11,AMP_CPE14}, we can write the state evolution equations for the proposed PE-GAMP and prove the empirical convergences of the involved variables. 

Previous EM based parameter estimation methods can only be used with \emph{sum-product} message passing, Since it relies on the marginal probability $p(\vx|\vy,\hat{\boldsymbol\lambda}^{(t)},\hat{\boldsymbol\theta}^{(t)})$ to compute the expectation. While the proposed PE-GAMP could be applied to both \emph{sum-product} and \emph{max-sum} message passings, which gives MMSE and MAP estimations of the signal respectively. 

Additionally, the proposed PE-GAMP could draw information from the prior distributions of the parameters to perform parameter estimation. It is also more robust and much simpler, which enables us to consider more complex signal distributions apart from the usual Bernoulli-Gaussian mixture distribution. Specifically, in Section \ref{sec:numerical_results} and Appendix \ref{app:map_pe}, input channels with three different distributions are considered: Bernoulli-Gaussian mixture distribution, Bernoulli-Exponential mixture distribution and Laplace distribution; while the output channel assumes the noise is additive white Gaussian noise. Both simulated and real experiments demonstrate the advantage the proposed PE-GAMP has over the previous EM based parameter estimation methods in both robustness and performance when the sampling ratio is small. With more signal distributions incorporated to the framework, the PE-GAMP also enjoys wider applicabilities and provides more possibilities for the sparse signal recovery task.

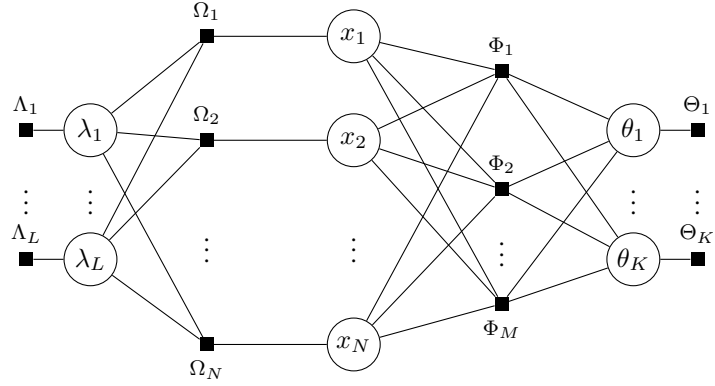
\begin{figure*}[tbp]
\begin{center}
\begin{tabular}{cc}
%
%
%
%

\begin{tikzpicture}


  \node[latent] (lambda_1) {$\lambda_1$};
  \factor[left=of lambda_1] {Lambda_1-f} {above:$\Lambda_1$} {lambda_1} {};
  \node[latent, below = 1 of lambda_1] (lambda_L) {$\lambda_L$};
  \factor[left=of lambda_L] {Lambda_L-f} {above:$\Lambda_L$} {lambda_L} {};
  
  \path (lambda_1) -- node[auto=false]{\vdots} (lambda_L);
  \path (Lambda_1-f) -- node[auto=false]{\vdots} (Lambda_L-f);

  \node[latent, above=0.533 of lambda_1, xshift=3.5cm] (x_1) {$x_1$};
  \node[latent, below=0.666 of x_1] (x_2) {$x_2$};
  \node[latent, below=2 of x_2] (x_N) {$x_N$};
  
  \path (x_2) -- node[auto=false]{\vdots} (x_N);
  
  \factor[left=0 of x_1, xshift=-1.5cm] {Omega_1-f} {above:$\Omega_1$} {x_1, lambda_1, lambda_L}{};
  \factor[left=0 of x_2, xshift=-1.5cm] {Omega_2-f} {above:$\Omega_2$} {x_2, lambda_1, lambda_L}{};
  \factor[left=0 of x_N, xshift=-1.5cm] {Omega_N-f} {below:$\Omega_N$} {x_N, lambda_1, lambda_L}{};
  \path(Omega_2-f) -- node[auto=false]{\vdots} (Omega_N-f);
  
  \node[latent, right=0 of lambda_1, xshift=6.5cm] (theta_1) {$\theta_1$};
  \node[latent, right=0 of lambda_L, xshift=6.5cm] (theta_K) {$\theta_K$};
  \path(theta_1) -- node[auto=false]{\vdots} (theta_K);
  
  \factor[right=of theta_1] {Theta_1-f} {above:$\Theta_1$}{theta_1}{};
  \factor[right=of theta_K] {Theta_K-f} {above:$\Theta_K$}{theta_K}{};
  \path(Theta_1-f) -- node[auto=false]{\vdots} (Theta_K-f);

  \factor[above=.333 of theta_1, xshift=-1.75cm] {Phi_1-f} {above:$\Phi_1$} {theta_1, theta_K, x_1, x_2, x_N} {};
  \factor[below=.333 of theta_1, xshift=-1.75cm] {Phi_2-f} {above:$\Phi_2$} {theta_1, theta_K, x_1, x_2, x_N} {};
  \factor[below=1.333 of Phi_2-f] {Phi_M-f} {below:$\Phi_M$} {theta_1, theta_K, x_1, x_2, x_N} {};
  
  \path(Phi_2-f) -- node[auto=false]{\vdots} (Phi_M-f);









\end{tikzpicture}

\end{tabular}
\end{center}
\caption{The factor graph for the proposed PE-GAMP. ``$\blacksquare$'' represents the factor node, and ``$\bigcirc$'' represents the variable node. $\boldsymbol\lambda=\{\lambda_1,\cdots,\lambda_L\}$ and $\boldsymbol\theta=\{\theta_1,\cdots,\theta_K\}$ are the parameters. $\vx=[x_1,\cdots,x_N]^\textrm{T}$ is the sparse signal.}
\label{fig:factor_graph_pegamp}
\end{figure*}

\section{GAMP with Built-in Parameter Estimation}
The generalized factor graph for the proposed PE-GAMP framework that treats the parameters as random variables is shown in Fig. \ref{fig:factor_graph_pegamp}. Inference tasks performed on the factor graph rely on the ``messages'' passed among connected nodes of the graph. Here we adopt the same notations used by \cite{GAMP11}. Take the messages being passed between the factor node $\Phi_m$ and the variable node $x_n$ for example, $\Delta_{\Phi_m\rightarrow x_n}$ is the message from $\Phi_m$ to $x_n$, and $\Delta_{\Phi_m\leftarrow x_n}$ is the message from $x_n$ to $\Phi_m$. Both $\Delta_{\Phi_m\rightarrow x_n}$ and $\Delta_{\Phi_m\leftarrow x_n}$ can be viewed as functions of $x_n$. In the following section \ref{sec:sum_product} and \ref{sec:max_sum}, we give the messages being passed on the generalized factor graph in $\log$ domain for the \emph{sum-product} message passing algorithm and the \emph{max-sum} message passing algorithm respectively.

\subsection{Sum-product Message Passing}
\label{sec:sum_product}
\emph{Sum-product} message passing is used to compute the marginal distributions of the random variables in the graph: $p(\vx|\vy), p(\boldsymbol\lambda|\vy), p(\boldsymbol\theta|\vy)$. In the following, we first present the \emph{sum-product} message updates equations in the $(t+1)$-th iteration.
\begin{subequations}
\label{eq:sp_fv_signal}
\begin{align}
\label{eq:sp_fv_phi_x}
\begin{split}
\Delta^{(t+1)}_{\Phi_m\rightarrow x_n}&=\textrm{const}+\log\int_{\vx\backslash x_n,\boldsymbol\theta}\left[\vphantom{\textstyle\sum_{j\neq n}\Delta^{(t)}_{\Phi_m\leftarrow x_j}}\Phi_m\left(y_m, \vx, \boldsymbol\theta\right)\right.\\
&\quad\quad\left.\times\exp\left(\textstyle\sum_{j\neq n}\Delta^{(t)}_{\Phi_m\leftarrow x_j}+\textstyle\sum_v\Delta^{(t)}_{\Phi_m\leftarrow \theta_v}\right)\right]
\end{split}\\
\label{eq:sp_vf_x_phi}
\Delta^{(t+1)}_{\Phi_m\leftarrow x_n}&=\textrm{const}+\Delta^{(t+1)}_{\Omega_n\rightarrow x_n}+\textstyle\sum_{i\neq m}\Delta^{(t+1)}_{\Phi_i\rightarrow x_n}\\
\label{eq:sp_fv_omega_x}
\Delta^{(t+1)}_{\Omega_n\rightarrow x_n}&=\textrm{const}+\log\int_{\boldsymbol\lambda}\Omega_n(x_n,\boldsymbol\lambda)\cdot\exp\left(\textstyle\sum_u\Delta^{(t)}_{\Omega_n\leftarrow\lambda_u}\right)\\
\label{eq:sp_vf_x_omega}
\Delta^{(t+1)}_{\Omega_n\leftarrow x_n}&=\textrm{const}+\textstyle\sum_i\Delta^{(t+1)}_{\Phi_i\rightarrow x_n}\,,
\end{align}
\end{subequations}
where $\vx\backslash x_n$ denotes the sequence obtained by removing $x_n$ from $\vx$, $\Phi_m(y_m,\vx,\boldsymbol\theta)=p(y_m|\vx,\boldsymbol\theta)$ and $\Omega_n(x_n,\boldsymbol\lambda)= p(x_n|\boldsymbol\lambda)$. Similarly, we can write the message updates involving the variable nodes $\lambda_l,\theta_k$ as follows:
\begin{subequations}
\label{eq:sp_fv_parameter}
\begin{align}
\label{eq:sp_fv_omega_lambda}
\begin{split}
\Delta^{(t+1)}_{\Omega_n\rightarrow \lambda_l}&=\textrm{const}+\log\int_{x_n,\boldsymbol\lambda\backslash\lambda_l}\left[\vphantom{\textstyle\sum_{j\neq n}\Delta^{(t)}_{\Phi_m\leftarrow x_j}}\Omega_n(x_n,\boldsymbol\lambda)\right.\\
&\quad\quad\left.\times\exp\left(\Delta^{(t+1)}_{\Omega_n\leftarrow x_n}+\textstyle\sum_{u\neq l}\Delta^{(t)}_{\Omega_n\leftarrow\lambda_u}\right)\right]
\end{split}\\
\label{eq:sp_vf_lambda_omega}
\Delta^{(t+1)}_{\Omega_n\leftarrow\lambda_l}&=\textrm{const}+\textstyle\sum_{j\neq n}\Delta^{(t+1)}_{\Omega_j\rightarrow\lambda_l}+\log p(\lambda_l)\\
\label{eq:sp_fv_phi_theta}
\begin{split}
\Delta^{(t+1)}_{\Phi_m\rightarrow\theta_k}&=\textrm{const}+\log\int_{\boldsymbol\theta\backslash\theta_k,\vx}\left[\vphantom{\textstyle\sum_{j\neq n}\Delta^{(t)}_{\Phi_m\leftarrow x_j}}\Phi_m\left(y_m, \vx, \boldsymbol\theta\right)\right.\\
&\quad\quad\left.\times\exp\left(\textstyle\sum_j\Delta^{(t)}_{\Phi_m\leftarrow x_j}+\textstyle\sum_{v\neq k}\Delta^{(t)}_{\Phi_m\leftarrow\theta_v}\right)\right]
\end{split}\\
\label{eq:sp_vf_theta_phi}
\Delta^{(t+1)}_{\Phi_m\leftarrow\theta_k}&=\textrm{const}+\textstyle\sum_{i\neq m}\Delta^{(t+1)}_{\Phi_i\rightarrow\theta_k}+\log p(\theta_k)\,,
\end{align}
\end{subequations}
where $p(\lambda_l),p(\theta_k)$ are the pre-specified priors of the parameters. The approximated implementations of \emph{sum-product} message passing in terms of (\ref{eq:sp_fv_signal}) and (\ref{eq:sp_fv_parameter}) are detailed in Appendix \ref{app:sp_message_passing}. Let $\Gamma(x_n), \Gamma(\lambda_l), \Gamma(\theta_k)$ denote the factor nodes in the neighborhood of the variable nodes $x_n, \lambda_l, \theta_k$ respectively, we have the following posterior marginals:
\begin{subequations}
\label{eq:sm_post_dist}
\begin{align}
\label{eq:pm_x}
\begin{split}
p(x_n|\vy)&\propto\exp\Delta^{(t+1)}_{x_n}\\
&=\exp\left(\Delta^{(t+1)}_{\Omega_n\rightarrow x_n}+\textstyle\sum_{\Phi_m\in\Gamma(x_n)}\Delta^{(t+1)}_{\Phi_m\rightarrow x_n}\right)
\end{split}\\
\label{eq:pm_lambda}
\begin{split}
p(\lambda_l|\vy)&\propto\exp\Delta^{(t+1)}_{\lambda_l}\\
&=\exp\left(\log p(\lambda_l)+\textstyle\sum_{\Omega_n\in\Gamma(\lambda_l)}\Delta^{(t+1)}_{\Omega_n\rightarrow\lambda_l}\right)
\end{split}\\
\label{eq:pm_theta}
\begin{split}
p(\theta_k|\vy)&\propto\exp\Delta^{(t+1)}_{\theta_k}\\
&=\exp\left(\log p(\theta_k)+\textstyle\sum_{\Phi_m\in\Gamma(\theta_k)}\Delta^{(t+1)}_{\Phi_m\rightarrow\theta_k}\right)\,.
\end{split}
\end{align}
\end{subequations}
Using $p(x_n|\vy)$, the MMSE estimate of $\vx$ can then be computed: 
\begin{align}
\hat{x}_n=\mathbb{E}\left[x_n|\vy\right] = \int_{x_n}x_np(x_n|\vy)\,.
\end{align}

\subsection{Max-sum Message Passing}
\label{sec:max_sum}
\emph{Max-sum} message passing is used to compute the ``joint'' MAP estimates of the random variables in the graph: 
\begin{align}
(\hat{\vx},\hat{\boldsymbol\lambda},\hat{\boldsymbol\theta})=\arg\max_{\vx,\boldsymbol\lambda,\boldsymbol\theta}\,p(\vx,\boldsymbol\lambda,\boldsymbol\theta|\vy)\,.
\end{align}
For the \emph{max-sum} message passing, the message updates from the variable nodes to the factor nodes are the same as the aforementioned \emph{sum-product} message updates, i.e. (\ref{eq:ms_vf_x_phi}, \ref{eq:ms_vf_x_omega}, \ref{eq:ms_vf_lambda_omega}, \ref{eq:ms_vf_theta_phi}). We only need to change the message updates from the factor nodes to the variable nodes by replacing $\int$ with $\max$. Specifically, we have the following message updates between the variable node $x_n$ and the factor nodes $\Phi_m,\Omega_n$ in the $(t+1)$-th iteration: 
\begin{subequations}
\label{eq:ms_fv_signal}
\begin{align}
\label{eq:ms_fv_phi_x}
\begin{split}
\Delta^{(t+1)}_{\Phi_m\rightarrow x_n}&=\textrm{const}+\max_{\vx\backslash x_n,\boldsymbol\theta}\left[\vphantom{\textstyle\sum_{j\neq n}\Delta^{(t)}_{\Phi_m\leftarrow x_j}}\log\Phi_m\left(y_m, \vx, \boldsymbol\theta\right)\right.\\
&\quad\quad\left.+\textstyle\sum_{j\neq n}\Delta^{(t)}_{\Phi_m\leftarrow x_j}+\textstyle\sum_v\Delta^{(t)}_{\Phi_m\leftarrow \theta_v}\right]
\end{split}\\
\label{eq:ms_vf_x_phi}
\Delta^{(t+1)}_{\Phi_m\leftarrow x_n}&=\textrm{const}+\Delta^{(t+1)}_{\Omega_n\rightarrow x_n}+\textstyle\sum_{i\neq m}\Delta^{(t+1)}_{\Phi_i\rightarrow x_n}\\
\label{eq:ms_fv_omega_x}
\Delta^{(t+1)}_{\Omega_n\rightarrow x_n}&=\textrm{const}+\max_{\boldsymbol\lambda}\,\left[\log\Omega_n(x_n,\boldsymbol\lambda)+\textstyle\sum_u\Delta^{(t)}_{\Omega_n\leftarrow\lambda_u}\right]\\
\label{eq:ms_vf_x_omega}
\Delta^{(t+1)}_{\Omega_n\leftarrow x_n}&=\textrm{const}+\textstyle\sum_i\Delta^{(t+1)}_{\Phi_i\rightarrow x_n}\,.
\end{align}
\end{subequations}
The message updates involving the variable nodes $\lambda_l,\theta_k$ are then:
\begin{subequations}
\label{eq:ms_fv_parameter}
\begin{align}
\label{eq:ms_fv_omega_lambda}
\begin{split}
\Delta^{(t+1)}_{\Omega_n\rightarrow \lambda_l}&=\textrm{const}+\max_{x_n,\boldsymbol\lambda\backslash\lambda_l}\left[\vphantom{\textstyle\sum_{j\neq n}\Delta^{(t)}_{\Phi_m\leftarrow x_j}}\log\Omega_n(x_n,\boldsymbol\lambda)\right.\\
&\quad\quad\left.+\Delta^{(t+1)}_{\Omega_n\leftarrow x_n}+\textstyle\sum_{u\neq l}\Delta^{(t)}_{\Omega_n\leftarrow\lambda_u}\right]
\end{split}\\
\label{eq:ms_vf_lambda_omega}
\Delta^{(t+1)}_{\Omega_n\leftarrow\lambda_l}&=\textrm{const}+\textstyle\sum_{j\neq n}\Delta^{(t+1)}_{\Omega_j\rightarrow\lambda_l}+\log p(\lambda_l)\\
\label{eq:ms_fv_phi_theta}
\begin{split}
\Delta^{(t+1)}_{\Phi_m\rightarrow\theta_k}&=\textrm{const}+\max_{\boldsymbol\theta\backslash\theta_k,\vx}\left[\vphantom{\textstyle\sum_{j\neq n}\Delta^{(t)}_{\Phi_m\leftarrow x_j}}\log\Phi_m\left(y_m, \vx, \boldsymbol\theta\right)\right.\\
&\quad\quad\left.+\textstyle\sum_j\Delta^{(t)}_{\Phi_m\leftarrow x_j}+\textstyle\sum_{v\neq k}\Delta^{(t)}_{\Phi_m\leftarrow\theta_v}\right]
\end{split}\\
\label{eq:ms_vf_theta_phi}
\Delta^{(t+1)}_{\Phi_m\leftarrow\theta_k}&=\textrm{const}+\textstyle\sum_{i\neq m}\Delta^{(t+1)}_{\Phi_i\rightarrow\theta_k}+\log p(\theta_k)\,.
\end{align}
\end{subequations}
The approximated implementations of the \emph{max-sum} message passing are detailed in Appendix \ref{app:ms_message_passing}. Similarly, we have the following posterior distributions that are different from those in (\ref{eq:sm_post_dist}):
\begin{subequations}
\begin{align}
\begin{split}
&p(x_n,\hat{\vx}^{(t+1)}\backslash\hat{x}_n^{(t+1)}, \hat{\boldsymbol\lambda}^{(t+1)}, \hat{\boldsymbol\theta}^{(t+1)}|\vy)\propto \exp\Delta^{(t+1)}_{x_n}\\
&\quad\quad\quad\quad=\exp\left(\Delta^{(t+1)}_{\Omega_n\rightarrow x_n}+\textstyle\sum_{\Phi_m\in\Gamma(x_n)}\Delta^{(t+1)}_{\Phi_m\rightarrow x_n}\right)
\end{split}\\
\label{eq:pm_lambda_ms}
\begin{split}
&p(\hat{\vx}^{(t+1)}, \lambda_l,\hat{\boldsymbol\lambda}^{(t+1)}\backslash\hat{\lambda}_l^{(t+1)}|\vy)\propto \exp\Delta^{(t+1)}_{\lambda_l}\\
&\quad\quad\quad\quad=\exp\left(\log p(\lambda_l)+\textstyle\sum_{\Omega_n\in\Gamma(\lambda_l)}\Delta^{(t+1)}_{\Omega_n\rightarrow\lambda_l}\right)
\end{split}\\
\label{eq:pm_theta_ms}
\begin{split}
&p(\hat{\vx}^{(t+1)},\theta_k,\hat{\boldsymbol\theta}^{(t+1)}\backslash\hat{\theta}_k^{(t+1)}|\vy)\propto \exp\Delta^{(t+1)}_{\theta_k}\\
&\quad\quad\quad\quad=\exp\left(\log p(\theta_k)+\textstyle\sum_{\Phi_m\in\Gamma(\theta_k)}\Delta^{(t+1)}_{\Phi_m\rightarrow\theta_k}\right)\,,
\end{split}
\end{align}
\end{subequations}
where $\hat{\vx},\hat{\boldsymbol\lambda},\hat{\boldsymbol\theta}$ are the maximizing values computed from (\ref{eq:ms_fv_phi_x},\ref{eq:ms_fv_omega_x},\ref{eq:ms_fv_omega_lambda},\ref{eq:ms_fv_phi_theta}) accordingly. The ``joint'' MAP estimates of the signal $\vx$ and the parameters $\boldsymbol\lambda, \boldsymbol\theta$ are then:
\begin{subequations}
\begin{align}
\hat{x}_n&=\arg\max_{x_n}\,p(x_n,\hat{\vx}^{(t+1)}\backslash\hat{x}_n^{(t+1)}, \hat{\boldsymbol\lambda}^{(t+1)}, \hat{\boldsymbol\theta}^{(t+1)}|\vy)\\
\hat{\lambda}_l&=\arg\max_{\lambda_l}\,p(\hat{\vx}^{(t+1)}, \lambda_l,\hat{\boldsymbol\lambda}^{(t+1)}\backslash\hat{\lambda}_l^{(t+1)}|\vy)\\
\hat{\theta}_k&=\arg\max_{\theta_k}\,p(\hat{\vx}^{(t+1)},\theta_k,\hat{\boldsymbol\theta}^{(t+1)}\backslash\hat{\theta}_k^{(t+1)}|\vy)\,.
\end{align}
\end{subequations}

\subsection{Parameter Estimation}
\label{sec:pe}
The priors $p(\lambda_l), p(\theta_k)$ on the parameters are usually chosen to be some simple distributions. If we do not have any knowledge on how $\boldsymbol\lambda, \boldsymbol\theta$ are distributed, we can fairly assume a uniform prior and treat $p(\lambda_l), p(\theta_k)$ as constants. Since $\lambda_l,\theta_k$ are treated as random variables in the PE-GAMP framework, they will be jointly estimated along with the signal $\vx$ in the message-updating process.

\subsubsection{Sum-product Message Passing}
Take $\lambda_l$ for example, in the PE-GAMP, we propose to approximate the underlying distribution $p^{(t+1)}_{\Omega_n\leftarrow \lambda_l}(\lambda_l|\vy)\propto \exp(\Delta_{\Omega_n\leftarrow \lambda_l}^{(t+1)})$ using Dirac delta function:
\begin{align}
p^{(t+1)}_{\Omega_n\leftarrow\lambda_l}(\lambda_l|\vy)\approx\delta\left(\lambda_l-\hat{\lambda}_{\Omega_n\leftarrow\lambda_l}^{(t+1)}\right)\,,
\end{align}
where $\delta(\cdot)$ is the Dirac delta function, $\hat{\lambda}_{\Omega_n\leftarrow\lambda_l}^{(t+1)}$ can be computed using either the MAP or MMSE estimation:
\begin{subequations}
\label{eq:sp_pe_map_mmse}
\begin{align}
&\textrm{MAP estimation of $\lambda_l$:}\,\,\hat{\lambda}_{\Omega_n\leftarrow\lambda_l}^{(t+1)}\coloneqq \arg\max_{\lambda_l}\Delta_{\Omega_n\leftarrow\lambda_l}^{(t+1)}\\
&\textrm{MMSE estimation of $\lambda_l$:}\,\,\hat{\lambda}_{\Omega_n\leftarrow\lambda_l}^{(t+1)}\coloneqq\mathbb{E}[\lambda_l|\Delta_{\Omega_n\leftarrow\lambda_l}^{(t+1)}]\,,
\end{align}
\end{subequations}
where $\mathbb{E}[\lambda_l|\Delta_{\Omega_n\leftarrow\lambda_l}^{(t+1)}]$ is the mean of the distribution $\frac{1}{\mathcal{C}}\exp(\Delta_{\Omega_n\leftarrow\lambda_l}^{(t+1)})$, $\mathcal{C}$ is a normalizing constant.

The formulations for the rest parameters can be derived similarly. The reason behind the choice of Dirac delta approximation of $p_{\Omega_n\leftarrow\lambda_l}^{(t+1)}(\lambda_l|\vy)$ is its simplicity, it amounts to the scalar MAP or MMSE estimation of $\lambda_l$ from the posterior distribution $p^{(t+1)}_{\Omega_n\leftarrow \lambda_l}(\lambda_l|\vy)$. Other approximations often make it quite difficult to compute the message $\Delta_{\Omega_n\rightarrow\lambda_l}^{(t+1)}$ in (\ref{eq:sp_fv_omega_lambda}) due to the lack of closed-form solutions.

The updated messages from the factor nodes to the variable nodes are then:
\begin{subequations}
\label{eq:sp_simplified_messages}
\begin{align}
\begin{split}
\Delta^{(t+1)}_{\Phi_m\rightarrow x_n}&=\textrm{const}+\log\int_{\vx\backslash x_n}\left[\Phi_m\left(y_m, \vx, \hat{\boldsymbol\theta}_{\Phi_m}^{(t)}\right)\right.\\
&\quad\quad\left.\times\exp\left(\textstyle\sum_{j\neq n}\Delta^{(t)}_{\Phi_m\leftarrow x_j}\right)\right]
\end{split}\\
\Delta^{(t+1)}_{\Omega_n\rightarrow x_n}&=\textrm{const}+\log\Omega_n(x_n,\hat{\boldsymbol\lambda}_{\Omega_n}^{(t)})\\
\label{eq:delta_omega_lambda_simplified}
\begin{split}
\Delta^{(t+1)}_{\Omega_n\rightarrow \lambda_l}&=\textrm{const}+\log\int_{x_n}\left[\Omega_n\left(x_n,\lambda_l,\hat{\boldsymbol\lambda}_{\Omega_n}^{(t)}\backslash\hat{\lambda}_{\Omega_n\leftarrow\lambda_l}^{(t)}\right)\right.\\
&\quad\quad\left.\times\exp\left(\Delta^{(t+1)}_{\Omega_n\leftarrow x_n}\right)\right]
\end{split}\\
\label{eq:delta_phi_theta_simplified}
\begin{split}
\Delta^{(t+1)}_{\Phi_m\rightarrow\theta_k}&=\textrm{const}+\log\int_{\vx}\left[\Phi_m\left(y_m, \vx, \theta_k, \hat{\boldsymbol\theta}_{\Phi_m}^{(t)}\backslash\hat{\theta}_{\Phi_m\leftarrow\theta_k}^{(t)}\right)\right.\\
&\quad\quad\left.\times\exp\left(\textstyle\sum_j\Delta^{(t)}_{\Phi_m\leftarrow x_j}\right)\right]\,,
\end{split}
\end{align}
\end{subequations}
where $\hat{\boldsymbol\lambda}_{\Omega_n}^{(t)}$, $\hat{\boldsymbol\theta}_{\Phi_m}^{(t)}$ are scalar estimates from the previous $t$-th iteration at nodes $\Omega_n$ and $\Phi_m$ respectively. 
\begin{subequations}
\label{eq:pe_gamp_parameters_estimate}
\begin{align}
&\hat{\boldsymbol\lambda}_{\Omega_n}^{(t)}=\left\{\left.\hat{\lambda}_{\Omega_n\leftarrow\lambda_u}^{(t)}\right|u=1,\cdots,L\right\}\\
&\hat{\boldsymbol\theta}_{\Phi_m}^{(t)}=\left\{\left.\hat{\theta}_{\Phi_m\leftarrow\theta_v}^{(t)}\right|v=1,\cdots,K\right\}
\end{align}
\end{subequations}

\subsubsection{Max-sum Message Passing}
Take $\lambda_l$ for example, a straightforward way to solve the problems in (\ref{eq:ms_fv_omega_x}, \ref{eq:ms_fv_omega_lambda}) is to iteratively maximize each varaible in $\{x_n,\boldsymbol\lambda\backslash\lambda_l\}$ while keeping the rest fixed until convergence. However, it is inefficient and quite unnecessary. In practice one iteration would suffice. Hence we propose to use the following solutions as the approximate maximizing parameters:
\begin{align}
\label{eq:ms_pe_map}
\begin{split}
\hat{\lambda}_{\Omega_n\leftarrow\lambda_l}^{(t+1)}=\arg\max_{\lambda_l}\,&\log\Omega_n\left(\hat{x}_n^{(t)},\lambda_l,\hat{\boldsymbol\lambda}_{\Omega_n}^{(t)}\backslash\hat{\lambda}_{\Omega_n\leftarrow\lambda_l}^{(t)}\right)\\
&\quad + \Delta_{\Omega_n\leftarrow\lambda_l}^{(t)}.
\end{split}
\end{align}
The updated messages from the factor nodes to the variable nodes can be obtained by replacing ``$\int$'' in (\ref{eq:sp_simplified_messages}) with ``$\max$'' like before.

\subsubsection{The PE-GAMP Algorithm}
For the rest of the paper, parameter estimation operations like those in (\ref{eq:sp_pe_map_mmse}, \ref{eq:ms_pe_map}) will be abbreviated by the two functions $f_{\Omega_n\leftarrow\lambda_l}(\cdot), f_{\Phi_m\leftarrow\theta_k}(\cdot)$. 
\begin{align}
\hat{\lambda}_{\Omega_n\leftarrow\lambda_l}^{(t+1)}=f_{\Omega_n\leftarrow\lambda_l}(\cdot)\quad\textrm{and}\quad \hat{\theta}_{\Phi_m\leftarrow\theta_k}^{(t+1)}=f_{\Phi_m\leftarrow\theta_k}(\cdot)\,.
\end{align}
They are different from the input and output channels estimation functions $g_\textrm{in}(\cdot), g_\textrm{out}(\cdot)$ defined in \cite{GAMP11}. 

The proposed GAMP algorithm with built-in parameter estimation (PE-GAMP) can be summarized in Algorithm \ref{alg:pe_gamp}, where $\vq_\Phi\in\mathbb{R}^M, \vr_\Omega\in\mathbb{R}^N$ can be viewed as some new random variables created inside the original GAMP framework \cite{GAMP11}, and $\boldsymbol\tau_\Phi^q\in\mathbb{R}^M, \boldsymbol\tau_\Phi^s\in\mathbb{R}^M, \boldsymbol\tau_\Omega^r\in\mathbb{R}^N$ are their corresponding variances.  As is done in \cite{GAMP11}, further simplification will be made by replacing the variance vectors  with scalars when performing asymptotic analysis of Algorithm \ref{alg:pe_gamp}:
\begin{align}
\boldsymbol\tau_\Phi^q, \boldsymbol\tau_\Omega^r \xRightarrow[]{\textrm{Replace}} \tau_\Phi^q, \tau_\Omega^r\,.
\end{align}
\begin{algorithm}
\caption{GAMP with Built-in Parameter Estimation (PE-GAMP)}\label{alg:pe_gamp}
\begin{algorithmic}[1]
\Require The matrix $\vA\in\mathbb{R}^{M\times N}$; the observation $\vy\in\mathbb{R}^M$; the input and output channels estimation functions $g_\textrm{in}(\cdot), g_\textrm{out}(\cdot)$; the parameter estimation functions $f_{\Omega_n}(\cdot), f_{\Phi_m}(\cdot)$.
\State Set $\vs^{(-1)}=0$ and initialize $\hat{\vx}^{(0)}, \boldsymbol\tau_{\Omega}^x(0), \hat{\boldsymbol\lambda}_{\Omega_n}^{(0)}, \hat{\boldsymbol\theta}_{\Phi_m}^{(0)}$.
\For{$t=\{0,1,\cdots\}$}
	\State Output channel \emph{linear} update: For each $m=1,\cdots,M$
	\begin{subequations}
	\begin{align}
	&\tau_{\Phi_m}^q(t)=\textstyle\sum_n|A_{mn}|^2\tau_{\Omega_n}^x(t)\\
	&q_{\Phi_m}^{(t)}=\textstyle\sum_nA_{mn}\hat{x}_n^{(t)}-\tau_{\Phi_m}^q(t)s_m^{(t-1)}\\
	&\hat{z}_m^{(t)}=\textstyle\sum_nA_{mn}\hat{x}_n^{(t)}\,.
	\end{align}
	\end{subequations}
	\State Output channel \emph{nonlinear} update: For each $m=1,\cdots,M$
	\begin{subequations}
	\begin{align}
	&s_{\Phi_m}^{(t)}=g_\textrm{out}\left(t, q_{\Phi_m}^{(t)}, \tau_{\Phi_m}^q(t), y_m, \hat{\boldsymbol\theta}_{\Phi_m}^{(t)}\right)\\
	&\tau_{\Phi_m}^s(t)=-\frac{\partial}{\partial q}g_\textrm{out}\left(t, q_{\Phi_m}^{(t)}, \tau_{\Phi_m}^q(t), y_m, \hat{\boldsymbol\theta}_{\Phi_m}^{(t)}\right)\,.
	\end{align}
	\end{subequations}
	\State Input channel \emph{linear} update: For each $n=1,\cdots,N$
	\begin{subequations}
	\begin{align}
	&\tau_{\Omega_n}^r(t)=\left[\textstyle\sum_m|A_{mn}|^2\tau_{\Phi_m}^s(t)\right]^{-1}\\
	&r_{\Omega_n}^{(t)}=x_n^{(t)}+\tau_{\Omega_n}^r(t)\textstyle\sum_mA_{mn}s_m^{(t)}\,.
	\end{align}
	\end{subequations}
	\State Input \emph{nonlinear} update: For each $n=1,\cdots,N$
	\begin{subequations}
	\begin{align}
	\label{eq:pegamp_output_x}
	&\hat{x}_n^{(t+1)} = g_\textrm{in}\left(t, r_{\Omega_n}^{(t)}, \tau_{\Omega_n}^r(t), \hat{\boldsymbol\lambda}_{\Omega_n}^{(t)}\right)\\
	&\tau_{\Omega_n}^x(t+1)=\tau_{\Omega_n}^r(t)\frac{\partial}{\partial r} g_\textrm{in}\left(t, r_{\Omega_n}^{(t)}, \tau_{\Omega_n}^r(t), \hat{\boldsymbol\lambda}_{\Omega_n}^{(t)}\right)\,.
	\end{align}
	\end{subequations}
	\State \emph{Sum-product} message passing parameters update: For each $k=1,\cdots,K$ and $l=1,\cdots,L$.
	\begin{subequations}
	\label{eq:sp_parameters_update}	
	\begin{align}
	&\hat{\lambda}_{\Omega_n\leftarrow\lambda_l}^{(t+1)}=f_{\Omega_n\leftarrow\lambda_l}\left(t, \vr_{\Omega}^{(t)}, \boldsymbol\tau_{\Omega}^r(t), \lambda_l, \hat{\boldsymbol\lambda}_{\Omega_n}^{(t)}\backslash\hat{\lambda}_{\Omega_n\leftarrow\lambda_l}^{(t)}\right)\\
	&\hat{\theta}_{\Phi_m\leftarrow\theta_k}^{(t+1)}=f_{\Phi_m\leftarrow\theta_k}\left(t, \vq_{\Phi}^{(t)}, \vy, \boldsymbol\tau_{\Phi}^q(t), \theta_k, \hat{\boldsymbol\theta}_{\Phi_m}^{(t)}\backslash\hat{\theta}_{\Phi_m\leftarrow\theta_k}^{(t)}\right)\,.
	\end{align}
	\end{subequations}
	\State \emph{Max-sum} message passing parameters update: For each $k=1,\cdots,K$ and $l=1,\cdots,L$.
	\begin{subequations}
	\label{eq:sp_parameters_update}	
	\begin{align}
	&\hat{\lambda}_{\Omega_n\leftarrow\lambda_l}^{(t+1)}=f_{\Omega_n\leftarrow\lambda_l}\left(t, \hat{x}_n^{(t)}, \vr_{\Omega}^{(t)}, \boldsymbol\tau_{\Omega}^r(t), \lambda_l, \hat{\boldsymbol\lambda}_{\Omega_n}^{(t)}\backslash\hat{\lambda}_{\Omega_n\leftarrow\lambda_l}^{(t)}\right)\\
	&\hat{\theta}_{\Phi_m\leftarrow\theta_k}^{(t+1)}=f_{\Phi_m\leftarrow\theta_k}\left(t, \hat{\vz}^{(t)}, \vq_{\Phi}^{(t)}, \vy, \boldsymbol\tau_{\Phi}^q(t), \theta_k, \hat{\boldsymbol\theta}_{\Phi_m}^{(t)}\backslash\hat{\theta}_{\Phi_m\leftarrow\theta_k}^{(t)}\right)\,.
	\end{align}
	\end{subequations}
	\If {$\hat{\vx}^{(t+1)}$ reaches convergence}
		\State $\hat{\vx}=\hat{\vx}^{(t+1)}$;
		\State break;
	\EndIf
\EndFor
\State\Return Output $\hat{\vx}$;
\end{algorithmic}
\end{algorithm}

For the \emph{sum-product} message passing, PE-GAMP naturally produces MMSE estimation of $\vx$ in (\ref{eq:pegamp_output_x}). After the convergence is reached, we can also compute the MAP estimation of $\vx$ using $p(x_n|\vy)$: $\hat{x}_n = \arg\max_{x_n}p(x_n|\vy)$. For the \emph{max-sum} message passing, PE-GAMP naturally produces the ``joint'' MAP estimation of $\vx$ in (\ref{eq:pegamp_output_x}). However, there isn't any meaningful MMSE estimation of $\vx$ in this case.

\section{State Evolution Analysis of PE-GAMP}
\subsection{Review of the GAMP State Evolution Analysis}
We first introduce the definitions as well as assumptions used in the state evolution (SE) analysis \cite{GAMP11} that studies the empirical convergence behavior of the variables in the large system limit. It is a minor modification of the work from \cite{AMP11}.
\smallskip
\begin{definition}
A function $g(\cdot):\mathbb{R}^r\rightarrow\mathbb{R}^s$ is \emph{pseudo-Lipschitz} of order $k>1$, if there exists an $L>0$ such that $\forall \vx,\vy\in\mathbb{R}^r$,
\begin{align}
\|g(\vx)-g(\vy)\|\leq L(1+\|\vx\|^{k-1}+\|\vy\|^{k-1})\|\vx-\vy\|\,.
\end{align}
\end{definition}
\smallskip
\begin{definition}
Suppose $\{\vv^{[N]}\in\mathbb{R}^{sl_N}, N=1,2,\cdots\}$ is a sequence of vectors, and each $\vv^{[N]}$ contains $l_N$ blocks of vector components $\{\vv_i^{[N]}\in\mathbb{R}^s, i=1,\cdots,l_N\}$. The components of $\vv^{[N]}$ \emph{empirically converges with bounded moments of order k} to a random vector $\vv\in\mathbb{R}^s$ as $N\rightarrow\infty$ if: For all pesudo-Lipschitz continuous functions $g(\cdot)$ of order $k$,
\begin{align}
\lim_{N\rightarrow\infty}\frac{1}{l_N}\sum_{i=1}^{l_N}g\left(\vv_i^{[N]}\right)=\mathbb{E}\left[g(\vv)\right]<\infty\,.
\end{align}
When the nature of convergence is clear, it can be simply written as follows:
\begin{align}
\lim_{N\rightarrow\infty}\vv_i^{[N]}\stackrel{\normalfont\textrm{PL$(k)$}}{=}\vv\,.
\end{align}
\end{definition}
Based on the above pseudo-Lipschitz continuity and empirical convergence definitions, GAMP also makes the following assumptions about the estimation of $\vx\in\mathbb{R}^N$ \cite{AMP11,GAMP11}.
\smallskip
\begin{assumption}
\label{aspt_gamp}
The GAMP solves a series of estimation problems indexed by the input signal dimension $N$:
\begin{enumerate}[label={\alph*)}, nolistsep]
\item The output dimension $M$ is deterministic and scales linearly with the input dimension $N$: $\lim_{N\rightarrow\infty}\frac{N}{M}=\beta$ for some $\beta>0$.
\item The matrix $\vA\in\mathbb{R}^{M\times N}$ has i.i.d Gaussian entries $A_{ij}\sim \mathcal{N}(0,\frac{1}{M})$.
\item The components of initial condition $\hat{\vx}^{(0)},\tau_\Omega^x(0)$ and the input signal $\vx$ empirically converge with bounded moments of order $2k-2$ as follows:
\begin{subequations}
\begin{align}
&\lim_{N\rightarrow\infty}(\hat{x}_n^{(0)}, x_n)\stackrel{\normalfont\textrm{PL$(2k-2)$}}{=}(\hat{\mathcal{X}}^{(0)},\mathcal{X})\\
&\lim_{N\rightarrow\infty}\tau_{\Omega_n}^x(0)=\overline{\tau}_\Omega^x(0)\,.
\end{align}
\end{subequations}
\item The output vector $\vy\in\mathbb{R}^M$ depends on the transform output $\vz=\vA\vx\in\mathbb{R}^M$ and the noise vector $\vw\in\mathbb{R}^M$ through some function $g(\cdot)$. For $\forall m=1,\cdots,M$,
\begin{align}
y_m=g(z_m, w_m)\,.
\end{align}
$w_m$ empirically converges with bounded moments of order $2k-2$ to some random variable $\mathcal{W}\in\mathbb{R}$ with distribution $p(w)$. The conditional distribution of $\mathcal{Y}$ given $\mathcal{Z}$ is given by $p(y|z)$.
\item The channel estimation functions $g_\textrm{in}(\cdot)$, $g_\textrm{out}(\cdot)$ and their partial derivatives with respect to $r, q, z$ exist almost everywhere and are pseudo-Lipschitz continuous of order $k$. 
\end{enumerate}
\end{assumption}
The SE equations of the GAMP describe the limiting behavior of the following scalar random variables and scalar variances as $N\rightarrow\infty$:
\begin{subequations}
\label{eq:rv_gamp}
\begin{align}
\label{eq:rv_gamp_a}
&\boldsymbol\psi_\textrm{in}\coloneqq\{(x_n, \hat{x}_n^{(t+1)}, r_{\Omega_n}^{(t)}),\, n=1,\cdots,N\}\\
\label{eq:rv_gamp_b}
&\boldsymbol\psi_\textrm{out}\coloneqq\{(z_m, \hat{z}_m^{(t)}, y_m, q_{\Phi_m}^{(t)}),\, m=1,\cdots,M\}\\
\label{eq:rv_gamp_c}
&\boldsymbol\psi_\tau\coloneqq(\tau_\Phi^q, \tau_\Omega^r)\,.
\end{align}
\end{subequations}
\cite{GAMP11} showed that (\ref{eq:rv_gamp_a}-\ref{eq:rv_gamp_b}) empirically converge with bounded moments of order $k$ to the following random vectors:
\begin{subequations}
\begin{align}
&\lim_{N\rightarrow\infty}\boldsymbol\psi_\textrm{in}\stackrel{\normalfont\textrm{PL$(k)$}}{=}\overline{\boldsymbol\psi}_\textrm{in}\coloneqq(\mathcal{X}, \hat{\mathcal{X}}^{(t+1)}, \mathcal{R}_{\Omega}^{(t)})\\
&\lim_{N\rightarrow\infty}\boldsymbol\psi_\textrm{out}\stackrel{\normalfont\textrm{PL$(k)$}}{=}\overline{\boldsymbol\psi}_\textrm{out}\coloneqq(\mathcal{Z}, \hat{\mathcal{Z}}^{(t)}, \mathcal{Y}, \mathcal{Q}_{\Phi}^{(t)})\,,
\end{align}
\end{subequations}
where $\mathcal{R}_{\Omega}^{(t)}, \mathcal{Z}, \mathcal{Q}_{\Phi}^{(t)}$ are as follows for some computed $\alpha^r\in\mathbb{R}$, $\xi^r\in\mathbb{R}$, $\vK^q\in\mathbb{R}^{2\times2}$:
\begin{subequations}
\begin{align}
&\mathcal{R}_{\Omega}^{(t)}=\alpha^r\mathcal{X}+\mathcal{V}, \quad\quad \mathcal{V}\sim\mathcal{N}(0,\xi^r)\\
&(\mathcal{Z}, \mathcal{Q}_{\Phi}^{(t)})\sim\mathcal{N}(0,\vK^q)\,.
\end{align}
\end{subequations}
Additionally, for $\boldsymbol\psi_\tau$, the following convergence holds:
\begin{align}
&\lim_{N\rightarrow\infty}\boldsymbol\psi_\tau=\overline{\boldsymbol\psi}_\tau\coloneqq(\overline{\tau}_\Phi^q, \overline{\tau}_\Omega^r)\,.
\end{align}

\begin{algorithm}
\caption{PE-GAMP State Evolution}\label{alg:pe_gamp_se}
\begin{algorithmic}[1]
\Require The matrix $\vA\in\mathbb{R}^{M\times N}$; the observation $\vy\in\mathbb{R}^M$; the input and output channels estimation functions $g_\textrm{in}(\cdot), g_\textrm{out}(\cdot)$; the parameter estimation functions $f_{\theta_k}(\cdot), f_{\lambda_l}(\cdot)$.
\State Initialize $\overline{\tau}_\Omega^x(0), \overline{\boldsymbol\theta}^{(0)}, \overline{\boldsymbol\lambda}^{(0)}$ and set
\begin{align}
\vK^x(0)=\textrm{cov}\left(\mathcal{X}, \hat{\mathcal{X}}^{(0)}\right)\,.
\end{align}
\For{$t=\{0,1,\cdots\}$}
	\State Output channel update:
	\begin{subequations}
	\begin{align}
	&\overline{\tau}_\Phi^q(t)=\beta\overline{\tau}_\Omega^x(t),\quad\quad \vK^q(t)=\beta\vK^x(t)\\
	&\overline{\tau}_\Omega^r(t)=-\mathbb{E}^{-1}\left[\frac{\partial}{\partial q}g_\textrm{out}\left(t, \mathcal{Q}^{(t)}_\Phi, \overline{\tau}_\Phi^q(t), \mathcal{Y}, \overline{\boldsymbol\theta}_{\Phi_m}^{(t)}\right)\right]\\
	&\xi^r(t)=\left(\overline{\tau}_\Omega^r(t)\right)^2\mathbb{E}\left[g_\textrm{out}\left(t, \mathcal{Q}^{(t)}_\Phi, \overline{\tau}_\Phi^q(t), \mathcal{Y}, \overline{\boldsymbol\theta}_{\Phi_m}^{(t)}\right)\right]\\
	&\alpha^r(t) = \overline{\tau}_\Omega^r(t)\mathbb{E}\left[\frac{\partial}{\partial z}g_\textrm{out}\left(t, \mathcal{Q}^{(t)}_\Phi,  \overline{\tau}_\Phi^q(t), g(\mathcal{Z},\mathcal{W}), \overline{\boldsymbol\theta}_{\Phi_m}^{(t)}\right)\right]
	\end{align}
	\end{subequations}
	where the expectations are over the random variables $\mathcal{Z},\mathcal{Q}_\Phi^{(t)}, \mathcal{W}, \mathcal{Y}$.
	\State Input channel update:
	\begin{subequations}
	\begin{align}
	&\hat{\mathcal{X}}^{(t+1)} = g_\textrm{in}\left(t, \mathcal{R}^{(t)}_\Omega, \overline{\tau}_\Omega^r(t), \overline{\boldsymbol\lambda}_{\Omega_n}^{(t)}\right)\\
	&\overline{\tau}_\Omega^x(t+1)=\overline{\tau}_\Omega^r(t)\mathbb{E}\left[\frac{\partial}{\partial r}g_\textrm{in}\left(t, \mathcal{R}^{(t)}_\Omega, \overline{\tau}_\Omega^r(t), \overline{\boldsymbol\lambda}_{\Omega_n}^{(t)}\right)\right]\\
	&\vK^x(t+1)=\textrm{cov}\left(\mathcal{X},\hat{\mathcal{X}}^{(t+1)}\right)\,,
	\end{align}
	\end{subequations}
	where the expectation is over the random variables $\mathcal{X}, \mathcal{R}^{(t)}_\Omega$.
	\State \emph{Sum-product} message passing parameters update: For each $k=1,\cdots,K$ and $l=1,\cdots,L$
	\begin{subequations}
	\label{eq:sp_pe_state_evolution}
	\begin{align}
	&\overline{\lambda}_{\Omega_n\leftarrow\lambda_l}^{(t+1)}=f_{\Omega_n\leftarrow\lambda_l}\left(t,\mathcal{R}^{(t)}_\Omega, \overline{\tau}_\Omega^r(t), \lambda_l, \overline{\boldsymbol\lambda}_{\Omega_n}^{(t)}\backslash\overline{\lambda}_{\Omega_n\leftarrow\lambda_l}^{(t)}\right)\\
	&\overline{\theta}_{\Phi_m\leftarrow\theta_k}^{(t+1)}=f_{\Phi_m\leftarrow\theta_k}\left(t,\mathcal{Q}^{(t)}_\Phi,\mathcal{Y}, \overline{\tau}_\Phi^q(t), \theta_k, \overline{\boldsymbol\theta}_{\Phi_m}^{(t)}\backslash\overline{\theta}_{\Phi_m\leftarrow\theta_k}^{(t)}\right)\,.
	\end{align}
	\end{subequations}
	\State \emph{Max-sum} message passing parameters update: For each $k=1,\cdots,K$ and $l=1,\cdots,L$
	\begin{subequations}
	\label{eq:ms_pe_state_evolution}
	\begin{align}
	&\overline{\lambda}_{\Omega_n\leftarrow\lambda_l}^{(t+1)}=f_{\Omega_n\leftarrow\lambda_l}\left(t,\hat{\mathcal{X}}^{(t)},\mathcal{R}^{(t)}_\Omega, \overline{\tau}_\Omega^r(t), \lambda_l, \overline{\boldsymbol\lambda}_{\Omega_n}^{(t)}\backslash\overline{\lambda}_{\Omega_n\leftarrow\lambda_l}^{(t)}\right)\\
	&\overline{\theta}_{\Phi_m\leftarrow\theta_k}^{(t+1)}=f_{\Phi_m\leftarrow\theta_k}\left(t,\hat{\mathcal{Z}}^{(t)},\mathcal{Q}^{(t)}_\Phi,\mathcal{Y}, \overline{\tau}_\Phi^q(t), \theta_k, \overline{\boldsymbol\theta}_{\Phi_m}^{(t)}\backslash\overline{\theta}_{\Phi_m\leftarrow\theta_k}^{(t)}\right)
	\end{align}
	\end{subequations}
	\If {$\hat{\mathcal{X}}^{(t+1)}$ reaches convergence}
		\State $\hat{\mathcal{X}}=\hat{\mathcal{X}}^{(t+1)}$;
		\State break;
	\EndIf
\EndFor
\State\Return Output $\hat{\mathcal{X}}$;
\end{algorithmic}
\end{algorithm}

\subsection{PE-GAMP State Evolution Analysis}
The SE equations of the proposed PE-GAMP are given in Algorithm \ref{alg:pe_gamp_se}. In addition to (\ref{eq:rv_gamp_a}-\ref{eq:rv_gamp_c}), the state evolution (SE) analysis of PE-GAMP will study the limiting behavior of $\hat{\boldsymbol\lambda}_{\Omega_n}^{(t+1)},\hat{\boldsymbol\theta}_{\Phi_m}^{(t+1)}$ for each $n=1,\cdots,N$ and $m=1,\cdots,M$.

Eventually we would like to show that they empirically converge to the following random vectors for fixed $t$ as $N\rightarrow\infty$:
\begin{subequations}
\begin{align}
&\overline{\boldsymbol\lambda}_{\Omega_n}^{(t+1)}=\left\{\left.\overline{\lambda}_{\Omega_n\leftarrow\lambda_l}^{(t+1)}\right|l=1,\cdots,L\right\}\\
&\overline{\boldsymbol\theta}_{\Phi_m}^{(t+1)}=\left\{\left.\overline{\theta}_{\Phi_m\leftarrow\theta_k}^{(t+1)}\right|k=1,\cdots,K\right\}
\end{align}
\end{subequations}
To simplify notations, we assume the following for the \emph{sum-product} message passing:
\begin{subequations}
\begin{align}
&h_{\Omega_n\leftarrow\lambda_l}^{\Omega_j}(\cdot)=\Delta^{(t)}_{\Omega_j\rightarrow\lambda_l}+\frac{1}{N-1}\log p(\lambda_l)\\ 
&h_{\Phi_m\leftarrow\theta_k}^{\Phi_i}(\cdot)=\Delta^{(t)}_{\Phi_i\rightarrow\theta_k}+\frac{1}{M-1}\log p(\theta_k)
\end{align}
\end{subequations}
For \emph{max-sum} message passing, we assume:
\begin{subequations}
\begin{align}
\begin{split}
&h_{\Omega_n\leftarrow\lambda_l}^{\Omega_j}(\cdot)=\Delta^{(t)}_{\Omega_j\rightarrow\lambda_l}\\
&+\frac{1}{N-1}\left(\log p(\lambda_l)+\log\Omega_n\left(\hat{x}_n^{(t)},\lambda_l,\hat{\boldsymbol\lambda}_{\Omega_n}^{(t)}\backslash\hat{\lambda}_{\Omega_n\leftarrow\lambda_l}^{(t)}\right)\right)
\end{split}\\
\begin{split}
&h_{\Phi_m\leftarrow\theta_k}^{\Phi_i}(\cdot)=\Delta^{(t)}_{\Phi_i\rightarrow\theta_k}\\
&+\frac{1}{M-1}\left(\log p(\theta_k)+\log\Phi_m\left(y_m, \hat{\vx}^{(t)},\theta_k,\hat{\boldsymbol\theta}_{\Phi_m}^{(t)}\backslash\hat{\theta}_{\Phi_m\leftarrow\theta_k}^{(t)}\right)\right)
\end{split}
\end{align}
\end{subequations}

Since the parameter estimation of the \emph{max-sum} message passing and the MAP parameter estimation of the \emph{sum-product} message passing basically have the same form given in (\ref{eq:map_lambda_theta_h}), their state evolution analysis can be derived similarly. For the sake of conciseness, we will only give the empirical convergence proofs for the MAP and MMSE parameter estimations of the \emph{sum-product} message passing.

\subsubsection{MAP Parameter Estimation State Evolution}
We can also write the estimation functions as follows:
\begin{subequations}
\label{eq:map_lambda_theta_h}
\begin{align}
\label{eq:map_lambda_h}
\hat{\lambda}_{\Omega_n\leftarrow\lambda_l}^{(t+1)}&=\arg\max_{\lambda_l}\frac{1}{N-1}\textstyle\sum_{j\neq n}h_{\Omega_n\leftarrow\lambda_l}^{\Omega_j}(\cdot)\\
\label{eq:map_theta_h}
\hat{\theta}_{\Phi_m\leftarrow\theta_k}^{(t+1)}&=\arg\max_{\theta_k}\frac{1}{M-1}\textstyle\sum_{i\neq m}h_{\Phi_m\leftarrow\theta_k}^{\Phi_i}(\cdot)
\end{align}
\end{subequations}
In the large system limit $N\rightarrow\infty$, the state evolution equations (\ref{eq:sp_pe_state_evolution}) of the parameters update step in \emph{sum-product} message passing can then be written as:
\begin{subequations}
\begin{align}
\begin{split}
&\overline{\lambda}_{\Omega_n\leftarrow\lambda_l}^{(t+1)}=f_{\Omega_n\leftarrow\lambda_l}(\cdot)\\
&=\arg\max_{\lambda_l}\mathbb{E}\left[h_{\Omega_n\leftarrow\lambda_l}^{\Omega_j}\left(t,\mathcal{R}^{(t)}_\Omega, \overline{\tau}_\Omega^r(t), \lambda_l, \overline{\boldsymbol\lambda}_{\Omega_n}^{(t)}\backslash\overline{\lambda}_{\Omega_n\leftarrow\lambda_l}^{(t)}\right)\right]
\end{split}\\
\begin{split}
&\overline{\theta}_{\Phi_m\leftarrow\theta_k}^{(t+1)}=f_{\Phi_m\leftarrow\theta_k}(\cdot)\\
&=\arg\max_{\theta_k}\mathbb{E}\left[h_{\Phi_m\leftarrow\theta_k}^{\Phi_i}\left(t,\mathcal{Q}^{(t)}_\Phi,\mathcal{Y}, \overline{\tau}_\Phi^q(t), \theta_k, \overline{\boldsymbol\theta}_{\Phi_m}^{(t)}\backslash\overline{\theta}_{\Phi_m\leftarrow\theta_k}^{(t)}\right)\right]\,,
\end{split}
\end{align}
\end{subequations}
where the expectations are over the random variables $\mathcal{R}^{(t)}_\Omega$ and $\left\{\mathcal{Q}^{(t)}_\Phi,\mathcal{Y}\right\}$ respectively. 

Our proof of the convergence of the scalars in (\ref{eq:pe_gamp_parameters_estimate},\ref{eq:rv_gamp}) will make use of the Theorem \ref{thm:adaptive_gamp} from \cite{AMP_CPE14} in Appendix \ref{app:se_adaptive_gamp}. First, we give the following adapted assumptions for the MAP parameter estimation.
\smallskip
\begin{assumption}
\label{aspt_pe_gamp_map}
The priors on the parameters: $\{p(\boldsymbol\lambda),\boldsymbol\lambda\in\mathcal{U}_\lambda\}$, $\{p(\boldsymbol\theta),\boldsymbol\theta\in\mathcal{U}_\theta\}$ and the parameter estimation functions should satisfy:
\begin{enumerate}[label={\alph*)}, nolistsep]
\item The priors $p(\boldsymbol\lambda)<\infty\,,p(\boldsymbol\theta)<\infty$ are bounded, and the sets $\mathcal{U}_\lambda, \mathcal{U}_\theta$ are compact.
\item For the \emph{sum-product} message passing, the following estimations are well-defined, unique.
\begin{subequations}
\begin{align}
\label{eq:map_lambda_detail}
\begin{split}
&\lambda^*_{\Omega_n\leftarrow\lambda_l}=\\
&\arg\max_{\lambda_l\in\mathcal{U}_\lambda}\mathbb{E}\left[h_{\Omega_n\leftarrow\lambda_l}^{\Omega_j}(t, \mathcal{R}_\Omega^{(t)}, \tau_\Omega^r(t),\lambda_l, \hat{\boldsymbol\lambda}_{\Omega_n}^{(t)}\backslash\hat{\lambda}_{\Omega_n\leftarrow\lambda_l}^{(t)})\right]
\end{split}\\
\begin{split}
&\theta^*_{\Phi_m\leftarrow\theta_k}=\\
&\arg\max_{\theta_k\in\mathcal{U}_\theta}\mathbb{E}\left[h_{\Phi_m\leftarrow\theta_k}^{\Phi_i}(t, \mathcal{Q}_\Phi^{(t)},\mathcal{Y},\tau_\Phi^q(t),\theta_k,\hat{\boldsymbol\theta}_{\Phi_m}^{(t)}\backslash\hat{\theta}_{\Phi_m\leftarrow\theta_k}^{(t)})\right]\,,
\end{split}
\end{align}
\end{subequations}
where the expectations are with respect to $\mathcal{R}_\Omega^{(t)}$ and $\left\{\mathcal{Q}_\Phi^{(t)},\mathcal{Y}\right\}$.

\item $h_{\Omega_n\leftarrow\lambda_l}^{\Omega_j}(\cdot)$ is pseudo-Lipschitz continuous of order $2$ in $r_{\Omega_n}$, it is also continuous in $\lambda_l$ uniformly over $r_{\Omega_n}$ in the following sense: For every $\epsilon>0, \widetilde{\tau}_\Omega^r, \widetilde{\boldsymbol\lambda}\in\mathcal{U}_\lambda$, there exists an open neighborhood $\rho(\widetilde{\tau}_\Omega^r, \widetilde{\boldsymbol\lambda})$ of $(\widetilde{\tau}_\Omega^r, \widetilde{\boldsymbol\lambda}\in\mathcal{U}_\lambda)$, such that $\forall (\tau_\Omega^r, \boldsymbol\lambda)\in\rho(\widetilde{\tau}_\Omega^r, \widetilde{\boldsymbol\lambda})$ and all $r$,
\begin{align}
\left|h_{\Omega_n\leftarrow\lambda_l}^{\Omega_j}(t, r_{\Omega_n}, \tau_\Omega^r, \boldsymbol\lambda)-h_{\Omega_n\leftarrow\lambda_l}^{\Omega_j}(t, r_{\Omega_n}, \widetilde{\tau}_\Omega^r,\widetilde{\boldsymbol\lambda})\right|<\epsilon\,.
\end{align}
\item $h_{\Phi_m\leftarrow\theta_k}^{\Phi_i}(\cdot)$ is pseudo-Lipschitz continuous of order $2$ in $(q_{\Phi_m},y_m)$, it is also continuous in $\theta_k$ uniformly over $q_{\Phi_m}$ and $y_m$.
\end{enumerate}
\end{assumption}
\bigskip
\subsubsection{MMSE Parameter Estimation State Evolution}
For the MMSE parameter estimation, the estimation functions can be written as follows:
\begin{subequations}
\begin{align}
&\hat{\lambda}_{\Omega_n\leftarrow\lambda_l}^{(t+1)}=\int_{\lambda_l}\lambda_l\frac{\exp(\frac{1}{N-1}\textstyle\sum_{j\neq n}h_{\Omega_n\leftarrow\lambda_l}^{\Omega_j}(\cdot))}{\int_{\lambda_l}\exp(\frac{1}{N-1}\textstyle\sum_{j\neq n}h_{\Omega_n\leftarrow\lambda_l}^{\Omega_j}(\cdot))}\\
&\hat{\theta}_{\Phi_m\leftarrow\theta_k}^{(t+1)}=\int_{\theta_k}\theta_k\frac{\exp(\frac{1}{M-1}\textstyle\sum_{i\neq m}h_{\Phi_m\leftarrow\theta_k}^{\Phi_i}(\cdot))}{\int_{\theta_k}\exp(\frac{1}{M-1}\textstyle\sum_{i\neq m}h_{\Phi_m\leftarrow\theta_k}^{\Phi_i}(\cdot))}\,.
\end{align}
\end{subequations}
The state evolution equations (\ref{eq:sp_pe_state_evolution}) of the parameters update step in Algorithm \ref{alg:pe_gamp_se} can then be written as:
\begin{subequations}
\label{eq:mmse_paramters_state_evolution}
\begin{align}
&\overline{\lambda}_{\Omega_n\leftarrow\lambda_l}^{(t+1)}=f_{\Omega_n\leftarrow\lambda_l}(\cdot)=\int_{\lambda_l}\lambda_l\frac{\exp(\mathbb{E}\left[h_{\Omega_n\leftarrow\lambda_l}^{\Omega_j}(\cdot)\right])}{\int_{\lambda_l}\exp(\mathbb{E}\left[h_{\Omega_n\leftarrow\lambda_l}^{\Omega_j}(\cdot)\right])}\\
&\overline{\theta}_{\Phi_m\leftarrow\theta_k}^{(t+1)}=f_{\Phi_m\leftarrow\theta_k}(\cdot)=\int_{\theta_k}\theta_k\frac{\exp(\mathbb{E}\left[h_{\Phi_m\leftarrow\theta_k}^{\Phi_i}(\cdot)\right])}{\int_{\theta_k}\exp(\mathbb{E}\left[h_{\Phi_m\leftarrow\theta_k}^{\Phi_i}(\cdot)\right])}\,,
\end{align}
\end{subequations}
where the expectations are over the random variables $\mathcal{R}^{(t)}_\Omega$ and $\left\{\mathcal{Q}^{(t)}_\Phi,\mathcal{Y}\right\}$. To prove the convergence, we assume the following adapted assumptions for MMSE parameter estimation.
\smallskip
\begin{assumption}
\label{aspt_pe_gamp_mmse}
The priors on the parameters: $\{p(\boldsymbol\lambda),\boldsymbol\lambda\in\mathcal{U}_\lambda\}$, $\{p(\boldsymbol\theta),\boldsymbol\theta\in\mathcal{U}_\theta\}$ and the parameter estimation functions should satisfy:
\begin{enumerate}[label={\alph*)}, nolistsep]
\item Assumption \ref{aspt_pe_gamp_map}(a).
\item Assumption \ref{aspt_pe_gamp_map}(c).
\item Assumption \ref{aspt_pe_gamp_map}(d).
\end{enumerate}
\end{assumption}
\bigskip
\subsubsection{Empirical Convergence Analysis}
We next give the following Lemma \ref{lemma:map_pl_continuity} about the estimation functions $f_{\Omega_n\leftarrow\lambda_l}(\cdot), f_{\Phi_m\leftarrow\theta_k}(\cdot)$ for the proposed PE-GAMP:
\smallskip
\begin{lemma}
\label{lemma:map_pl_continuity}
Under Assumption \ref{aspt_pe_gamp_map} for MAP parameter estimation and Assumption \ref{aspt_pe_gamp_mmse} for MMSE parameter estimation, the estimation functions $f_{\Omega_n\leftarrow\lambda_l}\left(t, \vr_{\Omega}^{(t)}, \tau_{\Omega}^r(t), \lambda_l, \hat{\boldsymbol\lambda}_{\Omega_n}^{(t)}\backslash\hat{\lambda}_{\Omega_n\leftarrow\lambda_l}^{(t)}\right)$ can be considered as a function of $\vr_{\Omega}^{(t)}$ that satisfies the weak pseudo-Lipschitz continuity property: If the sequence of vector $\vr_{\Omega}^{(t)}$ indexed by $N$ empirically converges with bounded moments of order $k=2$ and the sequence of scalers $\tau_{\Omega}^r(t), \left.\hat{\boldsymbol\lambda}_{\Omega_n}^{(t)}\backslash\hat{\lambda}_{\Omega_n\leftarrow\lambda_l}^{(t)}\right.$ also converge as follows:
\begin{subequations}
\begin{align}
\label{eq:rv_convergence}
&\lim_{N\rightarrow\infty}\vr_{\Omega}^{(t)}\stackrel{\normalfont\textrm{PL$(k)$}}{=}\mathcal{R}_\Omega^{(t)}\\
&\lim_{N\rightarrow\infty}\tau_{\Omega}^r(t)=\overline{\tau}_\Omega^r(t)\\
&\lim_{N\rightarrow\infty}\left.\hat{\boldsymbol\lambda}_{\Omega_n}^{(t)}\backslash\hat{\lambda}_{\Omega_n\leftarrow\lambda_l}^{(t)}\right.=\left.\overline{\boldsymbol\lambda}_{\Omega_n}^{(t)}\backslash\overline{\lambda}_{\Omega_n\leftarrow\lambda_l}^{(t)}\right.\,.
\end{align}
\end{subequations}
Then,
\begin{align}
\label{eq:eca_conclusion}
\begin{split}
&\lim_{N\rightarrow\infty}f_{\Omega_n\leftarrow\lambda_l}\left(t, \vr_{\Omega}^{(t)}, \tau_{\Omega}^r(t), \lambda_l, \hat{\boldsymbol\lambda}_{\Omega_n}^{(t)}\backslash\hat{\lambda}_{\Omega_n\leftarrow\lambda_l}^{(t)}\right)\\
&\quad=f_{\Omega_n\leftarrow\lambda_l}\left(t,\mathcal{R}^{(t)}_\Omega, \overline{\tau}_\Omega^r(t), \lambda_l, \overline{\boldsymbol\lambda}_{\Omega_n}^{(t)}\backslash\overline{\lambda}_{\Omega_n\leftarrow\lambda_l}^{(t)}\right)\,.
\end{split}
\end{align} 
Similarly, $f_{\Phi_m\leftarrow\theta_k}\left(t, \vq_{\Phi}^{(t)}, \vy, \tau_{\Phi}^q(t), \theta_k, \hat{\boldsymbol\theta}_{\Phi_m}^{(t)}\backslash\hat{\theta}_{\Phi_m\leftarrow\theta_k}^{(t)}\right)$ also satisfies the weak pseudo-Lipschitz continuity property.
\end{lemma}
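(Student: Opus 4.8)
The plan is to treat the MAP estimator in (\ref{eq:map_lambda_theta_h}) and the MMSE estimator in (\ref{eq:mmse_paramters_state_evolution}) in parallel, reducing (\ref{eq:eca_conclusion}) in each case to three ingredients: (i) for every fixed $\lambda_l\in\mathcal{U}_\lambda$, convergence of the finite-$N$ objective $\frac{1}{N-1}\sum_{j\neq n}h_{\Omega_n\leftarrow\lambda_l}^{\Omega_j}(\cdot)$ to its expectation; (ii) an upgrade of this pointwise-in-$\lambda_l$ convergence to convergence uniform over the compact set $\mathcal{U}_\lambda$; and (iii) passing the limit through the $\arg\max$ over $\lambda_l$ (MAP case) or through the ratio of integrals over $\lambda_l$ (MMSE case). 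The claim for $f_{\Phi_m\leftarrow\theta_k}$ then follows by the identical argument, with the scalar $r_{\Omega_n}$ replaced by the pair $(q_{\Phi_m},y_m)$, with $\mathcal{U}_\theta$ in place of $\mathcal{U}_\lambda$, and with the joint empirical convergence of $(\vq_\Phi^{(t)},\vy)$ used in place of that of $\vr_\Omega^{(t)}$.

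For ingredient (i) I would fix $\lambda_l$ and the \emph{limiting} nuisance arguments $\overline{\tau}_\Omega^r(t)$, $\overline{\boldsymbol\lambda}_{\Omega_n}^{(t)}\backslash\overline{\lambda}_{\Omega_n\leftarrow\lambda_l}^{(t)}$, and view $r\mapsto h_{\Omega_n\leftarrow\lambda_l}^{\Omega_j}(t,r,\overline{\tau}_\Omega^r(t),\lambda_l,\overline{\boldsymbol\lambda}_{\Omega_n}^{(t)}\backslash\overline{\lambda}_{\Omega_n\leftarrow\lambda_l}^{(t)})$ as a scalar function that is pseudo-Lipschitz of order $2$ by Assumption \ref{aspt_pe_gamp_map}(c). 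Since $\vr_\Omega^{(t)}$ empirically converges with bounded moments of order $k=2$ to $\mathcal{R}_\Omega^{(t)}$, applying the definition of empirical convergence to this test function gives $\frac{1}{N-1}\sum_{j\neq n}h_{\Omega_n\leftarrow\lambda_l}^{\Omega_j}(\cdot)\rightarrow\mathbb{E}[h_{\Omega_n\leftarrow\lambda_l}^{\Omega_j}(\cdot)]$ with all arguments frozen at their limits. It then remains to substitute the actual finite-$N$ nuisance scalars $\tau_\Omega^r(t)$, $\hat{\boldsymbol\lambda}_{\Omega_n}^{(t)}\backslash\hat{\lambda}_{\Omega_n\leftarrow\lambda_l}^{(t)}$ for their limits; this perturbation is absorbed using the uniform-over-$r$ continuity of $h$ in $(\tau,\boldsymbol\lambda)$ stated in the $\epsilon$-neighborhood clause of Assumption \ref{aspt_pe_gamp_map}(c), so that once $(\tau_\Omega^r(t),\hat{\boldsymbol\lambda}_{\Omega_n}^{(t)}\backslash\hat{\lambda}_{\Omega_n\leftarrow\lambda_l}^{(t)})$ enters the prescribed neighborhood the two averages differ by at most $\epsilon$ uniformly in the $r$'s.

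For ingredient (ii), compactness of $\mathcal{U}_\lambda$ together with the continuity of $h$ in $\lambda_l$ uniform over $r$ makes the family $\{\lambda_l\mapsto\frac{1}{N-1}\sum_{j\neq n}h_{\Omega_n\leftarrow\lambda_l}^{\Omega_j}(\cdot)\}_N$ uniformly equicontinuous; combining equicontinuity with the pointwise convergence from (i) at the centers of a finite cover of $\mathcal{U}_\lambda$ upgrades it to uniform convergence, and hence (as $\exp$ is uniformly continuous on a bounded range) $\exp(\frac{1}{N-1}\sum_{j\neq n}h_{\Omega_n\leftarrow\lambda_l}^{\Omega_j}(\cdot))$ converges to $\exp(\mathbb{E}[h_{\Omega_n\leftarrow\lambda_l}^{\Omega_j}(\cdot)])$ uniformly on $\mathcal{U}_\lambda$. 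For ingredient (iii) in the MAP case, uniform convergence of the objective to a limit that has a unique maximizer over the compact $\mathcal{U}_\lambda$ (Assumption \ref{aspt_pe_gamp_map}(b)) forces the finite-$N$ maximizers to converge to that maximizer, which is exactly (\ref{eq:eca_conclusion}); this is the classical $\arg\max$ consistency argument. In the MMSE case, the uniform convergence of the exponentials on $\mathcal{U}_\lambda$, the boundedness of $\lambda_l$ on $\mathcal{U}_\lambda$ (Assumption \ref{aspt_pe_gamp_mmse}(a)), and the strict positivity and finiteness of the normalizing integral $\int_{\mathcal{U}_\lambda}\exp(\mathbb{E}[h_{\Omega_n\leftarrow\lambda_l}^{\Omega_j}(\cdot)])$ let dominated convergence carry the numerator and the denominator of the Bayes ratio in (\ref{eq:mmse_paramters_state_evolution}) to their limits, establishing (\ref{eq:eca_conclusion}) for the MMSE estimator.

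The hard part is ingredient (ii) and its interface with (i): the empirical-convergence hypothesis on $\vr_\Omega^{(t)}$ only certifies convergence for one pseudo-Lipschitz test function at a time, so to control the objective uniformly in $\lambda_l$ one must verify that the order-$2$ pseudo-Lipschitz constant of $h_{\Omega_n\leftarrow\lambda_l}^{\Omega_j}$ stays bounded as $\lambda_l$ ranges over $\mathcal{U}_\lambda$, while simultaneously tracking the perturbations of $\tau_\Omega^r(t)$ and $\hat{\boldsymbol\lambda}_{\Omega_n}^{(t)}\backslash\hat{\lambda}_{\Omega_n\leftarrow\lambda_l}^{(t)}$. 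Compactness of $\mathcal{U}_\lambda$ (resp.\ $\mathcal{U}_\theta$) and the uniform-continuity clauses of Assumptions \ref{aspt_pe_gamp_map}(c)--(d) are precisely what defeat this, but the bookkeeping --- choosing the finite cover, interchanging the $N\rightarrow\infty$ limit with the cover refinement, and confirming that the MMSE normalizer stays bounded away from $0$ --- is where the real work lies; once uniform convergence is secured, the passage through $\arg\max$ and through the integral ratio is routine.
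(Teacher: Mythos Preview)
Your proposal is correct and uses the same ingredients as the paper's proof: compactness of $\mathcal{U}_\lambda$, the pseudo-Lipschitz continuity of $h_{\Omega_n\leftarrow\lambda_l}^{\Omega_j}$ in $r$, the uniform-over-$r$ continuity in the scalar arguments, and (for MAP) uniqueness of the limiting maximizer. The one technical difference is in how the MAP $\arg\max$ limit is taken. The paper does not build up uniform-in-$\lambda_l$ convergence of the objective; instead it argues directly by subsequences: extract any accumulation point of the finite-$N$ maximizers (using compactness of $\mathcal{U}_\lambda$), use the defining inequality $\sum_{j\neq n}h(\cdot,\hat\lambda[N])\geq\sum_{j\neq n}h(\cdot,\lambda_l^*)$, pass both sides to expectations via empirical convergence (absorbing the moving scalar arguments with the uniform-continuity clause), and then invoke Assumption~\ref{aspt_pe_gamp_map}(b) to force the accumulation point to be $\lambda_l^*$. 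Your route through equicontinuity and a finite cover is the standard M-estimator alternative and is arguably more careful about the step where the test function itself depends on $N$ through $\hat\lambda[N]$; the paper handles that step somewhat informally. The MMSE case is essentially identical in both treatments.
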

\begin{proof}
Please refer to Appendix \ref{app:proof_lemma}.
\end{proof}
Additionally, we make the following assumptions about the proposed PE-GAMP algorithm.
\smallskip
\begin{assumption}
\label{aspt_pe_gamp}
The PE-GAMP solves a series of estimation problems indexed by the input signal dimension $N$:
\begin{enumerate}[label={\alph*)}, nolistsep]
\item Assumptions \ref{aspt_gamp}(a) to \ref{aspt_gamp}(d) with $k=2$.
\item The scalar estimation function $g_\textrm{in}(t, r_{\Omega_n}, \tau_\Omega^r, \boldsymbol\lambda)$ and its derivative $g_\textrm{in}^\prime(t, r_{\Omega_n}, \tau_\Omega^r, \boldsymbol\lambda)$ with respect to $r_{\Omega_n}$ are continuous in $\boldsymbol\lambda$ uniformly over $r_{\Omega_n}$: For every $\epsilon>0,t,\widetilde{\tau}_\Omega^r, \widetilde{\boldsymbol\lambda}\in\mathcal{U}_\lambda$, there exists an open neighborhood $\rho(\widetilde{\tau}_\Omega^r, \widetilde{\boldsymbol\lambda})$ of $(\widetilde{\tau}_\Omega^r, \widetilde{\boldsymbol\lambda}\in\mathcal{U}_\lambda)$ such that $\forall (\tau_\Omega^r, \boldsymbol\lambda)\in\rho(\widetilde{\tau}_\Omega^r, \widetilde{\boldsymbol\lambda})$ and $r$,
\begin{subequations}
\begin{align}
&|g_\textrm{in}(t, r_{\Omega_n}, \tau_\Omega^r, \boldsymbol\lambda)-g_\textrm{in}(t, r_{\Omega_n}, \widetilde{\tau}_\Omega^r, \widetilde{\boldsymbol\lambda})|<\epsilon\\
&|g_\textrm{in}^\prime(t, r_{\Omega_n}, \tau_\Omega^r, \boldsymbol\lambda)-g_\textrm{in}^\prime(t, r_{\Omega_n}, \widetilde{\tau}_\Omega^r, \widetilde{\boldsymbol\lambda})|<\epsilon\,.
\end{align}
\end{subequations}
In addition, $g_\textrm{in}(\cdot), g_\textrm{in}^\prime(\cdot)$ is pseudo-Lipschitz continuous in $r_{\Omega_n}$ with a Lipschitz constant that can be selected continuously in $\tau_\Omega^r$ and $\boldsymbol\lambda$. $g_\textrm{out}(t,q_{\Phi_m},\tau_\Phi^q,y_m, \boldsymbol\theta), g_\textrm{out}^\prime(t,q_{\Phi_m},\tau_\Phi^q,y_m, \boldsymbol\theta)$ also satisfy analogous continuity assumptions with respect to $q,y,\tau_\Phi^q,\boldsymbol\theta$.
\item For each $m=1,\cdots,M$ and $n=1,\cdots,N$, the components of the initial condition $\hat{\boldsymbol\lambda}_{\Omega_n}^{(0)}, \hat{\boldsymbol\theta}_{\Phi_m}^{(0)}$ converge as follows:
\begin{align}
\label{eq:pe_gamp_parameter_init}
\lim_{N\rightarrow\infty}(\hat{\boldsymbol\lambda}_{\Omega_n}^{(0)}, \hat{\boldsymbol\theta}_{\Phi_m}^{(0)})=(\overline{\boldsymbol\lambda}_{\Omega_n}^{0}, \overline{\boldsymbol\theta}_{\Phi_m}^{(0)})\,.
\end{align}
\end{enumerate}
\end{assumption}
Specifically, Assumptions \ref{aspt_pe_gamp}(a) and \ref{aspt_pe_gamp}(b) are the same as those in \cite{AMP_CPE14}; Assumptions \ref{aspt_pe_gamp}(c) is made for the proposed PE-GAMP. We then have the following Corollary \ref{crl:pe_gamp_convergence} using Theorem \ref{thm:adaptive_gamp}:
\smallskip
\begin{corollary}
\label{crl:pe_gamp_convergence}
Consider the proposed PE-GAMP with scalar variances under the Assumptions $\lceil$\ref{aspt_pe_gamp_map},\ref{aspt_pe_gamp}$\rfloor$ for MAP parameter estimation and Assumptions $\lceil$\ref{aspt_pe_gamp_mmse},\ref{aspt_pe_gamp}$\rfloor$ for MMSE parameter estimation. Then for any fixed iteration number $t$: the scalar components of (\ref{eq:pe_gamp_parameters_estimate},\ref{eq:rv_gamp}) empirically converge with bounded moments of order $k=2$ as follows:
\begin{subequations}
\begin{align}
&\lim_{N\rightarrow\infty}\boldsymbol\psi_\textrm{in}\stackrel{\normalfont\textrm{PL$(k)$}}{=}\overline{\boldsymbol\psi}_\textrm{in},\quad\lim_{N\rightarrow\infty}\boldsymbol\psi_\textrm{out}\stackrel{\normalfont\textrm{PL$(k)$}}{=}\overline{\boldsymbol\psi}_\textrm{out}\\
&\lim_{N\rightarrow\infty}\boldsymbol\psi_\tau=\overline{\boldsymbol\psi}_\tau\\
&\lim_{N\rightarrow\infty}\hat{\boldsymbol\theta}_{\Phi_m}^{(t+1)}=\overline{\boldsymbol\theta}_{\Phi_m}^{(t+1)},\quad\lim_{N\rightarrow\infty}\hat{\boldsymbol\lambda}_{\Omega_n}^{(t+1)}=\overline{\boldsymbol\lambda}_{\Omega_n}^{(t+1)}\,.
\end{align}
\end{subequations}
\end{corollary}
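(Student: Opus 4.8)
The plan is to recognize Corollary~\ref{crl:pe_gamp_convergence} as an instance of the adaptive-GAMP state-evolution theorem, Theorem~\ref{thm:adaptive_gamp} of \cite{AMP_CPE14} recalled in Appendix~\ref{app:se_adaptive_gamp}, and to verify that PE-GAMP (Algorithm~\ref{alg:pe_gamp}) meets every hypothesis of that theorem. The first step is to cast Algorithm~\ref{alg:pe_gamp} in the adaptive-GAMP template: the output/input linear and nonlinear updates are exactly the ordinary GAMP updates, only with $g_\textrm{out}$ and $g_\textrm{in}$ evaluated at the running parameter estimates $\hat{\boldsymbol\theta}_{\Phi_m}^{(t)}$ and $\hat{\boldsymbol\lambda}_{\Omega_n}^{(t)}$, while the parameter-update step~(\ref{eq:pe_gamp_parameters_estimate}) plays the role of the ``adaptation'' step that produces the next parameter vectors from the current state $(\vr_{\Omega}^{(t)},\boldsymbol\tau_\Omega^r(t))$ and $(\vq_{\Phi}^{(t)},\vy,\boldsymbol\tau_\Phi^q(t))$. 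Under this correspondence the deterministic state-evolution recursion is precisely Algorithm~\ref{alg:pe_gamp_se}, and by exchangeability of the coordinates under the i.i.d.\ Gaussian matrix the per-node limits $\overline{\boldsymbol\lambda}_{\Omega_n}^{(t+1)},\overline{\boldsymbol\theta}_{\Phi_m}^{(t+1)}$ are in fact independent of $n,m$.

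The second step is the hypothesis checklist of Theorem~\ref{thm:adaptive_gamp}. The matrix, dimension, signal and noise assumptions are supplied by Assumption~\ref{aspt_pe_gamp}(a), which restates Assumptions~\ref{aspt_gamp}(a)--(d) with $k=2$; the required joint continuity of $g_\textrm{in},g_\textrm{out}$ and their derivatives in their argument and in the channel parameters, uniformly over the data argument, together with pseudo-Lipschitz constants depending continuously on $(\tau_\Omega^r,\boldsymbol\lambda)$ resp.\ $(\tau_\Phi^q,\boldsymbol\theta)$, is exactly Assumption~\ref{aspt_pe_gamp}(b); the convergence of the initial parameter estimates is Assumption~\ref{aspt_pe_gamp}(c), eq.~(\ref{eq:pe_gamp_parameter_init}); and the adaptation functions $f_{\Omega_n\leftarrow\lambda_l}(\cdot)$, $f_{\Phi_m\leftarrow\theta_k}(\cdot)$ are required to satisfy the weak pseudo-Lipschitz continuity property --- which is precisely the content of Lemma~\ref{lemma:map_pl_continuity}, valid under Assumption~\ref{aspt_pe_gamp_map} in the MAP case and Assumption~\ref{aspt_pe_gamp_mmse} in the MMSE case. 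With these in hand, Theorem~\ref{thm:adaptive_gamp} yields, for each fixed $t$, the PL$(2)$ convergence of $\boldsymbol\psi_\textrm{in}$ and $\boldsymbol\psi_\textrm{out}$, the ordinary convergence of $\boldsymbol\psi_\tau$, and the convergence of the adaptation outputs $\hat{\boldsymbol\lambda}_{\Omega_n}^{(t+1)},\hat{\boldsymbol\theta}_{\Phi_m}^{(t+1)}$ to the state-evolution values of Algorithm~\ref{alg:pe_gamp_se}. I would organize the argument as an induction on $t$: the base case $t=0$ is the initialization hypothesis, and the inductive step is a single application of the one-iteration lemma underlying Theorem~\ref{thm:adaptive_gamp}, feeding the converged state at iteration $t$ through the linear/nonlinear updates and then through the adaptation step via Lemma~\ref{lemma:map_pl_continuity}.

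The point that needs care --- and where essentially all of the real work sits --- is that PE-GAMP's parameter update is a \emph{global} functional of the entire vector $\vr_{\Omega}^{(t)}$ (through the leave-one-out sum $\frac{1}{N-1}\sum_{j\neq n}h_{\Omega_n\leftarrow\lambda_l}^{\Omega_j}$ inside the $\arg\max$ of~(\ref{eq:map_lambda_h}), and likewise for $\theta$) rather than a coordinatewise scalar map. Theorem~\ref{thm:adaptive_gamp} is designed to accommodate exactly such adaptation functionals provided they are weakly pseudo-Lipschitz, so that burden is discharged by Lemma~\ref{lemma:map_pl_continuity}; at the level of this corollary one only has to note that (a) the normalization $1/(N-1)$ and the omission of the single index $j=n$ are asymptotically negligible, so $\frac{1}{N-1}\sum_{j\neq n}h^{\Omega_j}$ converges to the same expectation $\mathbb{E}[h_{\Omega_n\leftarrow\lambda_l}^{\Omega_j}(\cdot)]$ that defines the state-evolution map in~(\ref{eq:sp_pe_state_evolution}), and (b) the $\arg\max$, resp.\ the conditional-mean integral in the MMSE case, is well defined and unique by Assumption~\ref{aspt_pe_gamp_map}(b) over the compact sets $\mathcal{U}_\lambda,\mathcal{U}_\theta$ of Assumption~\ref{aspt_pe_gamp_map}(a), so that the empirical maximizer converges to the population maximizer by a standard uniform-convergence-of-the-objective argument. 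Finally, since parameter estimation in the \emph{max-sum} message passing and the MAP parameter estimation in the \emph{sum-product} message passing share the common form~(\ref{eq:map_lambda_theta_h}), the same verification covers the \emph{max-sum} case with no additional argument.
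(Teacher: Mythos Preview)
Your proposal is correct and follows essentially the same route as the paper's own proof: verify that Lemma~\ref{lemma:map_pl_continuity} supplies the weak pseudo-Lipschitz continuity of the adaptation functions (i.e., Assumption~\ref{aspt:adaptive_gamp}(c)), invoke Theorem~\ref{thm:adaptive_gamp} to obtain the PL$(2)$ convergence of $\boldsymbol\psi_\textrm{in},\boldsymbol\psi_\textrm{out},\boldsymbol\psi_\tau$, and then close the loop on the parameter estimates by induction on $t$ with the base case given by Assumption~\ref{aspt_pe_gamp}(c). Your write-up is more detailed (the remarks on the leave-one-out normalization and on uniqueness of the $\arg\max$ from Assumption~\ref{aspt_pe_gamp_map}(a,b) are useful glosses), but the structure is the same.
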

\begin{proof}
Please refer to Appendix \ref{app:proof_corollary}.
\end{proof}

\section{Numerical Results}
\label{sec:numerical_results}
Depending on the various sparse signal recovery tasks, we can assume the sparse signal $\vx$ and the noise $\vw$ are generated from the following input and output channels:
\begin{itemize}
\item {\bfseries Bernoulli-Gaussian mixture (BGm) Input Channel:}
The sparse signal $\vx\in\mathbb{R}^N$ can be modeled as a mixture of Bernoulli and Gaussian mixture distributions:
\begin{align}
\label{eq:bgm_distribution}
\begin{split}
p(x_j|\boldsymbol\lambda)&= (1-\lambda_1)\delta(x_j)\\
&\quad\quad+\lambda_1\sum_{c=1}^C\lambda_{c+1}\cdot\mathcal{N}(x_j;\lambda_{c+2},\lambda_{c+3})\,,
\end{split}
\end{align}
where $x_j\in\mathbb{R}$; $\delta(\cdot)$ is Dirac delta function; $\lambda_1\in[0,1]$ is the sparsity rate; for the $c$-th Gaussian mixture, $\lambda_{c+1}\in[0,1]$ is the mixture weight, $\lambda_{c+2}\in\mathbb{R}$ is the nonzero coefficient mean and $\lambda_{c+3}\in(0,\infty)$ is the nonzero coefficient variance; all the mixture weights should sum to $1$: $\sum_{c=1}^C\lambda_{3c-1}=1$.
\item {\bfseries Bernoulli-Exponential mixture (BEm) Input Channel:}
Nonnegative sparse signal $\vx\in\mathbb{R}^N$ can be modeled as a mixture of Bernoulli and Exponential mixture distributions:
\begin{align}
\label{eq:bem_distribution}
\begin{split}
p(x_j|\boldsymbol\lambda) &= (1-\lambda_1)\delta(x_j)\\
&\quad+\lambda_1\sum_{c=1}^C\lambda_{c+1}\cdot\lambda_{c+2}\exp\left(-\lambda_{c+2}x_j\right)\,,
\end{split}
\end{align}
where $x_j\in[0, \infty)$; $\lambda_1\in[0,1]$ is the sparsity rate; for the $c$-th Exponential mixture, $\lambda_{c+1}\in[0,1]$ is the mixture weight and $\lambda_{c+2}\in(0,\infty)$; all the mixture weights should sum to $1$: $\sum_{c=1}^C\lambda_{2c}=1$.
\item {\bfseries Laplace Input Channel:}
The sparse signal $\vx\in\mathbb{R}^N$ follows the following Laplace distribution:
\begin{align}
\label{eq:laplace_distribution}
p(x_j|\boldsymbol\lambda) = \frac{\lambda_1}{2}\exp\left(-\lambda_1|x_j|\right)\,,
\end{align}
where $x_j\in\mathbb{R}$; $\lambda_1\in(0,\infty)$.
\item {\bfseries Additive White Gaussian Noise (AWGN) Output Channel:}
The noise $\vw\in\mathbb{R}^M$ is assumed to be white Gaussian noise:
\begin{align}
\label{eq:adgn_distribution}
p(w_i|\boldsymbol\theta) = \mathcal{N}(w_i;0,\theta_1)\,,
\end{align}
where $w_i\in\mathbb{R}$ is the noise; $\theta_1\in(0,\infty)$ is its variance.
\end{itemize}
Using the above channels we can create three sparse signal recovery models: 1) BGm + AWGN; 2) BEm + AWGM; 3) Laplace + AWGN. 

\subsection{MAP Parameter Estimation}
As is shown in Appendix \ref{app:map_pe}, for the models with BGm and BEm input channels, \emph{max-sum} message passing cannot be used to perform the inference task on the sparse signal since the maximizing $\vx$ in (\ref{eq:ms_fv_phi_x},\ref{eq:ms_fv_omega_lambda},\ref{eq:ms_fv_phi_theta}) would be all zeros. (\ref{eq:pm_x}) from \emph{sum-product} message passing cannot produce any useful MAP estimation of $\vx$ for the same reason. In this case, we can only use \emph{sum-product} message passing to perform MMSE estimation of $\vx$. 

For the model with Laplace input channel, although \emph{max-sum} message passing can be used to obtain the MAP estimation of $\vx$, it cannot be used to compute the MAP estimation of $\lambda_1$, since the $\hat{\lambda}_1$ that maximizes (\ref{eq:ms_pe_map}) is always $\infty$ and the maximizing $\hat{\theta}_1$ is always $0$. On the other hand, \emph{sum-product} message passing can be used to compute the MMSE estimation and MAP estimation of $x_n$ based on $p(x_n|\vy)$, however they doesn't have the best recovery performance. Here we propose to employ \emph{sum-product} message passing to compute the ``marginal'' MAP estimates $\{\hat{\lambda}_1,\hat{\theta}_1\}$ using the marginal posterior distributions $p(\lambda_1|\vy), p(\theta_1|\vy)$, as opposed to the MAP estimates in (\ref{eq:map_pe_gamp}). $\{\hat{\lambda}_1,\hat{\theta}_1\}$ can then be used as the inputs to \emph{max-sum} message passing to obtain the MAP estimate of $\vx$. This essentially is the Lasso mentioned at the beginning of this paper, except now that we have provided a way to automatically estimate the parameters.

In this case, the two recovery models mentioned earlier both rely on \emph{sum-product} message passing to perform parameter estimation. For the \emph{sum-product} message passing, ``MMSE parameter estimation'' is often quite difficult to compute, in this paper we will focus on using the ``MAP parameter estimation'' approach to estimate the parameters. Since we don't have any knowledge about the priors of $\boldsymbol\lambda,\boldsymbol\theta$, we will fairly choose the \emph{uniform} prior for each parameter. 

The proposed PE-GAMP computes MAP estimations of the parameters in the \emph{sum-product} message passing as follows:
\begin{subequations}
\label{eq:map_pe_gamp}
\begin{align}
\begin{split}
\hat{\lambda}_{\Omega_n\leftarrow\lambda_l}^{(t+1)}&=\arg\max_{\lambda_l}h_{\Omega_n\leftarrow\lambda_l}^{(t+1)}(\cdot)\\
&=\arg\max_{\lambda_l}\sum_{j\neq n}\Delta^{(t+1)}_{\Omega_j\rightarrow\lambda_l}+\log p(\lambda_l)
\end{split}\\
\begin{split}
\hat{\theta}_{\Phi_m\leftarrow\theta_k}^{(t+1)}&=\arg\max_{\theta_k}h_{\Phi_m\leftarrow\theta_k}^{(t+1)}(\cdot)\\
&=\arg\max_{\theta_k}\sum_{i\neq k}\Delta^{(t+1)}_{\Phi_i\rightarrow\theta_k} + \log p(\theta_k)\,.
\end{split}
\end{align}
\end{subequations}

Specifically, we use the line search method given in the following Algorithm \ref{alg:line_search} to find $\hat{\lambda}_{\Omega_n\leftarrow\lambda_l}^{(t+1)}$. 
\begin{algorithm}
\caption{Line Search Method}\label{alg:line_search}
\begin{algorithmic}[1]
\Require $\hat{\lambda}_{\Omega_n\leftarrow\lambda_l}^{(t)}$, $\frac{\partial h_{\Omega_n\leftarrow\lambda_l}^{(t+1)}}{\partial \lambda_l}$, $0<\zeta<1$, $\eta_+>0$, $\eta_-<0$
\State Set $\hat{\lambda}_{\Omega_n\leftarrow\lambda_l}^{(t+1)}(0)=\hat{\lambda}_{\Omega_n\leftarrow\lambda_l}^{(t)}$.
\For {$i=1,2,\cdots$}
\If {$\frac{\partial h_{\Omega_n\leftarrow\lambda_l}^{(t+1)}}{\partial \lambda_l}\left|_{\lambda_l=\hat{\lambda}_{\Omega_n\leftarrow\lambda_l}^{(t+1)}(i-1)}\right.>0$}
\begin{align} 
\hat{\lambda}_{\Omega_n\leftarrow\lambda_l}^{(t+1)}=\hat{\lambda}_{\Omega_n\leftarrow\lambda_l}^{(t+1)}(i-1)+\eta_+\,.
\end{align}
\While {$h_{\Omega_n\leftarrow\lambda_l}^{(t+1)}\left|_{\hat{\lambda}_{\Omega_n\leftarrow\lambda_l}^{(t+1)}}\right.<h_{\Omega_n\leftarrow\lambda_l}^{(t+1)}\left|_{\hat{\lambda}_{\Omega_n\leftarrow\lambda_l}^{(t+1)}(i-1)}\right.$}
\begin{subequations}
\begin{align}
\eta_+&=\eta_+\cdot\zeta\\
\hat{\lambda}_{\Omega_n\leftarrow\lambda_l}^{(t+1)}&=\hat{\lambda}_{\Omega_n\leftarrow\lambda_l}^{(t+1)}(i-1)+\eta_+\,.
\end{align}
\end{subequations}
\EndWhile
\ElsIf {$\frac{\partial h_{\Omega_n\leftarrow\lambda_l}^{(t+1)}}{\partial \lambda_l}\left|_{\lambda_l=\hat{\lambda}_{\Omega_n\leftarrow\lambda_l}^{(t+1)}(i-1)}\right.<0$}
\begin{align}
\hat{\lambda}_{\Omega_n\leftarrow\lambda_l}^{(t+1)}=\hat{\lambda}_{\Omega_n\leftarrow\lambda_l}^{(t+1)}(i-1)+\eta_-
\end{align}
\While {$h_{\Omega_n\leftarrow\lambda_l}^{(t+1)}\left|_{\hat{\lambda}_{\Omega_n\leftarrow\lambda_l}^{(t+1)}}\right.<h_{\Omega_n\leftarrow\lambda_l}^{(t+1)}\left|_{\hat{\lambda}_{\Omega_n\leftarrow\lambda_l}^{(t+1)}(i-1)}\right.$}
\begin{subequations}
\begin{align}
\eta_-&=\eta_-*\zeta\\
\hat{\lambda}_{\Omega_n\leftarrow\lambda_l}^{(t+1)}&=\hat{\lambda}_{\Omega_n\leftarrow\lambda_l}^{(t+1)}(i-1)+\eta_-\,.
\end{align}
\end{subequations}
\EndWhile
\Else
	\State break;
\EndIf
\State Set $\hat{\lambda}_{\Omega_n\leftarrow\lambda_l}^{(t+1)}(i)=\hat{\lambda}_{\Omega_n\leftarrow\lambda_l}^{(t+1)}$
\If {$\hat{\lambda}_{\Omega_n\leftarrow\lambda_l}^{(t+1)}(i)$ reaches convergence}
\State break;
\EndIf
\EndFor
\State\Return Output $\hat{\lambda}_{\Omega_n\leftarrow\lambda_l}^{(t+1)}$;
\end{algorithmic}
\end{algorithm}	

The maximizing $\hat{\theta}_{\Phi_m\leftarrow\theta_k}^{(t+1)}$ can be found similarly. The line search method requires computing the derivatives of $h_{\Omega_n\leftarrow\lambda_l}^{(t+1)}(\cdot), h_{\Phi_m\leftarrow\theta_k}^{(t+1)}(\cdot)$ with respect to the parameters, which are given in Appendix \ref{app:map_pe} for different channels. 

\subsection{Comparison with EM Parameter Estimation}
Here we discuss the differences between the proposed PE-GAMP with MAP parameter estimation and the EM-GAMP with EM parameter estimation \cite{EMBGAMP11,EMGMAMP13}. 

First of all, the EM parameter estimation is essentially maximum likelihood estimation. EM \cite{EM77} tries to find the parameters $\boldsymbol\lambda,\boldsymbol\theta$ that maximize the likelihood $p(\vy|\boldsymbol\lambda), p(\vy|\boldsymbol\theta)$. While the proposed PE-GAMP with MAP parameter estimation tries to maximize the following posterior distributions at nodes $\Omega_n, \Phi_m$ using Bayes' rule:
\begin{subequations}
\begin{align}
&p_{\Omega_n}(\boldsymbol\lambda|\vy)\propto p_{\Omega_n}(\vy|\boldsymbol\lambda)p(\boldsymbol\lambda)\\
&p_{\Phi_m}(\boldsymbol\theta|\vy)\propto p_{\Phi_m}(\vy|\boldsymbol\theta)p(\boldsymbol\theta)\,.
\end{align}
\end{subequations}
Compared to EM estimation, the MAP estimation is able to draw information from the priors $p(\boldsymbol\lambda), p(\boldsymbol\theta))$ to guide the estimation process.

Secondly, the two methods also differ in the way they compute the maximizing parameters. For the sake of simplification and a fair comparison, we will assume the priors of the parameters $p(\boldsymbol\lambda),p(\boldsymbol\theta)$ to be uniform distributions. Specifically, EM treats $\vx,\vw$ as hidden variables and maximizes $\mathbb{E}[\log p(\vx,\vw;\boldsymbol\lambda,\boldsymbol\theta)|\vy,\hat{\boldsymbol\lambda^{(t)}},\hat{\boldsymbol\theta^{(t)}}]$ iteratively until convergence. Take the parameter $\lambda_l$ for example, in the $(t+1)$-th iteration the following expression will be maximized under the GAMP framework \cite{EMGMAMP13}:
\begin{align}
\label{eq:em_maximization}
\begin{split}
&\max_{\lambda_l}\sum_j\int p\left(x_j|\vy,\hat{\lambda}_l^{(t)}\right)\log p(x_j|\lambda_l)\,dx_j\\
&\propto\sum_j\int p\left(x_j|\hat{\lambda}_l^{(t)}\right) \mathcal{N}\left(x_j; r_{\Omega_j}^{(t)}, \tau_{\Omega_j}^r(t)\right)\log p(x_j|\lambda_l)\,dx_j\,,
\end{split}
\end{align}
where $\hat{\lambda}_l^{(t)}$ is the estimated parameter in the previous $t$-th iteration. \cite{EMGMAMP13} gives the closed-form expression for Bernoulli-Gaussian mixture distributions. However, (\ref{eq:em_maximization}) is quite difficult to evaluate for more complicated distributions, which greatly limits its applicabilities. The proposed PE-GAMP with MAP parameter estimation has a much simpler expression though:
\begin{align}
\max_{\lambda_l}\sum_{j\neq n}\log\int p(x_j|\lambda_l)\mathcal{N}\left(x_j; r_{\Omega_j}^{(t)}, \tau_{\Omega_j}^r(t)\right)\,dx_j\,,
\end{align}
This enables us to consider more complex distributions with the proposed PE-GAMP. For instance, in this paper we have included the formulations to estimate the parameters for sparse signals with Laplace prior and Bernoulli-Exponential mixture prior in Appendix \ref{app:map_pe}.

\begin{figure*}[tbp]
\centering
\subfigure[]{
\label{fig:ptc_noiseless_bg}
\includegraphics[height=2.8in]{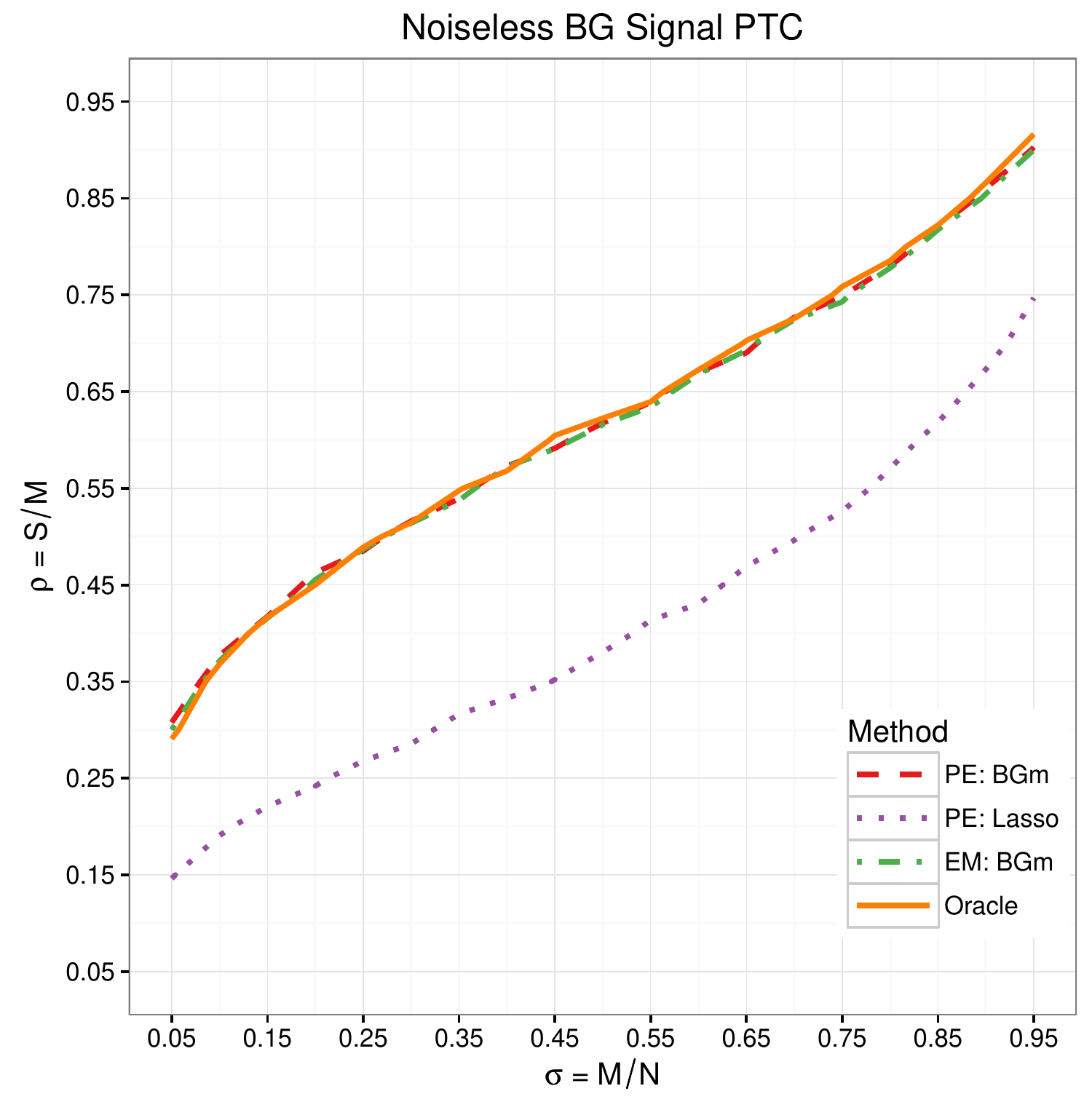}}
\subfigure[]{
\label{fig:ptc_noiseless_nonneg}
\includegraphics[height=2.8in]{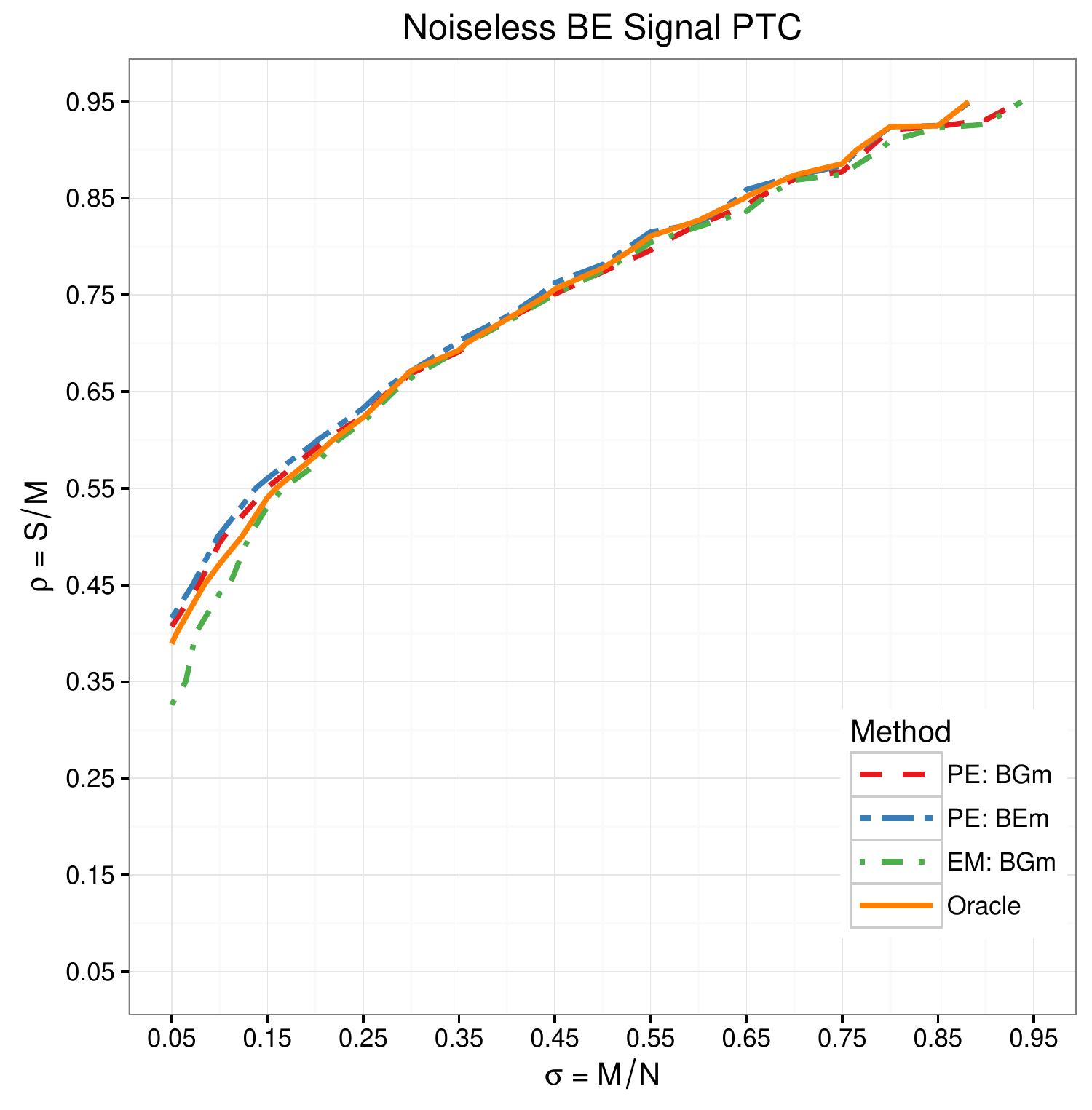}}
\caption{\small The phase transition curves (PTC) of different GAMP methods in the noiseless case. (a) Bernoulli-Gaussian (BG) sparse signal; (b) Bernoulli-Exponential (BE) sparse signal.} 
\label{fig:ptc_noiseless}
\end{figure*}

\begin{figure*}[tbp]
\centering
\subfigure[]{
\label{fig:snr_noisy_bg}
\includegraphics[height=2.8in]{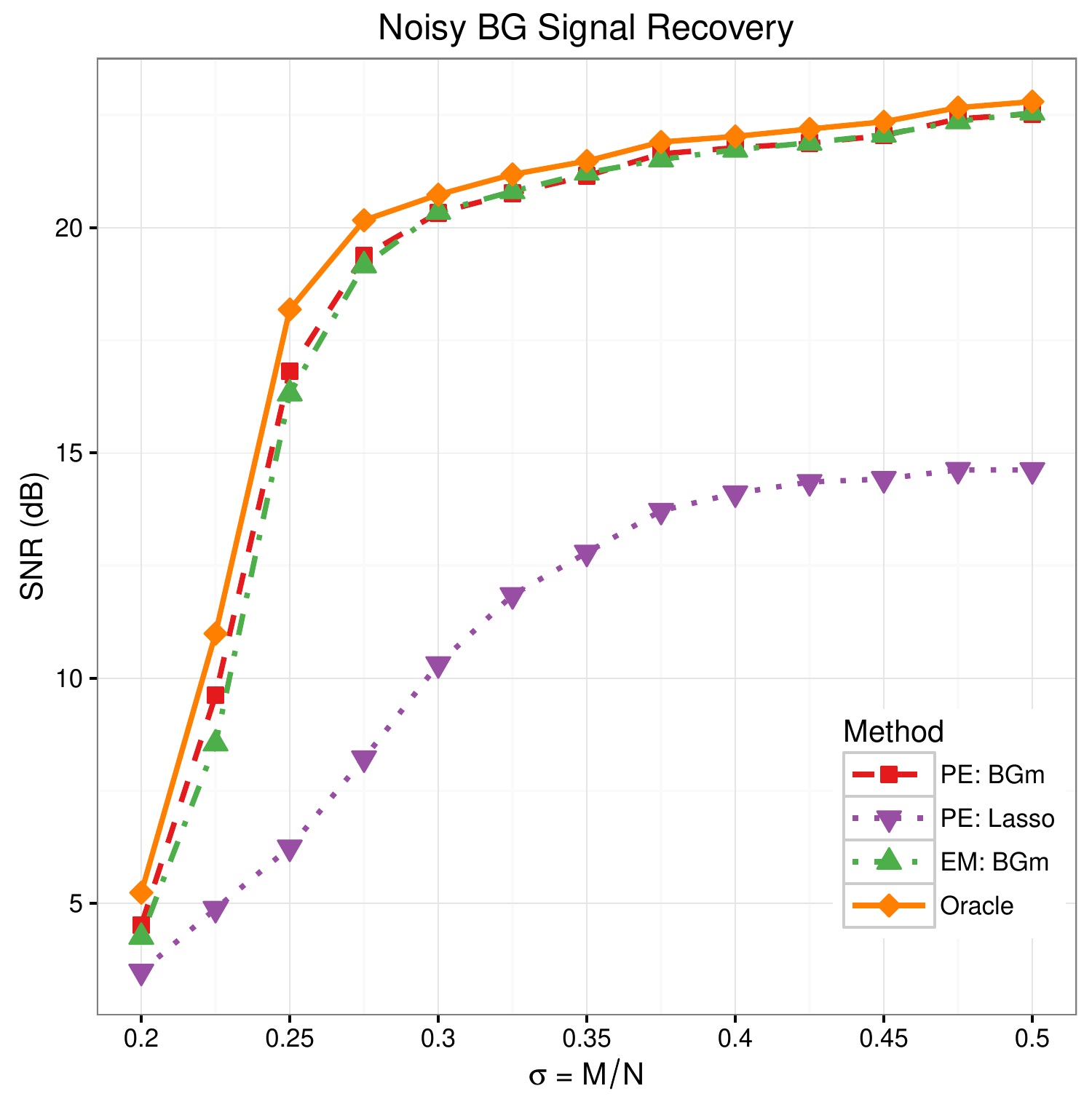}}
\subfigure[]{
\label{fig:snr_noisy_nonneg}
\includegraphics[height=2.8in]{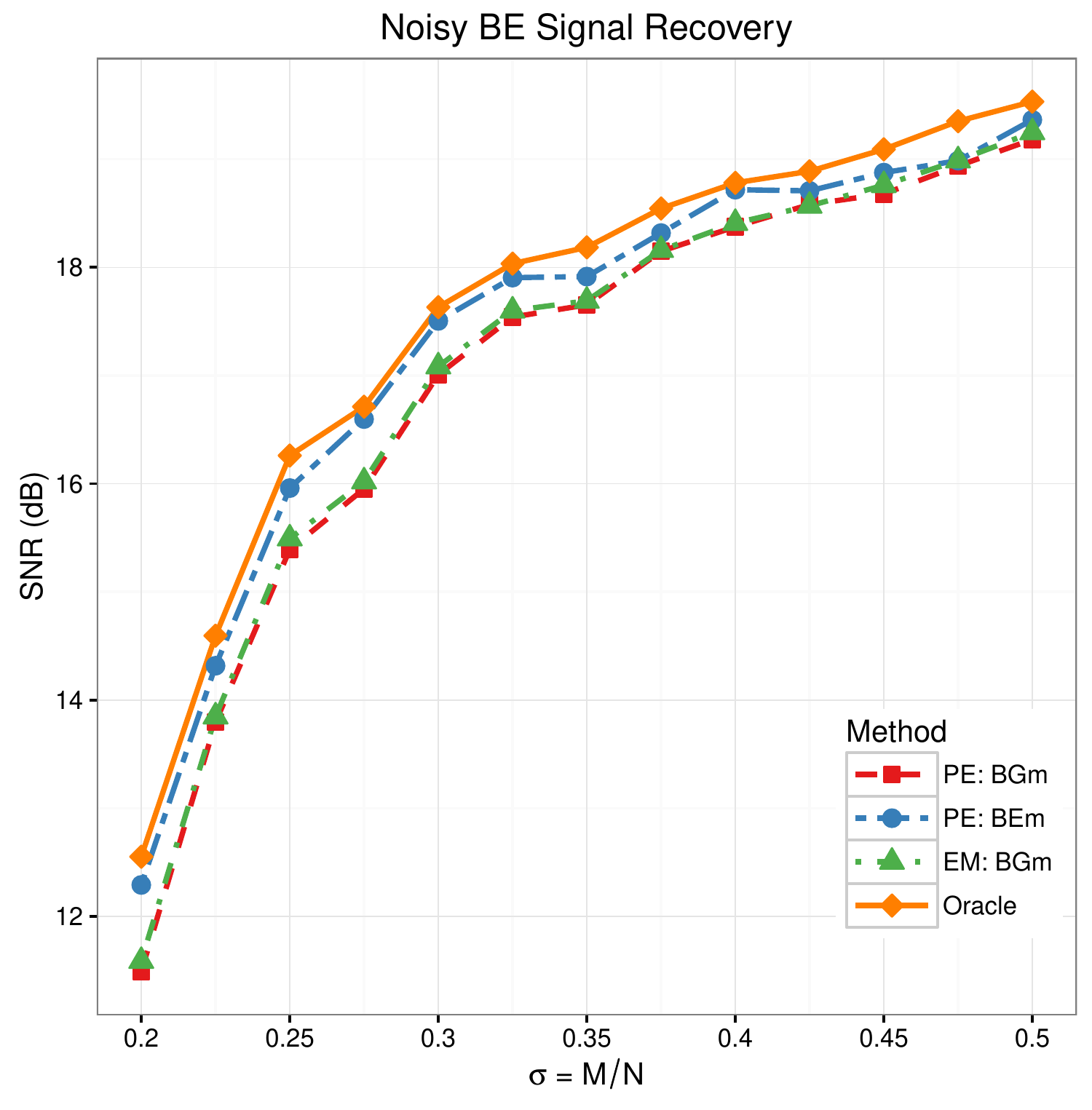}}
\caption{\small The signal-to-noise-ratio (SNR) of the recovered sparse signals using different GAMP methods in the noisy case. (a) Bernoulli-Gaussian (BG) sparse signal; (b) Bernoulli-Exponential (BE) sparse signal.} 
\label{fig:snr_noisy}
\end{figure*}

\subsection{Noiseless Sparse Signal Recovery}
We first perform noiseless sparse signal recovery experiments and compare the empirical phase transition curves (PTC) of PE-GAMP and EM-BGm-GAMP \cite{EMGMAMP13}. Besides, oracle experiments where the ``true'' parameters are known are also performed. Specifically, we fix $N=1000$ and vary the over-sampling ratio $\sigma=\frac{M}{N}\in[0.05,0.1,0.15,\cdots,0.95]$ and the under-sampling ratio $\rho=\frac{S}{M}\in[0.05,0.1,0.15,\cdots,0.95]$, where $S$ is the sparsity of the signal, i.e. the number of nonzero coefficients. For each combination of $\sigma$ and $\rho$, we randomly generate $100$ pairs of $\{\vx,\vA\}$: $A$ is a $M\times N$ random Gaussian matrix with normalized and centralized rows; the \emph{nonzero} entries of the sparse signal $\vx\in\mathbb{R}^N$ are i.i.d. generated according to the following two different distributions:
\begin{enumerate}
\item Gaussian distribution $x\sim\mathcal{N}(0,1)$.
\item Exponential distribution $x\sim \exp(-x)$, $x\geq0$.
\end{enumerate}
In other words, the sparse signals $\vx$ follow Bernoulli-Gaussian (BG) and Bernoulli-Exponential (BE) distributions respectively. Given the measurement vector $\vy=\vA\vx$ and the sensing matrix $\vA$, we try to recover the signal $\vx$. If $\epsilon=\|\vx-\hat{\vx}\|_2/\|\vx\|_2<10^{-3}$, the recovery is considered to be a success. Based on the $100$ trials, we compute the success recovery rate for each combination of $\sigma$ and $\rho$ and plot the PTCs in Fig. \ref{fig:ptc_noiseless}. 

The PTC is the contour that corresponds to the 0.5 success rate in the domain $(\sigma,\rho)\in(0,1)^2$, it divides the domain into a ``success'' phase (lower right) and a ``failure'' phase (upper left). For the BG sparse signals (Fig. \ref{fig:ptc_noiseless_bg}), the PE-BGm-GAMP and EM-BGm-GAMP perform equally well and match the performance of the oracle-GAMP. The BGm prior they assumed about the sparse signal is a perfect match, which is much better than Laplace prior assumed by PE-Lasso-GAMP. 

For the BE sparse signals (Fig. \ref{fig:ptc_noiseless_nonneg}), the BEm prior assumed by PE-BEm-GAMP is the perfect match. However, we can see that the PTC of PE-BGm-GAMP is only slightly worse, the BGm prior is still a strong contestant in this case. Although both PE-BGm-GAMP and EM-BGm-GAMP assume the BGm prior, PE-BGm-GAMP is more robust and performs better than EM-BGm-GAMP when the sampling rate is low. PE-BEm-GAMP is the only one that matches the performance of the oracle-GAMP.

\begin{figure*}[tbp]
\centering
\subfigure[]{
\label{fig:psnr_barbara}
\includegraphics[height=2.8in]{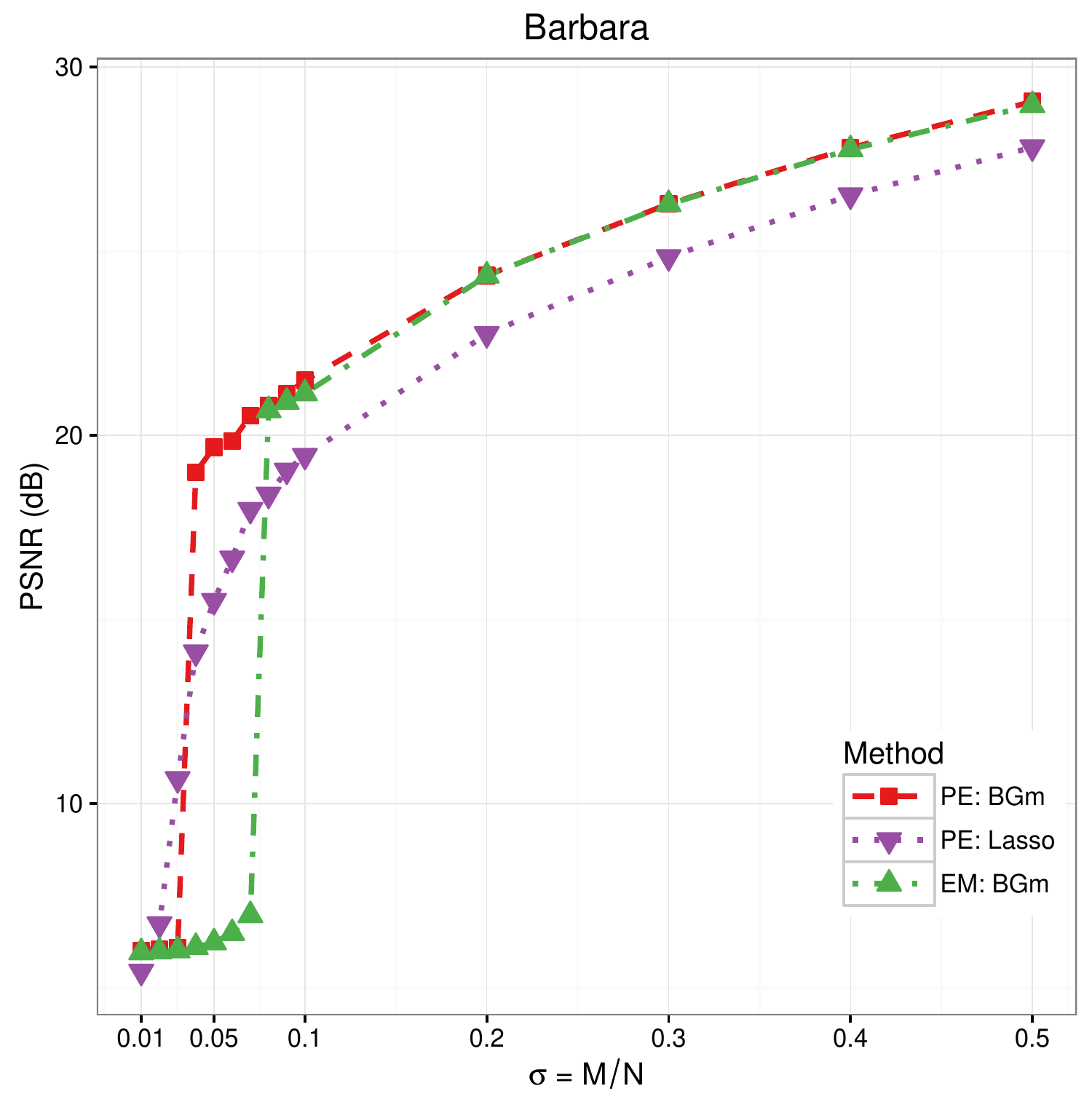}}
\subfigure[]{
\label{fig:psnr_boat}
\includegraphics[height=2.8in]{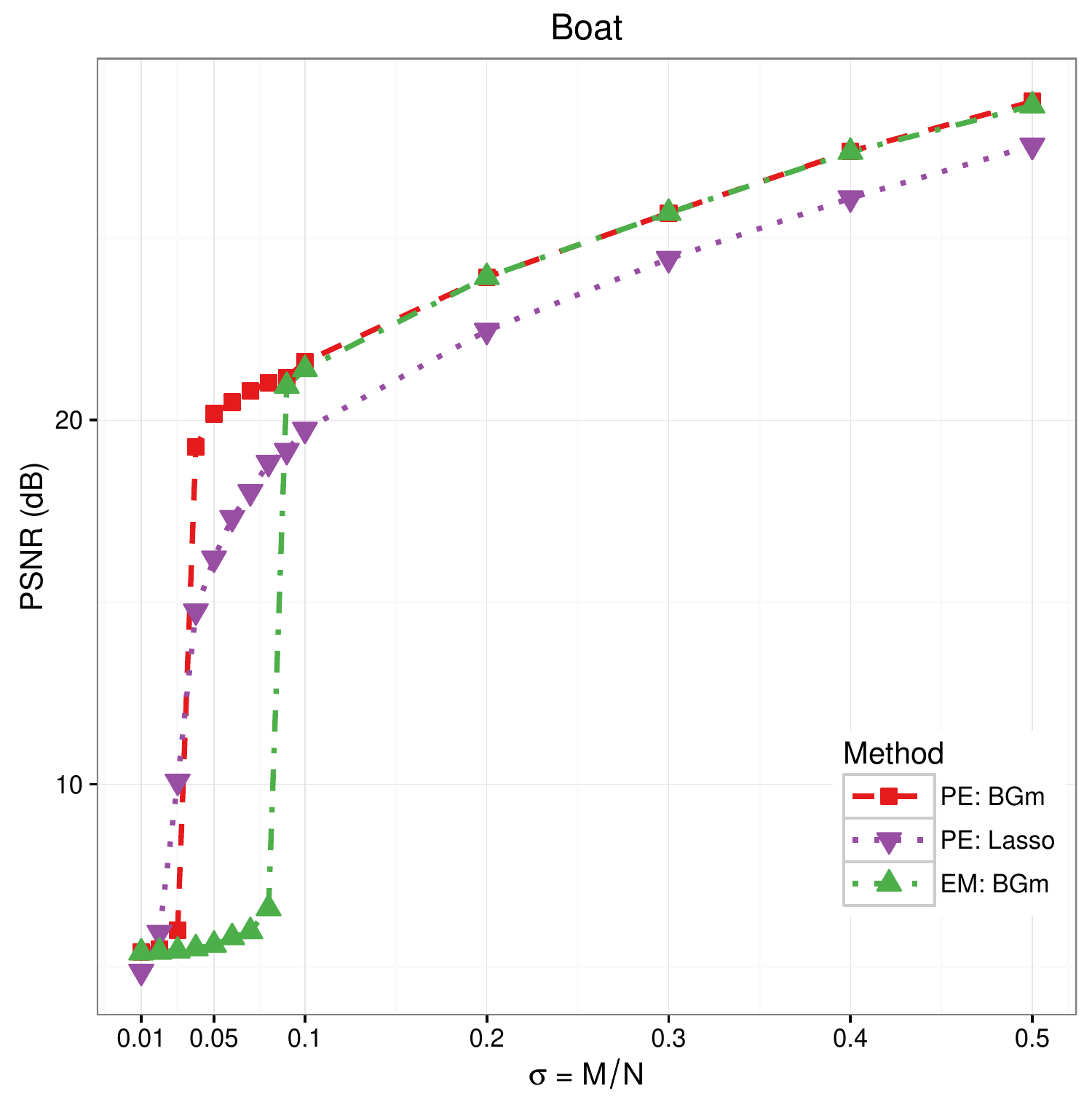}}
\subfigure[]{
\label{fig:psnr_lena}
\includegraphics[height=2.8in]{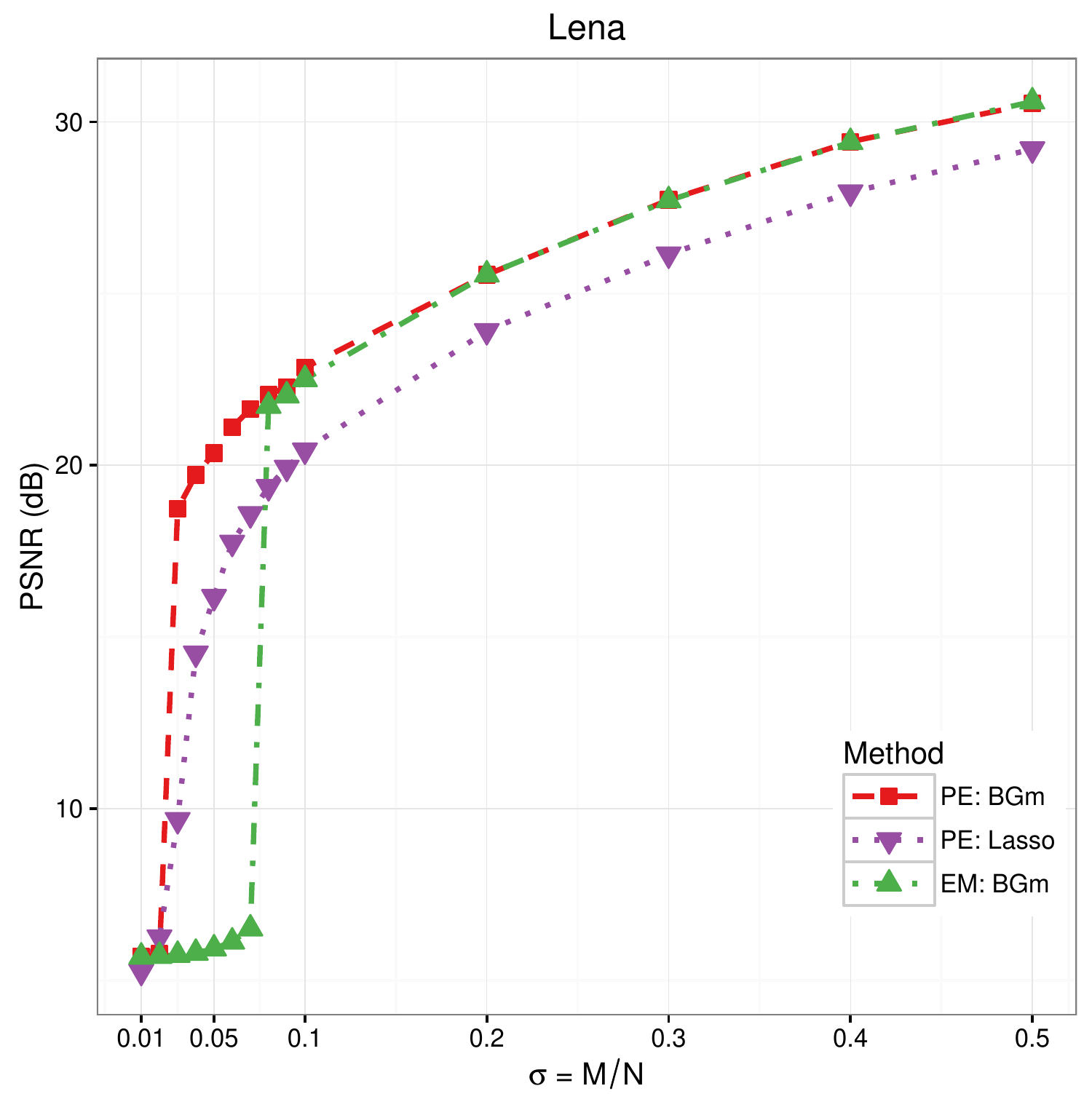}}
\subfigure[]{
\label{fig:psnr_peppers}
\includegraphics[height=2.8in]{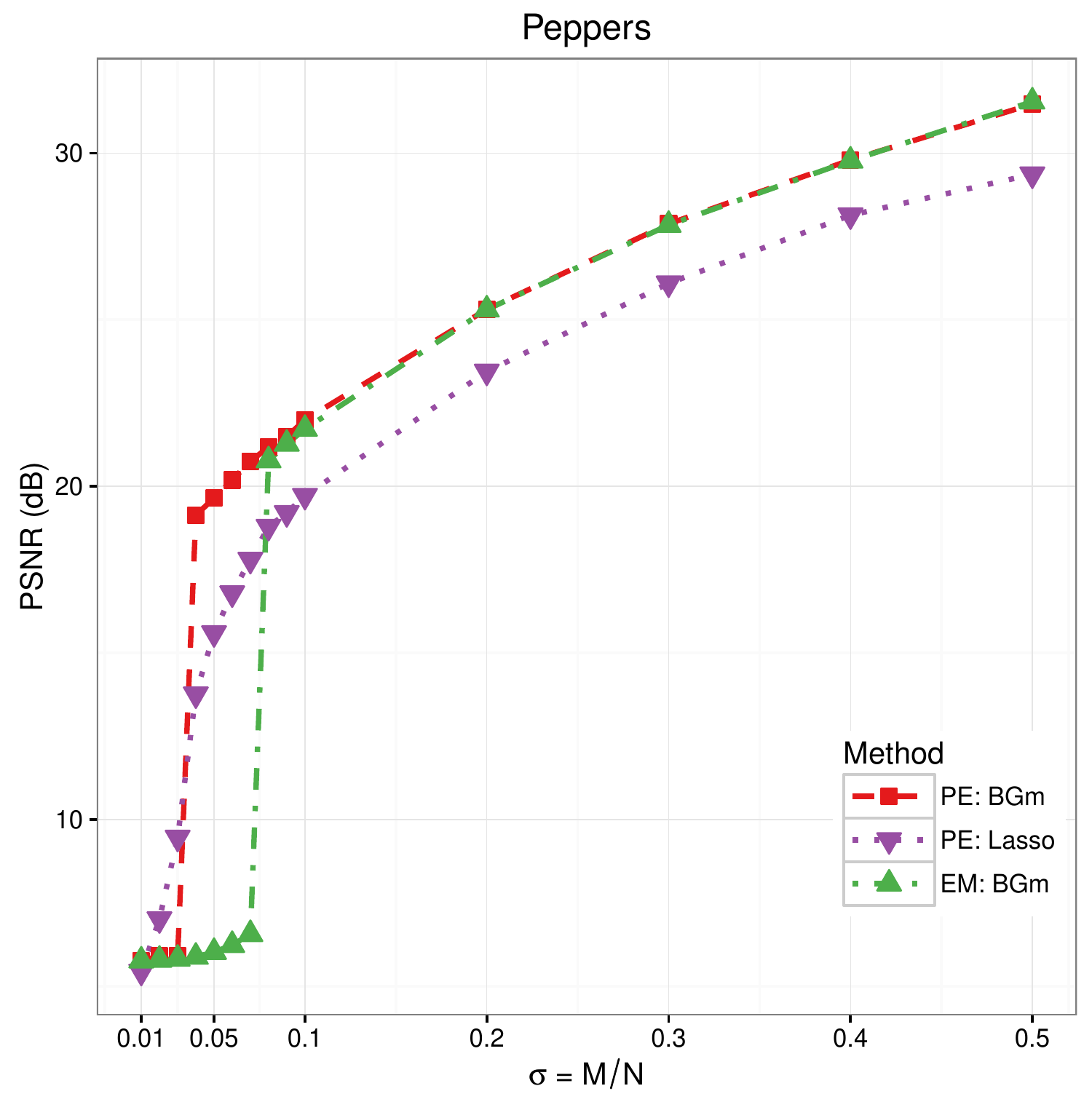}}
\caption{\small The peak-signal-to-noise-ratio (PSNR) of the recovered images from ``noisy'' measurements using different GAMP methods. (a) Barbara; (b) Boat; (c) Lena; (d) Peppers.} 
\label{fig:psnr_images}
\end{figure*}

\subsection{Noisy Sparse Signal Recovery}
We next try to recover the sparse signal $\vx$ from a noisy measurement vector $\vy$. Specifically, we fix $S=100,N=1000$ and increase the number of measurement $M$. $\vy\in\mathbb{R}^M$ is generated as follows:
\begin{align}
\vy=\vA\vx+\nu\vw\,,
\end{align}
where $\nu>0$ controls the amount of noise added to $\vy$, the entries of $\vw$ are i.i.d Gaussian $\mathcal{N}(0,1)$. We choose $\nu=0.05$ for the BG sparse signals and $\nu=0.1$ for the BE sparse signals. This creats a measurement $\vy$ with signal to noies ratio (SNR) around $20$ dB. We randomly generate $100$ triples of $\{\vx,\vA,\vw\}$. The average SNRs of the recovered signals $\hat{\vx}$ are shown in Fig. \ref{fig:snr_noisy}. 

In the noisy case, the oracle-GAMP performs the best as expected since the ``true'' parameters are used to recover the sparse signal, and the GAMP methods using estimated parameters are not bad either. For the BG sparse signals (Fig. \ref{fig:snr_noisy_bg}), we can see that PE-BGm-GAMP performs better than EM-BGm-GAMP when the sampling ratio is small. Since BGm is a better match than the Laplace prior, both PE-BGm-GAMP and EM-BGm-GAMP perform much better than PE-Lasso-GAMP. For the BE sparse signals (Fig. \ref{fig:snr_noisy_nonneg}), the BEm prior is a better match than the BGm prior. PE-BEm-GAMP is able to perform better than PE-BGm-GAMP and EM-BGm-GAMP，especially when the sampling ratio is small. Additionally, the solutions produced by PE-BEm-GAMP is guaranteed to be non-negative, while those by PE-BGm-GAMP and EM-BGm-GAMP generally contains negative coefficients. For applications that requires non-negative sparse solutions, such as hyperspectral unmixing \cite{SHyperUM11}, non-negative sparse coding for image classification \cite{ScSPM09}, etc, PE-BEm-GAMP offers a convenient way to solve the parameter estimation problem.

\subsection{Real Image Recovery}
Real images are considered to be approximately sparse under some proper basis, such as the DCT basis, wavelet basis, etc. Here we compare the recovery performances of PE-BGm-GAMP, PE-Lasso-GAMP, and EM-BGm-GAMP based on varying noisy measurements of the $4$ real images in Fig. \ref{fig:real_images}: Barbara, Boat, Lena, Peppers. We use the Daubechies $6$ (db6) wavelet \cite{DBWav92} as the sparsifying basis and i.i.d. random Gaussian matrix $\vA$ as the measurement matrix. The noise are generated using i.i.d. Gaussian distribution $\mathcal{N}(0,1)$, and the SNR of the measurement vector $\vy$ is around $30$ dB. The peak-signal-to-noise-ratio (PSNR) of the recovered images are shown in Fig. \ref{fig:psnr_images}. We can see that both PE-BGm-GAMP and EM-BGm-GAMP perform better than PE-Lasso-GAMP when the sampling ratio $\sigma>0.1$. When $\sigma$ is small, PE-BGm-GAMP and PE-Lasso-GAMP are more robust and generally perform better than EM-BGm-GAMP.
\begin{figure}[h]
\captionsetup[subfigure]{labelformat=empty}
\centering
\subfigure{
\includegraphics[width=0.75in]{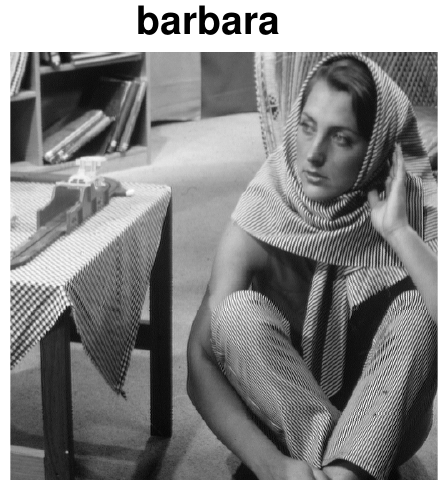}}
\subfigure{
\includegraphics[width=0.75in]{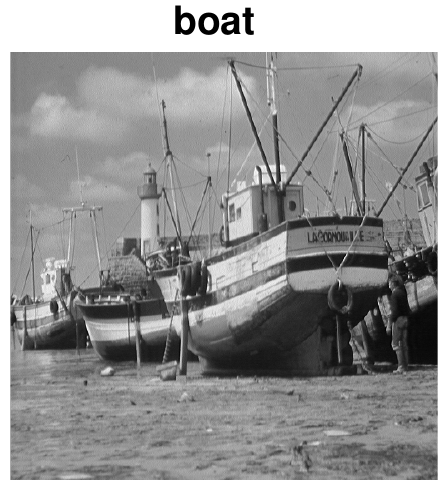}}
\subfigure{
\includegraphics[width=0.75in]{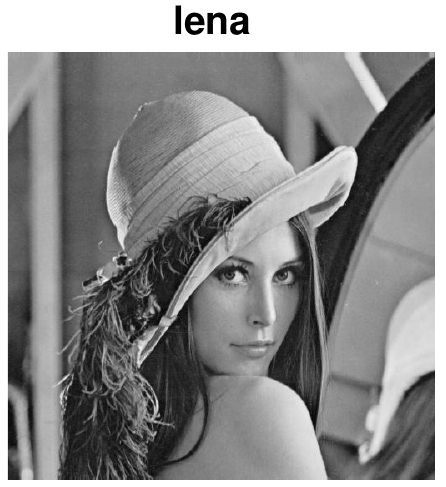}}
\subfigure{
\includegraphics[width=0.75in]{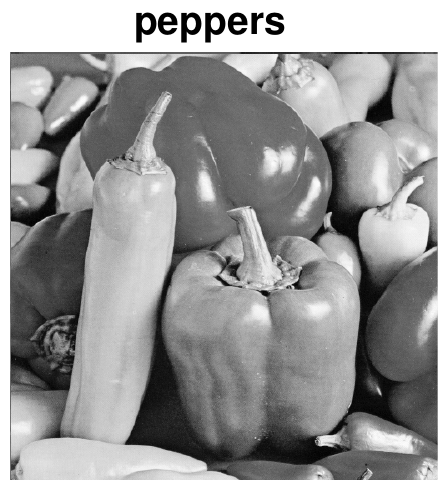}}
\caption{\small The real images used in the recovery experiments.} 
\label{fig:real_images}
\end{figure}

\subsection{Non-negative Sparse Coding for Image Classification}
The image classification task typically involves two steps: 1) extracting features, and 2) training a classifier based on such features. In the first step, low-level descriptors, such as SIFT \cite{SIFT99}, HOG \cite{HOG05}, etc, are extracted from local image patches, and then encoded to produce the high-level representations of the images, usually a vector $\vv\in\mathbb{R}^D$. Here we use the popular Bag-of-Words (BoW) model \cite{BOF03,BOF04} to encode the low level SIFT descriptors $\vy\in\mathbb{R}^M$. To do this, we first need to assign each $\vy$ to one or several ``visual words'' in some pre-trained dictionary/codebook $\vA$. In \cite{ScSPM09}, it is shown that this process can be formulated as a sparse coding problem:
\begin{align}
\begin{split}
\min_{\vx}&\quad\|\vy-\vA\vx\|_2^2\\
\textrm{subject to:}&\quad\vx\geq0,\quad\vx\textrm{ is sparse}.
\end{split}
\end{align}
where $\vx$ is the sparse code of $\vy$ in the dictionary $\vA$. In \cite{ScSPM09}, the sparsity constrain on $\vx$ is enforced with the $l_1$ norm regularization, i.e. Lasso. Both PE-BGm-GAMP and EM-BGm-GAMP can produce negative sparse codes, and are not suited for the task. Here we can use the proposed PE-BEm-GAMP to solve the above non-negative sparse coding problem.

Specifically, we perform image classification on the popular Caltech-101 dataset \cite{Caltech101}, which contains 9144 images belonging to $102$ classes (101 object classes and a background class). Following the suggestions of the original dataset \cite{Caltech101}, we randomly select 30 samples per class for training and \emph{up to} 50 samples per class for testing. This process is randomly repeated 10 times and the average classification accuracy is computed as the final result. 
\begin{figure}[h]
\centering
\includegraphics[width=2.5in]{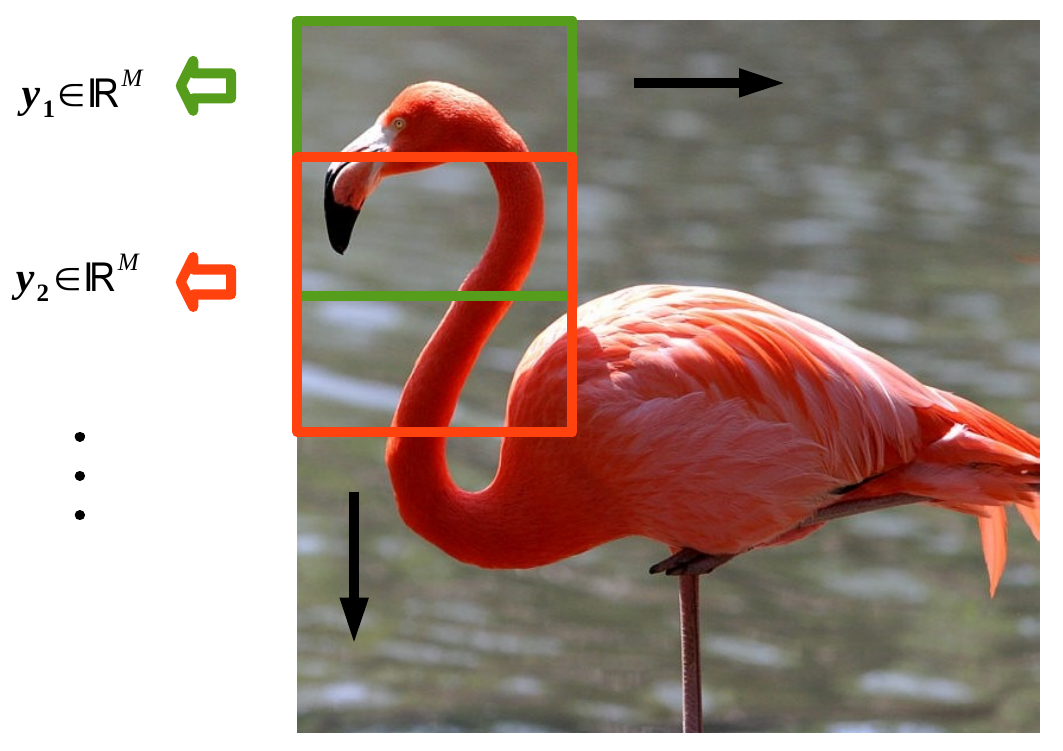}
\caption{\small Low-level SIFT features are densely sampled from local image patches.} 
\label{fig:dsift}
\end{figure}

Each image is converted to grayscale and resized to be no larger than $300\times300$ pixels while preserving the aspect ratio. The normalized local SIFT descriptors $\vy\in\mathcal{R}^{128}\geq0$ are extracted from $16\times16$ image patches densely sampled on the grid with a step size of $8$ pixels \cite{vlfeat}, as is shown in Fig. \ref{fig:dsift}. We use k-means \cite{Kmeans02} to train a $128\times1024$ normalized dictionary $\vA$. After the non-negative sparse coding, each local image patch is converted to a sparse vector $\vx\in\mathbb{R}^{1024}\geq0$. For each image, those sparse vectors are then \emph{max-pooled} using a $3$-level spatial pyramid matching \cite{SPM06} to produce a vector $\vv\in\mathbb{R}^{21504}$. As is usually done, linear support vector machine (SVM) \cite{SVM95,CC01a} is used as the classifier and the parameters of SVM are chosen using cross-validation. The average classification accuracy across all classes is $60.22\pm0.94\%$. 

\section{Conclusion and Future Work}
We proposed an approximate message passing algorithm with built-in parameter estimation (PE-GAMP) to recover under-sampled sparse signals. In the PE-GAMP framework, the parameters are treated as random variables with pre-specified priors, their posterior distributions can then be directly approximated by loopy belief propagation. This allows us to perform MAP and MMSE estimation of the parameters and update them during the message passing to recover sparse signals. Following the same assumptions made by the original GAMP \cite{GAMP11, AMP_CPE14}, state evolution analysis of the proposed PE-GAMP shows that it converges empirically. 

Compared with previous EM based parameter estimation methods, PE-GAMP could draw information from the prior distributions of the parameters. As is evident from both simulated and real experiments, PE-GAMP is also much simpler, more robust and perform better in the low sampling ratio settings. With its simpler formulation, PE-GAMP enjoys wider applicabilities and enables us to consider more complex signal distributions.

Here we mainly focused on MAP parameter estimation of the parameters. In the future, we would like to explore possible MMSE parameter estimation methods. From the non-negative sparse coding experiments, we observed that the proposed PE-GAMP was still able to achieve convergence even though the entries of the measurement matrix, i.e. dictionary, were not i.i.d Gaussian $\mathcal{N}(0,\frac{1}{M})$. Given this interesting observation, we would also like to investigate state evolution analysis for more generalized measurement matrices in our future work. 

\begin{appendices}
\counterwithin{assumption}{section}
\counterwithin{theorem}{section}
\renewcommand\thetable{\thesection\arabic{table}}
\renewcommand\thefigure{\thesection\arabic{figure}}

\section{PE-GAMP: Sum-product Message Passing}
\label{app:sp_message_passing}
Approximate message passing uses quadratic/Gaussian approximations of the messages from the variable nodes to the factor nodes to perform loopy belief propagation. To maintain consistency with \cite{GAMP11}, we use the same notations for the quadratic approximations of messages involving $\vx$. Specifically, $\Delta_{x_n}^{(t)}, \Delta_{\Phi_m\leftarrow x_n}^{(t)}$ in the $t$-th iteration can be used to construct the following distributions about $x_n$:
\begin{subequations}
\begin{align}
p(x_n|\vy)&\propto \exp(\Delta_{x_n}^{(t)})\\
p^{(t)}_{\Phi_m\leftarrow x_n}(x_n|\vy)&\propto\exp(\Delta_{\Phi_m\leftarrow x_n}^{(t)})\,.
\end{align}
\end{subequations}
We then have the following expectations and variances definitions:
\begin{subequations}
\label{eq:x_tau_x_def_app}
\begin{align}
\hat{x}_n^{(t)}&\coloneqq \mathbb{E}[x_n|\Delta_{x_n}^{(t)}]\\
\tau_{\Omega_n}^x(t)&\coloneqq\textrm{var}[x_n|\Delta_{x_n}^{(t)}]\\
\label{eq:x_phi_m_left_x_n}
\hat{x}_{\Phi_m\leftarrow x_n}^{(t)}&\coloneqq \mathbb{E}[x_n|\Delta_{\Phi_m\leftarrow x_n}^{(t)}]\\
\tau_{\Phi_m\leftarrow x_n}^x(t)&\coloneqq\textrm{var}[x_n|\Delta_{\Phi_m\leftarrow x_n}^{(t)}]\,.
\end{align}
\end{subequations}
If the entries $a_{mn}$ of the sensing matrix $A$ is small, $\tau_{\Omega_n}^x(t)\approx\tau_{\Phi_m\leftarrow x_n}^x(t)$. The message $\Delta_{\Phi_m\leftarrow x_n}^{(t)}$ in the $t$-th iteration will be approximated quadratically:
\begin{align}
\begin{split}
\Delta_{\Phi_m\leftarrow x_n}^{(t)} &\approx\textrm{const} -\frac{1}{2\tau_{\Phi_m\leftarrow x_n}^x(t)}\left(x_n-\hat{x}_{\Phi_m\leftarrow x_n}^{(t)}\right)^2\\
&\approx\textrm{const}-\frac{1}{2\tau_{\Omega_n}^x(t)}\left(x_n-\hat{x}_{\Phi_m\leftarrow x_n}^{(t)}\right)^2\,,
\end{split}
\end{align}
which makes the approximation of $p^{(t)}_{\Phi_m\leftarrow x_n}(x_n|\vy)$ a Gaussian distribution. Similarly we have the following approximations for $\vx$ involving the node $\Omega_n$:
\begin{subequations}
\label{eq:approx_lambda_theta_x_and_variances}
\begin{align}
\label{eq:x_omega_n_x_n_app}
\hat{x}_{\Omega_n\leftarrow x_n}^{(t)}&\coloneqq\mathbb{E}[x_n|\Delta_{\Omega_n\leftarrow x_n}^{(t)}]\\
\label{eq:tau_omega_n_x_n_app}
\tau_{\Omega_n\leftarrow x_n}^x(t)&\coloneqq\textrm{var}[x_n|\Delta_{\Omega_n\leftarrow x_n}^{(t)}]\\
\label{eq:delta_omega_n_x_n_app}
\Delta_{\Omega_n\leftarrow x_n}^{(t)}&\approx\textrm{const} -\frac{1}{2\tau_{\Omega_n\leftarrow x_n}^x(t)}\left(x_n-\hat{x}_{\Omega_n\leftarrow x_n}^{(t)}\right)^2\,.
\end{align}
\end{subequations}
In the proposed PE-GAMP, we use Dirac delta approximation of the messages involving the parameters $\boldsymbol\lambda,\boldsymbol\theta$. Specifically, the parameters are estimated using MAP or MMSE estimations:
\begin{enumerate}
\item MAP estimation:
\begin{subequations}
\label{eq:map_app}
\begin{align}
\label{eq:map_lambda_omega_n_lambda_l_app}
\hat{\lambda}_{\Omega_n\leftarrow\lambda_l}^{(t)}&\coloneqq\arg\max_{\lambda_l}\Delta_{\Omega_n\leftarrow\lambda_l}^{(t)}\\
\label{eq:map_theta_phi_m_theta_k_app}
\hat{\theta}_{\Phi_m\leftarrow\theta_k}^{(t)}&\coloneqq\arg\max_{\theta_k}\Delta_{\Phi_m\leftarrow\theta_k}^{(t)}
\end{align}
\end{subequations}
\item MMSE estimation:
\begin{subequations}
\label{eq:mmse_app}
\begin{align}
\label{eq:mmse_lambda_omega_n_lambda_l_app}
\hat{\lambda}_{\Omega_n\leftarrow\lambda_l}^{(t)}&\coloneqq\mathbb{E}[\lambda_l|\Delta_{\Omega_n\leftarrow\lambda_l}^{(t)}]\\
\label{eq:mmse_theta_phi_m_theta_k_app}
\hat{\theta}_{\Phi_m\leftarrow\theta_k}^{(t)}&\coloneqq\mathbb{E}[\theta_k|\Delta_{\Phi_m\leftarrow\theta_k}^{(t)}]
\end{align}
\end{subequations}
\end{enumerate}
The corresponding messages involving the parameters $\boldsymbol\lambda,\boldsymbol\theta$ in the $(t)$-th iteration can then be approximated as follows:
\begin{subequations}
\label{eq:approx_messages_variable_factor}
\begin{align}
\exp\left(\Delta_{\Omega_n\leftarrow\lambda_l}^{(t)}\right)&\approx\delta\left(\lambda_l-\hat{\lambda}_{\Omega_n\leftarrow\lambda_l}^{(t)}\right)\\
\exp\left(\Delta_{\Phi_m\leftarrow\theta_k}^{(t)}\right)&\approx\delta\left(\theta_k-\hat{\theta}_{\Phi_m\leftarrow\theta_k}^{(t)}\right)
\end{align}
\end{subequations}
Using approximated messages from the variable node to factor node in (\ref{eq:approx_messages_variable_factor}), $\Delta_{\Phi_m\rightarrow x_n}^{(t+1)}$ can then be computed:
\begin{align}
\label{eq:sp_fv_phi_x_app1}
\begin{split}
&\Delta_{\Phi_m\rightarrow x_n}^{(t+1)}=\textrm{const}\\
&\quad+\log\int_{\vx\backslash x_n}\Phi_m\left(y_m,\vx,\hat{\boldsymbol\theta}_{\Phi_m}^{(t)}\right)\cdot\exp\left(\textstyle\sum_{j\neq n}\Delta_{\Phi_m\leftarrow x_j}^{(t)}\right).
\end{split}
\end{align}
Direct integration with respect to $\vx\backslash x_n$ in (\ref{eq:sp_fv_phi_x_app1}) is quite difficult. If we go back to the original belief propagation, we can see that the message $\Delta_{\Phi_m\rightarrow x_n}^{(t+1)}$ essentially performs the following computation:
\begin{align}
\begin{split}
\log p(y_m,x_n)&=\log\int_{\vx\backslash x_n,\boldsymbol\theta}p(y_m,\vx,\theta)\\
&=\log\int_{\vx\backslash x_n, \boldsymbol\theta}p(y_m|\vx, \boldsymbol\theta)p(\vx)p(\boldsymbol\theta)\,.
\end{split}
\end{align}
Let $z_m^\prime=z_m-a_{mn}x_n=\sum_{j\neq n}a_{mj}x_j$, $\log p(y_m,x_n)$ can also be written as:
\begin{align}
\label{eq:p_y_xn_app}
\begin{split}
\log p(y_m, x_n)&=\log\int_{z_m^\prime,\boldsymbol\theta}p(y_m, x_n, z_m^\prime,\boldsymbol\theta )\\
&=\log\int_{z_m^\prime, \boldsymbol\theta}p(y_m|x_n,z_m^\prime,\theta)p(z_m^\prime)p(\boldsymbol\theta)\,.
\end{split}
\end{align}
Translating (\ref{eq:p_y_xn_app}) back to the message gives us:
\begin{align}
\label{eq:p_y_xn_zm_prime}
\begin{split}
\Delta_{\Phi_m\rightarrow x_n}^{(t+1)}&=\textrm{const}+\log\int_{z_m^\prime}\left[\vphantom{-\frac{1}{2\left(\tau^q_{\Phi_m}(t)-a_{mn}^2\tau_{\Omega_n}^x(t)\right)}}\Phi\left(y_m, x_n, z_m^\prime,\hat{\boldsymbol\theta}_{\Phi_m}^{(t)}\right)\right.\\
&\quad\times\exp\left(-\frac{1}{2\left(\tau^q_{\Phi_m}(t)-a_{mn}^2\tau_{\Omega_n}^x(t)\right)}\right.\\
&\quad\times\left.\left.\left(z_m^\prime-\left(q_{\Phi_m}^{(t)}-a_{mn}\hat{x}_{\Phi_m\leftarrow x_n}^{(t)}\right)\right)^2\vphantom{-\frac{1}{2\left(\tau^q_{\Phi_m}(t)-a_{mn}^2\tau_{\Omega_n}^x(t)\right)}}\right)\right]\,,
\end{split}
\end{align}
where $\tau_{\Phi_m}^q(t), q_{\Phi_m}^{(t)}$ are as follows:
\begin{subequations}
\begin{align}
\tau_{\Phi_m}^q(t)&=\sum_ja_{mj}^2\tau_{\Omega_j}^x(t)\\
q^{(t)}_{\Phi_m} &= \sum_ja_{mj}\hat{x}^{(t)}_{\Phi_m\leftarrow x_j}\,.
\end{align}
\end{subequations}
If $a_{mn}$ is small, $a_{mn}^2\tau_{\Omega_n}^x(t)$ can be neglected. Since the integration of $z_m^\prime$ is from $-\infty$ to $\infty$, we replace $z_m^\prime$ with $z_m = z_m^\prime+a_{mn}x_n$. (\ref{eq:p_y_xn_zm_prime}) then becomes:
\begin{align}
\label{eq:p_y_z_app}
\begin{split}
&\Delta_{\Phi_m\rightarrow x_n}^{(t+1)} = \textrm{const}+\log\int_{z_m}\left[\Phi\left(y_m,z_m,\hat{\boldsymbol\theta}_{\Phi_m}^{(t)}\right)\vphantom{\frac{-1}{2\tau_{\Phi_m}^q(t)}}\times\right.\\
&\left.\exp\left(\frac{-1}{2\tau_{\Phi_m}^q(t)}\left(z_m-\left(q_{\Phi_m}^{(t)}+a_{mn}\left(x_n-\hat{x}_{\Phi_m\leftarrow x_n}^{(t)}\right)\right)\right)^2\right)\right]
\end{split}
\end{align}

\subsection{GAMP Update}
For completeness, we include the GAMP update from \cite{GAMP11} to compute $\Delta_{\Phi_m\rightarrow x_n}^{(t+1)}, \Delta_{\Phi_m\leftarrow x_n}^{(t+1)}$. The following function $H\left(q, \tau^q, y, \boldsymbol\theta\right)$ is defined:
\begin{align}
H\left(q, \tau^q, y, \boldsymbol\theta\right) = \log\int_{z}\Phi(y,z,\boldsymbol\theta)\cdot\exp\left(-\frac{1}{2\tau^q}\left(z-q\right)^2\right)\,.
\end{align}
$\Delta_{\Phi_m\rightarrow x_n}^{(t+1)}$ in (\ref{eq:p_y_z_app}) can then be written as:
\begin{align}
\begin{split}
&\Delta_{\Phi_m\rightarrow x_n}^{(t+1)}=\textrm{const}\\
&+H\left(q_{\Phi_m}^{(t)}+a_{mn}\left(x_n-\hat{x}_{\Phi_m\leftarrow x_n}^{(t)}\right), \tau_{\Phi_m}^q(t), y_m, \hat{\boldsymbol\theta}_{\Phi_m}^{(t)}\right)\,.
\end{split}
\end{align}
Next, we try to approximate the message $\Delta_{\Phi_m\rightarrow x_n}^{(t+1)}$ up to second order Taylor series at $q_{\Phi_m}^{(t)}$. We define the following:
\begin{align}
g_\textrm{out}(q,\tau^q,y,\boldsymbol\theta)\coloneqq\frac{\partial}{\partial q}H\left(q, \tau^q, y,\boldsymbol\theta\right)\,.
\end{align}
Let $s_{\Phi_m}^{(t)}, \tau_{\Phi_m}^s(t)$ be the first and second order of $H(\cdot)$ at $q_{\Phi_m}^{(t)}$:
\begin{align}
s_{\Phi_m}^{(t)}&=g_\textrm{out}\left(t,q_{\Phi_m}^{(t)},\tau_{\Phi_m}^q(t),y_m,\hat{\boldsymbol\theta}_{\Phi_m}^{(t)}\right)\\
\tau_{\Phi_m}^s(t)&=-\frac{\partial}{\partial q}g_\textrm{out}\left(t,q_{\Phi_m}^{(t)},\tau_{\Phi_m}^q(t),y_m,\hat{\boldsymbol\theta}_{\Phi_m}^{(t)}\right)\,.
\end{align}
$\Delta_{\Phi_m\rightarrow x_n}^{(t+1)}$ can then be approximated by:
\begin{align}
\label{eq:delta_phi_m_x_n_app}
\begin{split}
\Delta_{\Phi_m\rightarrow x_n}^{(t+1)} &\approx\textrm{const}+ s_{\Phi_m}^{(t)}a_{mn}\left(x_n-\hat{x}_{\Phi_m\leftarrow x_n}^{(t)}\right)\\
&\quad\quad-\frac{\tau_{\Phi_m}^s(t)}{2}a^2_{mn}\left(x_n-\hat{x}_{\Phi_m\leftarrow x_n}^{(t)}\right)^2\,.
\end{split}
\end{align}
$\Delta_{\Phi_m\leftarrow x_n}^{(t+1)}$ will then be computed as is done in \cite{GAMP11}:
\begin{subequations}
\begin{align}
\begin{split}
\Delta_{\Phi_m\leftarrow x_n}^{(t+1)}&\approx\textrm{const}+\Delta_{\Omega_n\rightarrow x_n}^{(t+1)}\\
&\quad\quad-\frac{1}{2\tau_{\Phi_m\leftarrow x_n}^r(t)}\left(r_{\Phi_m\leftarrow x_n}^{(t)}-x_n\right)^2
\end{split}\\
\label{eq:tau_phi_m_left_x_n}
\tau_{\Phi_m\leftarrow x_n}^r(t)&=\left(\sum_{i\neq m}a_{in}^2\tau_{\Phi_i}^s(t)\right)^{-1}\\
\label{eq:r_phi_m_left_x_n}
r_{\Phi_m\leftarrow x_n}^{(t)}&=\hat{x}^{(t)}_{\Phi_m\leftarrow x_n}+\tau_{\Phi_m\leftarrow x_n}^r(t)\sum_{i\neq m}s_{\Phi_i}^{(t)}a_{in}\,.
\end{align}
\end{subequations}
(\ref{eq:tau_phi_m_left_x_n}, \ref{eq:r_phi_m_left_x_n}) are approximated:
\begin{subequations}
\begin{align}
\tau_{\Phi_m\leftarrow x_n}^r(t)&\approx \tau_{\Omega_n}^r(t) = \left(\sum_ia_{in}^2\tau_{\Phi_i}^s(t)\right)^{-1}\\
\begin{split}
r_{\Phi_m\leftarrow x_n}^{(t)}&\approx \left(\hat{x}_{\Phi_m}^{(t)} + \tau_{\Omega_n}^r(t)\sum_is_{\Phi_i}^{(t)}a_{in}\right) -\tau_{\Omega_n}^r(t)a_{mn}s_{\Phi_m}^{(t)}\\
&= r_{\Omega_n}^{(t)}-\tau_{\Omega_n}^r(t)a_{mn}s_{\Phi_m}^{(t)}\,.
\end{split}
\end{align}
\end{subequations}
The following definition is also made in \cite{GAMP11}:
\begin{align}
\begin{split}
&g_\textrm{in}\left(r_{\Omega_n}^{(t)},\tau_{\Omega_n}^r(t),\hat{\boldsymbol\lambda}_{\Omega_n}^{(t)}\right)\coloneqq\\
&\quad\frac{\int_{x_n}x_n\exp\left(\Delta_{\Omega_n\rightarrow x_n}^{(t)}-\frac{1}{2\tau_{\Omega_n}^r(t)}\left(r_{\Omega_n}^{(t)}-x_n\right)^2\right)}{\int_{x_n}\exp\left(\Delta_{\Omega_n\rightarrow x_n}^{(t)}-\frac{1}{2\tau_{\Omega_n}^r(t)}\left(r_{\Omega_n}^{(t)}-x_n\right)^2\right)}\,.
\end{split}
\end{align}
$\hat{x}_{\Phi_m\leftarrow x_n}^{(t+1)}, \hat{x}_n^{(t+1)}, q_{\Phi_m}^{(t+1)}$ are then \cite{GAMP11}:
\begin{subequations}
\begin{align}
\hat{x}_{\Phi_m\leftarrow x_n}^{(t+1)}&= g_{\textrm{in}}\left(r_{\Phi_m\leftarrow x_n}^{(t)},\tau_{\Phi_m\leftarrow x_n}^r(t),\hat{\boldsymbol\lambda}_{\Omega_n}^{(t)}\right)\\
\hat{x}_n^{(t+1)}&= g_\textrm{in}\left(r_{\Omega_n}^{(t)},\tau_{\Omega_n}^r(t),\hat{\boldsymbol\lambda}_{\Omega_n}^{(t)}\right)\\
q_{\Phi_m}^{(t+1)}&\approx\sum_ja_{mj}\hat{x}_j^{(t+1)}-\tau_{\Phi_m}^q(t)s_{\Phi_m}^{(t)}\,.
\end{align}
\end{subequations}

\subsection{Parameter Update}
Similarly we can compute the rest messages from the factor nodes to variable nodes in the proposed PE-GAMP using Dirac delta approximation of the messages involving the parameters:
\begin{subequations}
\label{eq:factor_to_variable_app}
\begin{align}
\label{eq:factor_to_variable_x_app}
&\Delta_{\Omega_n\rightarrow x_n}^{(t+1)}=\textrm{const}+\log\Omega_n\left(x_n,\hat{\boldsymbol\lambda}_{\Omega_n}^{(t)}\right)\\
\begin{split}
\label{eq:factor_to_variable_lambda_app}
&\Delta_{\Omega_n\rightarrow\lambda_l}^{(t+1)}=\textrm{const}+\log\int_{x_n}\left[\vphantom{-\sum_{u\neq l}\frac{1}{2\tau_{\Omega_n\leftarrow\lambda_u}^\lambda(t)}\left(\lambda_u-\hat{\lambda}_{\Omega_n\leftarrow\lambda_u}^{(t)}\right)^2}\Omega_n\left(x_n,\lambda_l,\hat{\boldsymbol\lambda}_{\Omega_n}^{(t)}\backslash\hat{\lambda}_{\Omega_n\leftarrow\lambda_l}^{(t)}\right)\right.\\
&\quad\times\left.\exp\left(\vphantom{-\sum_{u\neq l}\frac{1}{2\tau_{\Omega_n\leftarrow\lambda_u}^\lambda(t)}\left(\lambda_u-\hat{\lambda}_{\Omega_n\leftarrow\lambda_u}^{(t)}\right)^2}-\frac{1}{2\tau^x_{\Omega_n\leftarrow x_n}(t+1)}\left(x_n-\hat{x}_{\Omega_n\leftarrow x_n}^{(t+1)}\right)^2\right)\right]
\end{split}\\
\begin{split}
\label{eq:factor_to_variable_theta_app}
&\Delta_{\Phi_m\rightarrow\theta_k}^{(t+1)}=\textrm{const}+\log\int_{z_m}\left[\vphantom{-\sum_{u\neq l}\frac{1}{2\tau_{\Omega_n\leftarrow\lambda_u}^\lambda(t)}\left(\lambda_u-\hat{\lambda}_{\Omega_n\leftarrow\lambda_u}^{(t)}\right)^2}\Phi\left(y_m,z_m,\theta_k,\hat{\boldsymbol\theta}_{\Phi_m}^{(t)}\backslash\hat{\theta}_{\Phi_m\leftarrow\theta_k}^{(t)}\right)\right.\\
&\quad\times\left.\exp\left(\vphantom{-\sum_{u\neq l}\frac{1}{2\tau_{\Omega_n\leftarrow\lambda_u}^\lambda(t)}\left(\lambda_u-\hat{\lambda}_{\Omega_n\leftarrow\lambda_u}^{(t)}\right)^2}-\frac{1}{2\tau_{\Phi_m}^q(t)}\left(z_m-q_{\Phi_m}^{(t)}\right)^2\right)\right]\,.
\end{split}
\end{align}
\end{subequations}
We next compute (\ref{eq:approx_lambda_theta_x_and_variances}) in the $(t+1)$-th iteration. Using (\ref{eq:delta_phi_m_x_n_app}) we can get $\Delta_{\Omega_n\leftarrow x_n}^{(t+1)}$ in (\ref{eq:delta_omega_n_x_n_app}) first:
\begin{align}
\Delta_{\Omega_n\leftarrow x_n}^{(t+1)}\approx -\frac{1}{2\tau_{\Omega_n}^r(t)}\left(x_n-r_{\Omega_n}^{(t)}\right)^2\,.
\end{align}
(\ref{eq:x_omega_n_x_n_app},\ref{eq:tau_omega_n_x_n_app}) in the $(t+1)$-th iteration are then:
\begin{align}
\hat{x}_{\Omega_n\leftarrow x_n}^{(t+1)} = r_{\Omega_n}^{(t)},\quad\quad\quad\tau_{\Omega_n\leftarrow x_n}^x(t+1)=\tau_{\Omega_n}^r(t)\,.
\end{align}
The parameteres $\hat{\boldsymbol\lambda}_{\Omega_n}^{(t+1)},\hat{\boldsymbol\theta}_{\Phi_m}^{(t+1)}$ in the $(t+1)$-th iteration can then be computed using (\ref{eq:map_app}) or (\ref{eq:mmse_app}).

\section{PE-GAMP: Max-sum Message Passing}
\label{app:ms_message_passing}
The approximated \emph{max-sum} message passing also uses quadratic approximation of the messages. It is in many ways similar to the \emph{sum-product} message passing presented previously in Appendix \ref{app:sp_message_passing}. A few differences in do exists though. Specifically, the definitions in (\ref{eq:x_tau_x_def_app}) are changed into:
\begin{subequations}
\begin{align}
\hat{x}_n^{(t)}&\coloneqq\arg\max_{x_n}\Delta_{x_n}^{(t)}\\
\tau_{\Omega_n}^x(t)&\coloneqq-\left(\frac{\partial^2 \Delta_{x_n}^{(t)}}{\partial x_n^2}\left|_{x_n=\hat{x}_n^{(t)}}\right.\right)^{-1}\\
\hat{x}_{\Phi_m\leftarrow x_n}^{(t)}&\coloneqq\arg\max_{x_n}\Delta_{\Phi_m\leftarrow x_n}^{(t)}\\
\tau_{\Phi_m\leftarrow x_n}^x(t)&\coloneqq-\left(\frac{\partial^2 \Delta_{\Phi_m\leftarrow x_n}^{(t)}}{\partial x_n^2}\left|_{x_n=\hat{x}_{\Phi_m\leftarrow x_n}^{(t)}}\right.\right)^{-1}
\end{align}
\end{subequations}

In the proposed PE-GAMP, the parameters are computed as follows:
\begin{subequations}
\label{eq:ms_pe_map_app}
\begin{align}
\begin{split}
\label{eq:ms_pe_map_app_lambda}
\hat{\lambda}_{\Omega_n\leftarrow\lambda_l}^{(t)}=\arg\max_{\lambda_l}\,&\log\Omega_n\left(\hat{x}_n^{(t-1)},\lambda_l,\hat{\boldsymbol\lambda}_{\Omega_n}^{(t-1)}\backslash\hat{\lambda}_{\Omega_n\leftarrow\lambda_l}^{(t-1)}\right)\\
&\quad + \Delta_{\Omega_n\leftarrow\lambda_l}^{(t-1)}
\end{split}\\
\begin{split}
\label{eq:ms_pe_map_app_theta}
\hat{\theta}_{\Phi_m\leftarrow\theta_k}^{(t)}=\arg\max_{\theta_k}\,&\log\Phi_m\left(y_m,\hat{\vx}^{(t-1)},\theta_k,\hat{\boldsymbol\theta}_{\Phi_m}^{(t-1)}\backslash\hat{\theta}_{\Phi_m\leftarrow\theta_k}^{(t-1)}\right)\\
&\quad + \Delta_{\Phi_m\leftarrow\theta_k}^{(t-1)}\,.
\end{split}
\end{align}
\end{subequations}

\subsection{GAMP Update}
The definitions of $H(q,\tau^q,y,\boldsymbol\theta)$ and the input function $g_\textrm{in}(\cdot)$ are also different from \emph{sum-product} message passing. \cite{GAMP11} has the following definitions:
\begin{align}
&H(q,\tau^q,y)=\max_z\left[\log\Phi(y,z,\boldsymbol\theta)-\frac{1}{2\tau^q}(z-q)^2\right]\\
\begin{split}
&g_\textrm{out}\left(r_{\Omega_n}^{(t)},\tau_{\Omega_n}^r(t),\hat{\boldsymbol\lambda}_{\Omega_n}^{(t)}\right)=\\
&\quad\quad\arg\max_{x_n}\left[\Delta_{\Omega_n\rightarrow x_n}^{(t)}-\frac{1}{2\tau_{\Omega_n}^r(t)}\left(r_{\Omega_n}^{(t)}-x_n\right)^2\right]\,.
\end{split}
\end{align}
The message $\Delta_{\Phi_m\rightarrow x_n}^{(t+1)}$ is also different from (\ref{eq:p_y_z_app}) in Appendix \ref{app:sp_message_passing}. In \cite{GAMP11}, it is given as follows:
\begin{align}
\begin{split}
&\Delta_{\Phi_m\rightarrow x_n}^{(t+1)}\approx\max_{z_m}\left[\vphantom{-\frac{1}{2\tau_{\Phi_m}^q(t)}}\log\Phi\left(z_m,y_m,\hat{\boldsymbol\lambda}_{\Omega_n}^{(t)}\right)\right.\\
&\quad\quad\left.-\frac{1}{2\tau_{\Phi_m}^q(t)}\left(z_m-\left(q_{\Phi_m}^{(t)}+a_{mn}\left(x_n-\hat{x}_{\Phi_m\leftarrow x_n}^{(t)}\right)\right)\right)^2\right]\,.
\end{split}
\end{align}

\subsection{Parameter Update}
The messages in (\ref{eq:factor_to_variable_app}) are also updated:
\begin{subequations}
\label{eq:factor_to_variable_ms_app}
\begin{align}
\label{eq:factor_to_variable_lambda_ms_app}
\begin{split}
\Delta_{\Omega_n\rightarrow x_n}^{(t+1)}&=\log\Omega_n(x_n,\hat{\boldsymbol\lambda}_{\Omega_n}^{(t)})
\end{split}\\
\label{eq:factor_to_variable_x_ms_app}
\begin{split}
\Delta_{\Omega_n\rightarrow\lambda_l}^{(t+1)}&=\max_{x_n}\left[\vphantom{-\sum_{u\neq l}\frac{1}{2\tau_{\Omega_n\leftarrow\lambda_u}^\lambda(t)}\left(\lambda_u-\hat{\lambda}_{\Omega_n\leftarrow\lambda_u}^{(t)}\right)^2}\log\Omega_n\left(x_n,\lambda_l,\hat{\boldsymbol\lambda}_{\Omega_n}^{(t)}\backslash\hat{\lambda}_{\Omega_n\leftarrow\lambda_l}^{(t)}\right)\right.\\
&\quad\quad\left.-\frac{1}{2\tau^r_{\Omega_n}(t)}\left(x_n-r_{\Omega_n}^{(t)}\right)^2\right]
\end{split}\\
\label{eq:factor_to_variable_theta_ms_app}
\begin{split}
\Delta_{\Phi_m\rightarrow\theta_k}^{(t+1)}&=\max_{z_m}\left[\vphantom{-\sum_{u\neq l}\frac{1}{2\tau_{\Omega_n\leftarrow\lambda_u}^\lambda(t)}\left(\lambda_u-\hat{\lambda}_{\Omega_n\leftarrow\lambda_u}^{(t)}\right)^2}\log\Phi\left(y_m,z_m,\theta_k\hat{\boldsymbol\theta}_{\Phi_m}^{(t)}\backslash\hat{\theta}_{\Phi_m\leftarrow\theta_k}^{(t)}\right)\right.\\
&\quad\quad\left.-\frac{1}{2\tau_{\Phi_m}^q(t)}\left(z_m-q_{\Phi_m}^{(t)}\right)^2\right]\,.
\end{split}
\end{align}
\end{subequations}
The parameters $\hat{\boldsymbol\lambda}_{\Omega_n}^{(t+1)},\hat{\boldsymbol\theta}_{\Phi_m}^{(t+1)}$ in the $(t+1)$-th iteration can then be computed using (\ref{eq:ms_pe_map_app}).

\section{State Evolution Analysis of Adaptive-GAMP} 
\label{app:se_adaptive_gamp}
In \cite{AMP_CPE14}, the following Assumption \ref{aspt:adaptive_gamp} are made about the estimation problem:
\smallskip
\begin{assumption}
\label{aspt:adaptive_gamp}
The adaptive-GAMP with parameter estimation solves a series of estimation problem indexed by the input signal dimension $N$:
\begin{enumerate}[label={\alph*)}, nolistsep]
\item Assumptions \ref{aspt_gamp}(a) to \ref{aspt_gamp}(d) with $k=2$.
\item Assumption \ref{aspt_pe_gamp}(b).
\item For every $t$, the estimation (adaptation) function $f_{\boldsymbol\lambda}(t, \vr_\Omega^{(t)}, \tau_\Omega^r(t))$ can be considered as a function of $\vr_\Omega^{(t)}$ that satisfies the weak pseudo-Lipschitz continuity property: If the sequence of vector $\vr_\Omega^{(t)}$  indexed by $N$ empirically converges with bounded moments of order $k=2$ and the sequence of scalars $\tau_\Omega^r(t)$ converge as follows:
\begin{align}
\lim_{N\rightarrow\infty}\vr_\Omega^{(t)}\stackrel{\normalfont\textrm{PL$(k)$}}{=}\mathcal{R}_\Omega^{(t)},\quad\lim_{M\rightarrow\infty}\tau_\Omega^r(t)=\overline{\tau}_\Omega^r(t)\,.
\end{align}
Then,
\begin{align}
\lim_{N\rightarrow\infty}f_{\boldsymbol\lambda}(t, \vr_\Omega^{(t)}, \tau_\Omega^r(t)) = f_{\boldsymbol\lambda}(t, \mathcal{R}_\Omega^{(t)}, \overline{\tau}_\Omega^r(t))\,.
\end{align}
Similarly $f_{\boldsymbol\theta}(t,\vq_\Phi^{(t)},\vy,\tau_\Phi^q(t))$ also satisfies the weak pseudo-Lipschitz continuity property.
\end{enumerate}
\end{assumption}
Theorem \ref{thm:adaptive_gamp} is then given to describe the limiting behavior of the scalar variables in the adaptive-GAMP algorithm \cite{AMP_CPE14}.
\smallskip
\begin{theorem}
\label{thm:adaptive_gamp}
Consider the adaptive-GAMP with scalar variances under the Assumption \ref{aspt:adaptive_gamp}. $\forall t$, the components of the following sets of scalars empirically converges with bounded moments of order $k=2$:
\begin{subequations}
\begin{align}
&\lim_{N\rightarrow\infty}\boldsymbol\psi_\textrm{in}\stackrel{\normalfont\textrm{PL$(k)$}}{=}\overline{\boldsymbol\psi}_\textrm{in},\quad\lim_{N\rightarrow\infty}\boldsymbol\psi_\textrm{out}\stackrel{\normalfont\textrm{PL$(k)$}}{=}\overline{\boldsymbol\psi}_\textrm{out}\\
&\lim_{N\rightarrow\infty}\boldsymbol\psi_\tau=\overline{\boldsymbol\psi}_\tau\\
&\lim_{N\rightarrow\infty}\hat{\boldsymbol\theta}^{(t+1)}=\overline{\boldsymbol\theta}^{(t+1)},\quad\lim_{N\rightarrow\infty}\hat{\boldsymbol\lambda}^{(t+1)}=\overline{\boldsymbol\lambda}^{(t+1)}\,.
\end{align}
\end{subequations}
\end{theorem}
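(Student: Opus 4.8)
The statement is a restatement of the main state-evolution theorem of \cite{AMP_CPE14}, so the plan is to deduce it from that reference; for completeness I outline the argument one would reproduce. The backbone is the conditioning technique of Bayati and Montanari, adapted to the GAMP recursion as in \cite{AMP11,GAMP11}: because $\vA$ has i.i.d.\ Gaussian entries, conditioning on the $\sigma$-algebra generated by every quantity produced through iteration $t$ leaves the conditional law of $\vA$ Gaussian subject only to the linear constraints already revealed, so the ``fresh'' part of each new linear image $\vA\hat{\vx}^{(t)}$ and $\vA^{\top}\vs^{(t)}$ is an independent Gaussian vector whose covariance is exactly the one predicted by the scalar recursions of Algorithm \ref{alg:pe_gamp_se}. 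One then runs an induction on $t$: assuming that $\boldsymbol\psi_\textrm{in},\boldsymbol\psi_\textrm{out}$ converge PL$(2)$, that $\boldsymbol\psi_\tau$ converges, and that $\hat{\boldsymbol\lambda}^{(t)},\hat{\boldsymbol\theta}^{(t)}$ converge to their SE limits through iteration $t$, one establishes the same at iteration $t+1$. The base case $t=0$ is immediate from Assumption \ref{aspt_gamp}(c) together with the convergence of the parameter initializations in Assumption \ref{aspt_pe_gamp}(c).

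The only genuinely new ingredient relative to the fixed-parameter GAMP analysis of \cite{GAMP11} is the data dependence of $\hat{\boldsymbol\lambda}^{(t)},\hat{\boldsymbol\theta}^{(t)}$, and this is precisely what Assumption \ref{aspt:adaptive_gamp}(c) is designed to handle. By the inductive hypothesis $\vr_\Omega^{(t)}$ converges PL$(2)$ to $\mathcal{R}_\Omega^{(t)}=\alpha^r\mathcal{X}+\mathcal{V}$ with $\mathcal{V}\sim\mathcal{N}(0,\xi^r)$ and $\tau_\Omega^r(t)\to\overline{\tau}_\Omega^r(t)$, so the weak pseudo-Lipschitz continuity of $f_{\boldsymbol\lambda}(t,\cdot,\cdot)$ forces $\hat{\boldsymbol\lambda}^{(t)}\to\overline{\boldsymbol\lambda}^{(t)}$, and similarly $\hat{\boldsymbol\theta}^{(t)}\to\overline{\boldsymbol\theta}^{(t)}$ from the convergence of $(\vq_\Phi^{(t)},\vy)$. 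Once the parameter estimates are pinned to deterministic limits, Assumption \ref{aspt:adaptive_gamp}(b) — that $g_\textrm{in},g_\textrm{in}'$ (and $g_\textrm{out},g_\textrm{out}'$) are jointly continuous in the parameters uniformly over their principal argument, and pseudo-Lipschitz in that argument with a Lipschitz constant that can be selected continuously in the parameters — lets one replace $g_\textrm{in}(t,r_{\Omega_n}^{(t)},\tau_\Omega^r(t),\hat{\boldsymbol\lambda}^{(t)})$ by $g_\textrm{in}(t,r_{\Omega_n}^{(t)},\overline{\tau}_\Omega^r(t),\overline{\boldsymbol\lambda}^{(t)})$ up to an error that is uniformly small in $r_{\Omega_n}^{(t)}$ and hence vanishes in the empirical average. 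This brings the $(t{+}1)$-st update back into the scope of the ordinary, non-adaptive GAMP state-evolution lemma, and applying that lemma advances the induction while simultaneously delivering the PL$(2)$ limit of $\boldsymbol\psi_\textrm{in}$ at iteration $t{+}1$ and the scalar limits in $\boldsymbol\psi_\tau$.

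I expect the main obstacle to be exactly this decoupling step: $\hat{\boldsymbol\lambda}^{(t)}$ is a \emph{global} functional of the entire vector $\vr_\Omega^{(t)}$, whereas the subsequent input-channel update acts componentwise on $\vr_\Omega^{(t)}$, so one must argue that replacing $\hat{\boldsymbol\lambda}^{(t)}$ by its deterministic limit does not perturb the limiting empirical distribution of the coordinates; this is where the ``uniformly over $r$'' clauses in Assumptions \ref{aspt:adaptive_gamp}(b)--(c) are essential, and where one must check that the continuously varying Lipschitz constants stay bounded near the limiting parameter values so that the pseudo-Lipschitz test functions of order $2$ remain integrable against the Gaussian state-evolution laws. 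A minor bookkeeping point is that the whole argument runs with $k=2$ fixed throughout, as dictated by Assumption \ref{aspt:adaptive_gamp}(a); finiteness of all the relevant moments then follows, as in \cite{AMP11,GAMP11}, from the pseudo-Lipschitz growth bounds and the Gaussianity of the state-evolution variables.
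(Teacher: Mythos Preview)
Your proposal is appropriate: the paper does not prove this theorem at all but simply imports it from \cite{AMP_CPE14}, so deferring to that reference is exactly what the paper does. Your sketch of the Bayati--Montanari conditioning argument, the induction on $t$, and the decoupling step via Assumption~\ref{aspt:adaptive_gamp}(b)--(c) accurately summarizes the proof in \cite{AMP_CPE14}, and in that sense you have gone further than the paper itself, which merely states the result without reproducing any details.
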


\section{Proof of Lemma \ref{lemma:map_pl_continuity}}
\label{app:proof_lemma}
Here we give the proof for $f_{\Omega_n\leftarrow\lambda_l}(\cdot)$, the proof for $f_{\Phi_m\leftarrow\theta_k}(\cdot)$ can be derived similarly.
\medskip
\begin{enumerate}
\item {\bfseries MAP Parameter Estimation:}
The proof of the continuity of $f_{\Omega_n\leftarrow\lambda_l}(\cdot), f_{\Phi_m\leftarrow\theta_k}(\cdot)$ is adapted from the work in \cite{AMP_CPE14}. In the $t$-th iteration, the following estimation indexed by singal dimensionality $N$ can be computed:
\begin{align}
\hat{\lambda}_{\Omega_n\leftarrow\lambda_l}^{(t+1)}[N] = f_{\Omega_n\leftarrow\lambda_l}\left(t, \vr_{\Omega}^{(t)}, \tau_{\Omega}^r(t), \lambda_l, \hat{\boldsymbol\lambda}_{\Omega_n}^{(t)}\backslash\hat{\lambda}_{\Omega_n\leftarrow\lambda_l}^{(t)}\right)\,.
\end{align} 
We then have a sequence $\{\hat{\lambda}_{\Omega_n\leftarrow\lambda_l}^{(t+1)}[N]\}$ indexed by $N=2,3,\cdots$. Since $\hat{\lambda}_{\Omega_n\leftarrow\lambda_l}^{(t+1)}[N]\in\mathcal{U}_\lambda$ and $\mathcal{U}_\lambda$ is compact, it suffices to show that any sequence $\{\hat{\lambda}_{\Omega_n\leftarrow\lambda_l}^{(t+1)}[N]\}$ converges to the same limiting point $\lambda_l^*$ shown in (\ref{eq:map_lambda_detail}). According to (\ref{eq:map_lambda_h}), we have:
\begin{align}
\begin{split}
&\sum_{j\neq n}h_{\Omega_n\leftarrow\lambda_l}^{\Omega_j}(t, \vr_{\Omega_n}^{(t)}, \tau_\Omega^r(t),\hat{\lambda}_{\Omega_n\leftarrow\lambda_l}^{(t+1)}[N], \hat{\boldsymbol\lambda}_{\Omega_n}^{(t)}\backslash\hat{\lambda}_{\Omega_n\leftarrow\lambda_l}^{(t)})\\
&\quad\geq \sum_{j\neq n}h_{\Omega_n\leftarrow\lambda_l}^{\Omega_j}(t, \vr_{\Omega_n}^{(t)}, \tau_\Omega^r(t),\lambda_l^*, \hat{\boldsymbol\lambda}_{\Omega_n}^{(t)}\backslash\hat{\lambda}_{\Omega_n\leftarrow\lambda_l}^{(t)})\,.
\end{split}
\end{align}
Suppose that $\hat{\lambda}_{\Omega_n\leftarrow\lambda_l}^{(t+1)}[N]$ converges to some point $\hat{\lambda}_{\Omega_n\leftarrow\lambda_l}^{(t+1)}$: $\hat{\lambda}_{\Omega_n\leftarrow\lambda_l}^{(t+1)}[N]\rightarrow\hat{\lambda}_{\Omega_n\leftarrow\lambda_l}^{(t+1)}$ as $N\rightarrow\infty$. With (\ref{eq:rv_convergence}) and the continuity condition of the open neighborhood $\rho(\widetilde{\tau}_\Omega^r, \widetilde{\boldsymbol\lambda})$ in Assumption \ref{aspt_pe_gamp_map}(c), we have:
\begin{align}
\label{eq:compare_delta}
\begin{split}
&\sum_{j\neq n}h_{\Omega_n\leftarrow\lambda_l}^{\Omega_j}(t, \vr_{\Omega_n}^{(t)}, \overline{\tau}_\Omega^r(t),\hat{\lambda}_{\Omega_n\leftarrow\lambda_l}^{(t+1)}[N], \overline{\boldsymbol\lambda}_{\Omega_n}^{(t)}\backslash\overline{\lambda}_{\Omega_n\leftarrow\lambda_l}^{(t)})\\
&\quad\geq \sum_{j\neq n}h_{\Omega_n\leftarrow\lambda_l}^{\Omega_j}(t, \vr_{\Omega_n}^{(t)}, \overline{\tau}_\Omega^r(t),\lambda_l^*, \overline{\boldsymbol\lambda}_{\Omega_n}^{(t)}\backslash\overline{\lambda}_{\Omega_n\leftarrow\lambda_l}^{(t)})\,.
\end{split}
\end{align}
Since $h_{\Omega_n\leftarrow\lambda_l}^{\Omega_j}(\cdot)$ is pseudo-Lipschitz continuous in $r_{\Omega_n}^{(t)}$, the left-hand side of (\ref{eq:compare_delta}) can be rewritten as follows as $N\rightarrow\infty$:
\begin{align}
\begin{split}
&\frac{1}{N-1}\sum_{j\neq n}h_{\Omega_n\leftarrow\lambda_l}^{\Omega_j}(t, \vr_{\Omega_n}^{(t)}, \overline{\tau}_\Omega^r(t),\hat{\lambda}_{\Omega_n\leftarrow\lambda_l}^{(t+1)}[N], \overline{\boldsymbol\lambda}_{\Omega_n}^{(t)}\backslash\overline{\lambda}_{\Omega_n\leftarrow\lambda_l}^{(t)})\\
&=\mathbb{E}\left[\sum_{j\neq n}h_{\Omega_n\leftarrow\lambda_l}^{\Omega_j}(t,\mathcal{R}_\Omega^{(t)}, \overline{\tau}_\Omega^r(t),\hat{\lambda}_{\Omega_n\leftarrow\lambda_l}^{(t+1)}[N], \overline{\boldsymbol\lambda}_{\Omega_n}^{(t)}\backslash\overline{\lambda}_{\Omega_n\leftarrow\lambda_l}^{(t)})\right]\,.
\end{split}
\end{align}
The right-hand side of (\ref{eq:compare_delta}) can be rewritten similarly. (\ref{eq:compare_delta}) then becomes:
\begin{align}
\begin{split}
&\mathbb{E}\left[h_{\Omega_n\leftarrow\lambda_l}^{\Omega_j}(t,\mathcal{R}_\Omega^{(t)}, \overline{\tau}_\Omega^r(t),\hat{\lambda}_{\Omega_n\leftarrow\lambda_l}^{(t+1)}[N], \overline{\boldsymbol\lambda}_{\Omega_n}^{(t)}\backslash\overline{\lambda}_{\Omega_n\leftarrow\lambda_l}^{(t)})\right]\\
&\geq\mathbb{E}\left[h_{\Omega_n\leftarrow\lambda_l}^{\Omega_j}(t,\mathcal{R}_\Omega^{(t)}, \overline{\tau}_\Omega^r(t),\lambda_l^*, \overline{\boldsymbol\lambda}_{\Omega_n}^{(t)}\backslash\overline{\lambda}_{\Omega_n\leftarrow\lambda_l}^{(t)})\right]\,.
\end{split}
\end{align}
Assumption \ref{aspt_pe_gamp_map}(b) states that $\lambda_l^*$ is the unique maxima of the right-hand side, we then have:
\begin{align}
\lim_{N\rightarrow\infty}\hat{\lambda}_{\Omega_n\leftarrow\lambda_l}^{(t+1)}[N]=\lambda_l^*\,,
\end{align}
which proves (\ref{eq:eca_conclusion}). 
\medskip

\item {\bfseries MMSE Parameter Estimation:}
Using the compactness of the sets $\mathcal{U}_\lambda$ in Assumption \ref{aspt_pe_gamp_mmse}(a) and the continuity condition of the open neighborhood $\rho(\widetilde{\tau}_\Omega^r, \widetilde{\boldsymbol\lambda})$ in Assumption \ref{aspt_pe_gamp_mmse}(b), we have the following:
\begin{align}
\label{eq:mmse_pe_lemma_1_1}
\begin{split}
&\lim_{N\rightarrow\infty}\frac{1}{N-1}\sum_{j\neq n}h_{\Omega_n\leftarrow\lambda_l}^{\Omega_j}(t, \vr_{\Omega_n}^{(t)}, \tau_\Omega^r(t), \lambda_l, \hat{\boldsymbol\lambda}_{\Omega_n}^{(t)}\backslash\hat{\lambda}_{\Omega_n\leftarrow\lambda_l}^{(t)})\\
&=\frac{1}{N-1}\sum_{j\neq n}h_{\Omega_n\leftarrow\lambda_l}^{\Omega_j}(t, \vr_{\Omega_n}^{(t)}, \overline{\tau}_\Omega^r(t),\lambda_l, \overline{\boldsymbol\lambda}_{\Omega_n}^{(t)}\backslash\overline{\lambda}_{\Omega_n\leftarrow\lambda_l}^{(t)})\,.
\end{split}
\end{align}
Since $h_{\Omega_n\leftarrow\lambda_l}^{\Omega_j}$ is pseudo-Lipschitz continuous in $r_{\Omega_n}^{(t)}$, we also have:
\begin{align}
\label{eq:mmse_pe_lemma_1_2}
\begin{split}
&\lim_{N\rightarrow\infty}\frac{1}{N-1}\sum_{j\neq n}h_{\Omega_n\leftarrow\lambda_l}^{\Omega_j}(t, \vr_{\Omega_n}^{(t)}, \overline{\tau}_\Omega^r(t),\lambda_l, \overline{\boldsymbol\lambda}_{\Omega_n}^{(t)}\backslash\overline{\lambda}_{\Omega_n\leftarrow\lambda_l}^{(t)})\\
&=\mathbb{E}\left[h_{\Omega_n\leftarrow\lambda_l}^{\Omega_j}(t,\mathcal{R}_\Omega^{(t)}, \overline{\tau}_\Omega^r(t),\lambda_l, \overline{\boldsymbol\lambda}_{\Omega_n}^{(t)}\backslash\overline{\lambda}_{\Omega_n\leftarrow\lambda_l}^{(t)})\right]\,.
\end{split}
\end{align} 
Combining (\ref{eq:mmse_pe_lemma_1_1}) and (\ref{eq:mmse_pe_lemma_1_2}), we then have:
\begin{align}
\begin{split}
&\lim_{N\rightarrow\infty}\frac{1}{N-1}\sum_{j\neq n}h_{\Omega_n\leftarrow\lambda_l}^{\Omega_j}(t, \vr_{\Omega_n}^{(t)}, \tau_\Omega^r(t),\lambda_l, \hat{\boldsymbol\lambda}_{\Omega_n}^{(t)}\backslash\hat{\lambda}_{\Omega_n\leftarrow\lambda_l}^{(t)})\\
&=\mathbb{E}\left[h_{\Omega_n\leftarrow\lambda_l}^{\Omega_j}(t,\mathcal{R}_\Omega^{(t)}, \overline{\tau}_\Omega^r(t),\lambda_l, \overline{\boldsymbol\lambda}_{\Omega_n}^{(t)}\backslash\overline{\lambda}_{\Omega_n\leftarrow\lambda_l}^{(t)})\right]\,.
\end{split}
\end{align}
Using the continuity property of the exponential function $\exp(\cdot)$, as $N\rightarrow\infty$ we can get:
\begin{align}
\frac{\exp(\frac{1}{N-1}\sum_{j\neq n}h_{\Omega_n\leftarrow\lambda_l}^{\Omega_j}(\cdot))}{\int_{\lambda_l}\exp(\frac{1}{N-1}\sum_{j\neq n}h_{\Omega_n\leftarrow\lambda_l}^{\Omega_j}(\cdot))}=\frac{\exp(\mathbb{E}\left[h_{\Omega_n\leftarrow\lambda_l}^{\Omega_j}(\cdot)\right])}{\int_{\lambda_l}\exp(\mathbb{E}\left[h_{\Omega_n\leftarrow\lambda_l}^{\Omega_j}(\cdot)\right])}\,.
\end{align}
Since the set $\mathcal{U}_\lambda$ is compact and the mean of a probability distribution is unique, we have:
\begin{align}
\begin{split}
&\lim_{N\rightarrow\infty}\int_{\lambda_l}\lambda_l\frac{\exp(\frac{1}{N-1}\sum_{j\neq n}h_{\Omega_n\leftarrow\lambda_l}^{\Omega_j}(\cdot))}{\int_{\lambda_l}\exp(\frac{1}{N-1}\sum_{j\neq n}h_{\Omega_n\leftarrow\lambda_l}^{\Omega_j}(\cdot))}=\\
&\quad\int_{\lambda_l}\lambda_l\frac{\exp(\mathbb{E}\left[h_{\Omega_n\leftarrow\lambda_l}^{\Omega_j}(\cdot)\right])}{\int_{\lambda_l}\exp(\mathbb{E}\left[h_{\Omega_n\leftarrow\lambda_l}^{\Omega_j}(\cdot)\right])}\,,
\end{split}
\end{align}
which proves (\ref{eq:eca_conclusion}). 
\end{enumerate}

\section{Proof of Corollary \ref{crl:pe_gamp_convergence}}
\label{app:proof_corollary}
We only need to show the empirical convergences of $\hat{\boldsymbol\lambda}_{\Omega_n}^{(t+1)}, \hat{\boldsymbol\theta}_{\Phi_m}^{(t+1)}$. From Assumption \ref{aspt_pe_gamp_map} we can get Lemma \ref{lemma:map_pl_continuity}, which corresponds to Assumption \ref{aspt:adaptive_gamp}(c) in Appendix \ref{app:se_adaptive_gamp}. Using Theorem \ref{thm:adaptive_gamp}, we have:
\begin{subequations}
\begin{align}
\label{eq:some_convergence}
&\lim_{N\rightarrow\infty}r_{\Omega_n}^{(t)}\stackrel{\normalfont\textrm{PL$(k)$}}{=}\mathcal{R}_\Omega^{(t)}\\
&\lim_{N\rightarrow\infty}(q_{\Phi_m}^{(t)}, y_m)\stackrel{\normalfont\textrm{PL$(k)$}}{=}(\mathcal{Q}_\Phi^{(t)}, \mathcal{Y})\\
&\lim_{N\rightarrow\infty}\boldsymbol\psi_\tau=\overline{\boldsymbol\psi}_\tau\,.
\end{align}
\end{subequations} 
The empirical convergences of the parameters can be proved using induction. For $t=0$, the convergences of $\hat{\boldsymbol\lambda}_{\Omega_n}^{(0)}, \hat{\boldsymbol\theta}_{\Phi_m}^{(0)}$ hold according to (\ref{eq:pe_gamp_parameter_init}) in Assumption \ref{aspt_pe_gamp}(c). With (\ref{eq:pe_gamp_parameter_init}, \ref{eq:some_convergence}), we can use Lemma \ref{lemma:map_pl_continuity} to obtain:
\begin{align}
\lim_{N\rightarrow\infty}\hat{\boldsymbol\lambda}_{\Omega_n}^{(1)}=\overline{\boldsymbol\lambda}_{\Omega_n}^{(1)}, \quad\lim_{N\rightarrow\infty}\hat{\boldsymbol\theta}_{\Phi_m}^{(1)}=\overline{\boldsymbol\theta}_{\Phi_m}^{(1)}\,.
\end{align}
The convergences of the rest scalars can be obtained directly using Theorem \ref{thm:adaptive_gamp}. Hence the following holds for any $t$.
\begin{align}
\lim_{N\rightarrow\infty}\hat{\boldsymbol\lambda}_{\Omega_n}^{(t+1)}=\overline{\boldsymbol\lambda}_{\Omega_n}^{(t+1)}, \quad\lim_{N\rightarrow\infty}\hat{\boldsymbol\theta}_{\Phi_m}^{(t+1)}=\overline{\boldsymbol\theta}_{\Phi_m}^{(t+1)}\,.
\end{align}

\section{MAP Parameter Estimation}
\label{app:map_pe}
Here we choose four popular channels used in the sparse signal recovery models as examples to demonstrate how to perform MAP parameter estimation with the proposed PE-GAMP. The line search method is used to find the maximizing parameters, it requires computing the derivatives of the following $h_{\Omega_n\leftarrow\lambda_l}^{(t+1)}(\cdot), h_{\Phi_m\leftarrow\theta_k}^{(t+1)}(\cdot)$ in (\ref{eq:map_pe_gamp}) with respect to the parameters $\lambda_l, \theta_k$. 
\begin{subequations}
\begin{align}
&h_{\Omega_n\leftarrow\lambda_l}^{(t+1)}(\cdot)=\sum_{j\neq n}\Delta^{(t+1)}_{\Omega_j\rightarrow\lambda_l}+\log p(\lambda_l)\\
&h_{\Phi_m\leftarrow\theta_k}^{(t+1)}(\cdot)=\sum_{i\neq m}\Delta^{(t+1)}_{\Phi_i\rightarrow\theta_k} + \log p(\theta_k)
\end{align}
\end{subequations}
The derivatives of $\log p(\lambda_l), \log p(\theta_k))$ depends on the chosen priors and are easy to compute. Here we give the derivatives of $\Delta_{\Omega_j\rightarrow\lambda_l}^{(t+1)}, \Delta_{\Phi_i\rightarrow\theta_k}^{(t+1)}$ with respect to $\lambda_l, \theta_k$ in details. 

\subsection{Sum-product Message Passing}
We compute the derivatives for different channels respectively as follows:
\smallskip
\begin{enumerate}
\item {\bfseries Bernoulli-Gaussian mixture Input Channel:}
BGm distribution is given in (\ref{eq:bgm_distribution}). We then have:
\begin{align}
\label{eq:bg_omega_lambda_app}
\begin{split}
&\log\int_{-\infty}^{\infty}\Omega_j(x_j,\boldsymbol\lambda)\cdot\exp\left(\Delta_{\Omega_j\leftarrow x_j}^{(t+1)}\right)\,dx_j\\
&=\log\int_{x_j}p(x_j|\boldsymbol\lambda)\exp\left(-\frac{1}{2\tau_{\Omega_j}^r(t)}\left(x_j-r_{\Omega_j}^{(t)}\right)\right)\\
&=\log\left[(1-\lambda_1)\cdot\kappa_1+\lambda_1\sum_{c=1}^C\lambda_{c+1}\cdot\kappa_2(\lambda_{c+2},\lambda_{c+3})\right]\\
&=\log\left[(1-\lambda_1)\cdot\kappa_1+\lambda_1\sum_{c=1}^C\lambda_{c+1}\cdot\kappa_2(c)\right]\,,
\end{split}
\end{align}
where $\kappa_1$ doesn't depend on $\boldsymbol\lambda$; for the $c$-th Gaussian mixture, $\kappa_2(c)=\kappa_2(\lambda_{c+2},\lambda_{c+3})$ depends on $\lambda_{c+2},\lambda_{c+3}$.
\begin{subequations}
\begin{align}
\label{eq:kappa_1_bgm}
&\kappa_1=\exp\left(-\frac{\left(r_{\Omega_j}^{(t)}\right)^2}{2\tau_{\Omega_j}^r(t)}\right)\\
\begin{split}
&\kappa_2(c)=\sqrt{\frac{\tau_{\Omega_j}^r(t)}{\lambda_{c+3}+\tau_{\Omega_j}^r(t)}}\exp\left(-\frac{1}{2}\frac{\left(\lambda_{c+2}-r_{\Omega_j}^{(t)}\right)^2}{\left(\lambda_{c+3}+\tau_{\Omega_j}^r(t)\right)}\right)\,.
\end{split}
\end{align}
\end{subequations}
(\ref{eq:bg_omega_lambda_app}) is essentially (\ref{eq:delta_omega_lambda_simplified}). Let $\kappa_3(\boldsymbol\lambda)$ be as follows:
\begin{align}
\kappa_3(\boldsymbol\lambda)=\frac{\lambda_1}{(1-\lambda_1)\cdot\kappa_1+\lambda_1\sum_{c=1}^C\lambda_{c+1}\cdot\kappa_2(c)}\,.
\end{align}
Let $\boldsymbol\lambda\backslash\lambda_l$ denote the parameter sequence generated by removing $\lambda_l$ from $\boldsymbol\lambda$. Taking derivatives of (\ref{eq:delta_omega_lambda_simplified}) with respect to $\lambda_1, \lambda_{c+2}, \lambda_{c+3}$, we have:
\begin{subequations}
\begin{align}
\frac{\partial \Delta_{\Omega_j\rightarrow\lambda_1}^{(t+1)}}{\partial \lambda_1}&=\frac{-\kappa_1+\sum_{c=1}^C\hat{\lambda}_{c+1}^{(t)}\cdot\kappa_2\left(\hat{\lambda}_{c+2}^{(t)}, \hat{\lambda}_{c+3}^{(t)}\right)}{(1-\lambda_1)\cdot\kappa_1+\lambda_1\sum_{c=1}^C\hat{\lambda}_{c+1}^{(t)}\cdot\kappa_2\left(\hat{\lambda}_{c+2}^{(t)}, \hat{\lambda}_{c+3}^{(t)}\right)}\\
\begin{split}
\frac{\partial \Delta_{\Omega_j\rightarrow\lambda_{c+2}}^{(t+1)}}{\partial \lambda_{c+2}}&=\kappa_3\left(\hat{\boldsymbol\lambda}^{(t)}\backslash\hat{\lambda}_{c+2}^{(t)},\lambda_{c+2}\right)\\
&\quad\quad\times\frac{-\kappa_2\left(\lambda_{c+2},\hat{\lambda}_{c+3}^{(t)}\right)\left(\lambda_{c+2}-r_{\Omega_j}^{(t)}\right)}{\hat{\lambda}_{c+2}^{(t)}+\tau_{\Omega_j}^r(t)}
\end{split}\\
\begin{split}
\frac{\partial \Delta_{\Omega_j\rightarrow\lambda_{c+3}}^{(t+1)}}{\partial \lambda_{c+3}}&=-\kappa_3\left(\hat{\boldsymbol\lambda}^{(t)}\backslash\hat{\lambda}_{c+3}^{(t)},\lambda_{c+3}\right)\frac{\kappa_2\left(\hat{\lambda}_{c+2}^{(t)},\lambda_{c+3}\right)}{2\left(\lambda_{c+3}+\tau_{\Omega_j}^r(t)\right)}\\
&\quad\quad\times\left(1-\frac{\left(\hat{\lambda}_{c+2}^{(t)}-r_{\Omega_j}^{(t)}\right)^2}{\lambda_{c+3}+\tau_{\Omega_j}^r(t)}\right)\,,
\end{split}
\end{align}
\end{subequations}
where $\hat{\boldsymbol\lambda}^{(t)}$ are the estimated parameters in the previous $t$-th iteration. 

The updates for the weights $\lambda_{c+1}$ are more complicated, they need to satisfy the nonnegative and sum-to-one constrains. Here we can rewrite the weight $\lambda_{c+1}$ as follows :
\begin{align}
\label{eq:mixture_weights}
\lambda_{c+1} = \frac{\exp(\omega_c)}{\sum_{k=1}^C\exp(\omega_k)}\,,
\end{align}
where $\omega_c\in\mathbb{R}$. We then can remove the constrains on $\lambda_{c+1}$ and maximize $\Delta_{\Omega_j\rightarrow\lambda_{c+1}}^{(t+1)}$ with respect to $\omega_c$ instead. The derivative is then:
\begin{align}
\begin{split}
&\frac{\partial \Delta_{\Omega_j\rightarrow\lambda_{c+1}}^{(t+1)}}{\partial \omega_c}=\frac{\partial \Delta_{\Omega_j\rightarrow\lambda_{c+1}}^{(t+1)}}{\partial \lambda_{c+1}}\cdot\frac{\partial \lambda_{c+1}}{\partial \omega_c}\\
&=\frac{\hat{\lambda}_1^{(t)}}{(1-\hat{\lambda}_1^{(t)})\cdot\kappa_1+\hat{\lambda}_1^{(t)}\sum_{k=1}^C\lambda_{k+1}\cdot\kappa_2\left(\hat{\lambda}_{k+2}^{(t)}, \hat{\lambda}_{k+3}^{(t)}\right)}\\
&\quad\quad\times\sum_{k=1}^C\kappa_2\left(\hat{\lambda}_{k+2}^{(t)}, \hat{\lambda}_{k+3}^{(t)}\right)\left(\lambda_{k+1}\cdot\boldsymbol1_{(k=c)}-\lambda_{k+1}\lambda_{c+1}\right)\,,
\end{split}
\end{align}
where $\boldsymbol1_{(k=c)}=1$ if $k=c$ and $\boldsymbol1_{(k=c)}=0$ if $k\neq c$.
\smallskip
\item {\bfseries Bernoulli-Exponential mixture Input Channel:}
BEm distribution is given in (\ref{eq:bem_distribution}), we then have:
\begin{align}
\begin{split}
&\log\int_{0}^{\infty}\Omega_j(x_j,\boldsymbol\lambda)\cdot\exp\left(\Delta_{\Omega_j\leftarrow x_j}^{(t+1)}\right)\,dx_j\\
&=\log\int_{x_j}p(x_j|\boldsymbol\lambda)\exp\left(-\frac{1}{2\tau_{\Omega_j}^r(t)}\left(x_j-r_{\Omega_j}^{(t)}\right)\right)\\
&=\log\left[(1-\lambda_1)\cdot\kappa_1+\lambda_1\sum_{c=1}^C\lambda_{c+1}\cdot\kappa_2(\lambda_{c+2})\right]\\
&=\log\left[(1-\lambda_1)\cdot\kappa_1+\lambda_1\sum_{c=1}^C\lambda_{c+1}\cdot\kappa_2(c)\right]\,,
\end{split}
\end{align}
where $\kappa_1$ is the same as (\ref{eq:kappa_1_bgm}); for the $c$-th Exponential mixture, $\kappa_2(c)$ depends on $\lambda_{c+2}$.
\begin{align}
\kappa_2(c)=\lambda_{c+2}\cdot\sqrt{\frac{\pi\tau_{\Omega_j}^r(t)}{2}}\cdot\erfcx\left(-\frac{r_{\Omega_j}^{(t)}-\lambda_{c+2}\cdot\tau_{\Omega_j}^r(t)}{\sqrt{2\tau_{\Omega_j}^r(t)}}\right)\,,
\end{align}
where $\erfcx(\cdot)$ is the scaled complementary error function. Taking the derivative w.r.t. $\lambda_1, \lambda_{c+2}$, we have:
\begin{subequations}
\begin{align}
&\frac{\partial \Delta_{\Omega_j\rightarrow\lambda_1}^{(t+1)}}{\partial \lambda_1}=\frac{-\kappa_1+\sum_{c=1}^C\hat{\lambda}_{c+1}^{(t)}\cdot\kappa_2\left(\hat{\lambda}_{c+2}^{(t)}\right)}{(1-\lambda_1)\cdot\kappa_1+\lambda_1\sum_{c=1}^C\hat{\lambda}_{c+1}^{(t)}\cdot\kappa_2\left(\hat{\lambda}_{c+2}^{(t)}\right)}\\
\begin{split}
&\frac{\partial \Delta_{\Omega_j\rightarrow\lambda_{c+2}}^{(t+1)}}{\partial \lambda_{c+2}}=\kappa_3\left(\hat{\boldsymbol\lambda}^{(t)}\backslash\hat{\lambda}_{c+2}^{(t)},\lambda_{c+2}\right)\\
&\quad\quad\times\hat{\lambda}_{c+1}^{(t)}\left(\lambda_{c+2}\tau_{\Omega_j}^r(t)+\left(r_{\Omega_j}^{(t)}-\lambda_{c+2}\tau_{\Omega_j}^r(t)\right)\kappa_2(c)\right)\,.
\end{split}
\end{align}
\end{subequations}
We write the mixture weights $\lambda_{c+1}$ in the same form as (\ref{eq:mixture_weights}), and take the derivative w.r.t. $\omega_c$:
\begin{align}
\begin{split}
&\frac{\partial \Delta_{\Omega_j\rightarrow\lambda_{c+1}}^{(t+1)}}{\partial \omega_c}=\frac{\partial \Delta_{\Omega_j\rightarrow\lambda_{c+1}}^{(t+1)}}{\partial \lambda_{c+1}}\cdot\frac{\partial \lambda_{c+1}}{\partial \omega_c}\\
&=\frac{\hat{\lambda}_1^{(t)}}{(1-\hat{\lambda}_1^{(t)})\cdot\kappa_1+\hat{\lambda}_1^{(t)}\sum_{k=1}^C\lambda_{k+1}\cdot\kappa_2\left(\hat{\lambda}_{k+2}^{(t)}\right)}\\
&\quad\quad\times\sum_{k=1}^C\kappa_2\left(\hat{\lambda}_{k+2}^{(t)}\right)\left(\lambda_{k+1}\cdot\boldsymbol1_{(k=c)}-\lambda_{k+1}\lambda_{c+1}\right)\,.
\end{split}
\end{align}
\smallskip
\item {\bfseries Laplace Input Channel:}
Laplace distribution is given in (\ref{eq:laplace_distribution}). Similarly, we have:
\begin{align}
\begin{split}
&\log\int_{-\infty}^{\infty}\Omega_j(x_j,\boldsymbol\lambda)\cdot\exp\left(\Delta_{\Omega_j\leftarrow x_j}^{(t+1)}\right)\,dx_j\\
&=\log\int_{x_j}p(x_j|\vy)\cdot\exp\left(-\frac{\left(x_j-r_{\Omega_j}^{(t)}\right)^2}{2\tau_{\Omega_j}^r(t)}\right)\\
&=\log\left[\lambda_1\kappa_1\left(\kappa_2(\lambda_1)+\kappa_3(\lambda_1)\right)\right]-\log2\,,
\end{split}
\end{align}
where $\kappa_1$ is the same as (\ref{eq:kappa_1_bgm}), $\kappa_2(\lambda_1), \kappa_3(\lambda_1)$ depend on $\lambda_1$. They are as follows:
\begin{subequations}
\begin{align}
\kappa_2(\lambda_1)&=\lambda_1\cdot\sqrt{\frac{\pi\tau_{\Omega_j}^r(t)}{2}}\cdot\erfcx\left(-\frac{r_{\Omega_j}^{(t)}-\lambda_1\cdot\tau_{\Omega_j}^r(t)}{\sqrt{2\tau_{\Omega_j}^r(t)}}\right)\\
\kappa_3(\lambda_1)&=\lambda_1\cdot\sqrt{\frac{\pi\tau_{\Omega_j}^r(t)}{2}}\cdot\erfcx\left(\frac{r_{\Omega_j}^{(t)}+\lambda_1\cdot\tau_{\Omega_j}^r(t)}{\sqrt{2\tau_{\Omega_j}^r(t)}}\right)\,.
\end{align}
\end{subequations}
Taking derivative of (\ref{eq:delta_omega_lambda_simplified}) with respect to $\lambda_1$, we have:
\begin{align}
\begin{split}
\frac{\partial \Delta_{\Omega_j\rightarrow\lambda_1}^{(t+1)}}{\partial \lambda_1}=\frac{1}{\lambda_1}+\frac{\kappa_4(\lambda_1)+\kappa_5(\lambda_1)}{\kappa_2(\lambda_1)+\kappa_3(\lambda_1)}\,,
\end{split}
\end{align}
where $\kappa_4(\lambda_1),\kappa_5(\lambda_1)$ depend on $\lambda_1$. They are as follows:
\begin{subequations}
\begin{align}
\kappa_4(\lambda_1)&=\lambda_1\tau_{\Omega_j}^r(t)+\left(r_{\Omega_j}^{(t)}-\lambda_1\tau_{\Omega_j}^r(t)\right)\kappa_2(\lambda_1)\\
\kappa_5(\lambda_1)&=-\lambda_1\tau_{\Omega_j}^r(t)+\left(r_{\Omega_j}^{(t)}+\lambda_1\tau_{\Omega_j}^r(t)\right)\kappa_3(\lambda_1)\,.
\end{align}
\end{subequations}
\smallskip
\item {\bfseries Additive White Gaussian Noise Output Channel:}
The white Gaussian distribution is shown in \ref{eq:adgn_distribution}. We have the following:
\begin{align}
\begin{split}
&\log\int_{-\infty}^{\infty}\Phi(y_i,z_i,\boldsymbol\theta)\exp\left(\frac{-1}{2\tau_{\Phi_i}^q(t)}\left(z_i-q_{\Phi_i}^{(t)}\right)^2\right)\,dz_m\\
&=\frac{1}{2}\log\tau_{\Phi_i}^q(t)-\frac{1}{2}\log\left(\theta_1+\tau_{\Phi_i}^q(t)\right)\\
&\quad\quad-\frac{1}{2}\frac{1}{\left(\theta_1+\tau_{\Phi_i}^q(t)\right)}\left(y_i-q_{\Phi_i}^{(t)}\right)^2\,.
\end{split}
\end{align}
Taking derivative of (\ref{eq:delta_phi_theta_simplified}) with respect to $\theta_1$, we have:
\begin{align}
\begin{split}
\frac{\partial \Delta_{\Phi_i\rightarrow\theta_1}^{(t+1)}}{\partial \theta_1}&=\frac{\left(y_i-q_{\Phi_i}^{(t)}\right)^2}{2\left(\theta_1+\tau_{\Phi_i}^q(t)\right)^2}-\frac{1}{2\left(\theta_1+\tau_{\Phi_i}^q(t)\right)}\,.
\end{split}
\end{align}
\end{enumerate}

\subsection{Max-sum Message Passing}
The parameters of \emph{max-sum} message passing can also be estimated using Algorithm \ref{alg:line_search}. We analyze the channels in this case as follows:
\smallskip
\begin{enumerate}
\item {\bfseries Bernoulli-Gaussian mixture Input Channel:} BGm input channel is not really suited for the \emph{max-sum} message passing. If we compute (\ref{eq:factor_to_variable_x_ms_app}), the maximizing $x_j$ would be $0$, which makes both the parameter estimation and signal recovery impossible.
\smallskip
\item {\bfseries Bernoulli-Exponential mixture Input Channel:} BEx input channel is also not suited for the \emph{max-sum} message passing for the same reason as the BGm input channel. 
\item {\bfseries Laplace Input Channel:} (\ref{eq:factor_to_variable_x_ms_app}) can be written as follows:
\begin{align}
\Delta_{\Omega_j\rightarrow\lambda_1}^{(t+1)}=\max_{x_j}\left[\log\lambda_1-\lambda_1\left|x_j\right|-\frac{\left(x_j-r_{\Omega_j}^{(t)}\right)^2}{2\tau_{\Omega_j}^r(t)}\right]\,.
\end{align}
The maximizing $x_j$ is given by the soft-thresholding method:
\begin{align}
\tilde{x}_n^{(t+1)} = \left(\left|r_{\Omega_j}^{(t)}\right|-\lambda_1\tau_{\Omega_j}^r(t)\right)_+\cdot\textrm{sign}\left(r_{\Omega_j}^{(t)}\right)\,.
\end{align}
If $\left|r_{\Omega_j}^{(t)}\right|>\lambda_1\tau_{\Omega_j}^r(t)$, we have:
\begin{align}
\label{eq:delta_omega_lambda_laplace1_app}
\Delta_{\Omega_j\rightarrow\lambda_1}^{(t+1)}=\log\lambda_1-\lambda_1\left|r_{\Omega_j}^{(t)}\right|+\frac{1}{2}\lambda_1^2\tau_{\Omega_j}^r(t)\,.
\end{align}
If $\left|r_{\Omega_j}^{(t)}\right|\leq\lambda_1\tau_{\Omega_j}^r(t)$, we have:
\begin{align}
\label{eq:delta_omega_lambda_laplace2_app}
\Delta_{\Omega_j\rightarrow\lambda_1}^{(t+1)}=\log\lambda_1-\frac{1}{2\tau_{\Omega_j}^r(t)}\left(r_{\Omega_j}^{(t)}\right)^2\,.
\end{align}
We can see from (\ref{eq:delta_omega_lambda_laplace1_app}, \ref{eq:delta_omega_lambda_laplace2_app}) that the $\lambda_1$ that maximizes (\ref{eq:ms_pe_map_app_lambda}) is always $\infty$, which makes the estimation of $\lambda_1$ impossible.
\smallskip
\item {\bfseries Additive White Gaussian Noise Output Channel:} (\ref{eq:factor_to_variable_theta_ms_app}) can be written as follows:
\begin{align}
\begin{split}
\Delta_{\Phi_i\rightarrow\theta_1}^{(t+1)}&=\max_{z_i}\left[-\frac{1}{2}\log\theta_1-\frac{1}{2\theta_1}\left(y_i-z_i\right)^2\right.\\
&\quad\quad\left.-\frac{1}{2\tau_{\Phi_i}^q(t)}\left(z_i-q_{\Phi_i}^{(t)}\right)^2\right]\,.
\end{split}
\end{align}
The maximizing $z_i$ is:
\begin{align}
\tilde{z}_m=\frac{y_i\tau_{\Phi_i}^q(t)+q_{\Phi_i}^{(t)}\theta_1}{\theta_1+\tau_{\Phi_i}^q(t)}\,.
\end{align}
We then have:
\begin{align}
\label{eq:delta_phi_theta_awgn_app}
\begin{split}
&\Delta_{\Phi_i\rightarrow\theta_1}^{(t+1)}\\
&=-\frac{1}{2}\log\theta_1+\frac{\left(q_{\Phi_i}^{(t)}\right)^2\theta_1+2q_{\Phi_i}^{(t)}y_i\tau_{\Phi_i}^q(t)-y_i^2\tau_{\Phi_i}^q(t)}{2\tau_{\Phi_i}^q(t)\left(\theta_1+\tau_{\Phi_i}^q(t)\right)}\,.
\end{split}
\end{align}
We can see from (\ref{eq:delta_phi_theta_awgn_app}) that the $\theta_1$ that maximizes (\ref{eq:ms_pe_map_app_theta}) is always $0$, which makes the estimation of $\theta_1$ impossible.
\end{enumerate}

\end{appendices}


\ifCLASSOPTIONcaptionsoff
  \newpage
\fi



%
\bibliographystyle{IEEEbib}
\bibliography{refs}

\end{document}